\def\BibTeX{{\rm B\kern-.05em{\sc i\kern-.025em b}\kern-.08em
    T\kern-.1667em\lower.7ex\hbox{E}\kern-.125emX}}
\newcommand{\mypara}[1]{{\smallskip \noindent \bf #1}\hspace{0.1in}}
\newcommand{\ssf}[1]{\textrm{$\sf{#1}$}{}}
\newtheorem{theorem}{Theorem}
\newtheorem{lemma}{Lemma}
\newtheorem{assumption}{Assumption}
\newtheorem{remark}{Remark}
\DeclareMathOperator*{\argmin}{arg\,min}
\newcommand{\vect}[1]{\mathbf{#1}}
\newcommand{\avgvect}[1]{\mathbf{\overline{#1}}}
\newcommand{\expt}{\mathbb{E}}
\newcommand{\norm}[1]{\left \| #1 \right \|}
\newcommand{\squab}[1]{\left [ #1 \right ]}
\newcommand{\dotp}[2]{\left \langle #1, #2 \right \rangle}
\newcommand{\congr}[1]{{\color{magenta}#1}}
\newcommand{\congr}[1]{#1}
\newcommand{\congc}[1]{{\color{red}(Cong: #1)}}
\newcommand{\congc}[1]{}
\newcommand{\zixiangb}[1]{{\color{blue}#1}}
\newcommand{\zixiangb}[1]{#1}
\newcommand{\zixiang}[1]{{\color{blue}(Zixiang: #1)}}
\newcommand{\zixiang}[1]{}
\newcommand{\zixiangTCCN}[1]{{\color{black}#1}}
\newcommand{\zixiangTCCN}[1]{}
\begin{document}

% \title{Federated Learning in the Presence of Communication Errors: Design Principles and Performance Analysis\\
% \title{Federated Learning in the Presence of Communication Errors}
% \title{Federated Learning over Noisy Uplink and Downlink Communication Channels}
% \title{Federated Learning over Noisy Channels}
\title{Federated Learning over Noisy Channels: Convergence Analysis and Design Examples}
% \title{Flying under the Radar: Federated Learning over Noisy Channels}

\author{Xizixiang~Wei \qquad Cong~Shen%
\thanks{A preliminary version of this work was presented at the 2021 IEEE International Conference on Communications \cite{Wei2021icc}.}
\thanks{The authors are with the Charles L. Brown Department of Electrical and Computer Engineering, University of Virginia, Charlottesville, VA 22904, USA. E-mail: \texttt{\{xw8cw,cong\}@virginia.edu}.}
\thanks{The work was partially supported by the US National Science Foundation (NSF) under Grant ECCS-2033671.}
}

\maketitle

\begin{abstract}
Does Federated Learning (FL) work when \emph{both} uplink and downlink communications have errors? How much communication noise can FL handle and what is its impact to the learning performance? This work is devoted to answering these practically important questions by explicitly incorporating both uplink and downlink noisy channels in the FL pipeline. We present several novel convergence analyses of FL over simultaneous uplink and downlink noisy communication channels, which encompass full and partial clients participation, direct model and model differential transmissions, and non-independent and identically distributed (IID) local datasets. These analyses characterize the sufficient conditions for FL over noisy channels to have the same convergence behavior as the ideal case of no communication error. More specifically, in order to maintain the $\mathcal{O}({1}/{T})$ convergence rate of \textsc{FedAvg} with perfect communications, the uplink and downlink signal-to-noise ratio (SNR) for direct model transmissions should be controlled such that they scale as $\mathcal{O}(t^2)$ where $t$ is the index of communication rounds, but can stay $\mathcal{O}(1)$ (i.e., constant) for model differential transmissions. The key insight of these theoretical results is a ``flying under the radar'' principle -- stochastic gradient descent (SGD) is an inherent noisy process and uplink/downlink communication noises can be tolerated as long as they do not dominate the time-varying SGD noise. We exemplify these theoretical findings with two widely adopted communication techniques -- transmit power control and receive diversity combining -- and further validate their performance advantages over the standard methods via numerical experiments using several real-world FL tasks.
\end{abstract}

%\begin{IEEEkeywords}
%Federated Learning, Machine Learning for Communications, Convergence Analysis
%\end{IEEEkeywords}

\section{Introduction}

Federated learning (FL) \cite{mcmahan2017fl,konecny2016fl} is an emerging distributed machine learning paradigm that has many attractive properties which can address new challenges in machine learning (ML). In particular, FL is motivated by the growing trend that massive amount of the real-world data are \textit{exogenously} generated at the edge devices \zixiangTCCN{and is considered as one of the potential key applications in 6th generation (6G) cellular communication systems \cite{yang2021federated}}. %For better privacy protection, it is desirable to keep the data locally at the device and enable distributed model training, which has motivated the development of FL.%The power of FL has been realized in commercial devices (e.g., Google Pixel 2 uses FL to train ML models to personalize user experience) and ML tasks (e.g., Gboard uses FL for keyboard prediction) \cite{bonawitz2019towards}.

Communication efficiency has been at the front and center of FL ever since its inception \cite{konecny2016fl,mcmahan2017fl}, and it is widely regarded as one of its primary bottlenecks \cite{bonawitz2019towards,kairouz2019advances,li2020federated}. \zixiangTCCN{Communication schemes for FL can be divided into two categories: digital communication and analog communication. Digital communication for FL is usually considered to incur a heavy burden for wireless networks, as it allocates different communication resources to the ML model parameters of each client. Analog communication reduces the communication overhead by allowing different clients to transmit FL models using shared resources.} Early research has largely focused on either reducing the number of communication rounds \cite{mcmahan2017fl,li2018federated}, or decreasing the size of the payload for transmission \cite{Zheng2020jsac,reisizadeh2019fedpaq,du2020high}. \zixiangb{However,} in most FL literature that deals with communication efficiency, it is often assumed that a perfect communication ``tunnel'' has been established, and the task of improving communication efficiency largely resides on the ML design that trades off computation and communication. More recent research starts to close this gap by focusing on the system design, particularly for wireless FL; see Section~\ref{sec:related} for an overview. Nevertheless, the focus has been on bandwidth allocation, device selection, or either uplink or downlink (but not both) cellular system designs. 

While the early studies provide a glimpse of the potential of optimizing the communication design for FL, the important and more practical issue of noisy communications for {both uplink} (clients send local models to the parameter server) {and downlink} (server sends global model to clients) has not been well investigated. Analytically speaking, a joint consideration of both noisy uplink and downlink complicates the convergence analysis because of \emph{noise propagation} in both directions of every communication round. Furthermore, all of these noisy uplink and downlink communications collectively determine the final learning performance, which requires a holistic design and analysis.

The goal of this paper is two-fold: we want to first understand the impact of \emph{communication-induced noise}, in both upload (uplink) and download (downlink) phases of FL, on the ML model convergence and accuracy performance, and then design communication algorithms to control the signal-to-noise ratio (SNR) to improve FL performance under a total resource budget. 
We focus on \emph{analog} communications for model updates \cite{amiri2020federated,zhu2019broadband,yang2020federated} and investigate SNR control in both uplink and downlink, which is especially crucial when the underlying ML method is stochastic gradient descent (SGD) as considered in this work, because SGD is much more sensitive to noise than the (full) gradient descent \cite{kairouz2019advances,stich2018local}. 
% Towards this end, we describe a complete FL system where {both} model upload and download take place over noisy channels. 
{Our treatment is novel because all prior works either study uplink-only \cite{amiri2020federated,zhu2019broadband,xia2020fast,sery2020over,guo2021iot} or downlink-only \cite{amiri2020convergence} noisy communications, but not both. }  
We present novel convergence analyses of the standard \zixiangTCCN{Federated Averaging (\textsc{FedAvg})} scheme under non-IID datasets, full or partial clients participation, direct model or model differential transmissions, and simultaneous noisy downlink and uplink analog communications.  
These analyses are based on very general receive noise assumptions, and hence are broadly applicable to a variety of communication systems. %More importantly, the analyses reveal that, in order to maintain the same $\mathcal{O}({1}/{T})$ convergence rate of \textsc{FedAvg} under strongly convex loss functions with perfect communications (the ideal noise-free case) \cite{stich2018local,wang2018cooperative,li2019convergence}, the uplink and downlink SNR should be controlled such that they scale as $\mathcal{O}(t^2)$ where $t$ is the index of communication rounds, when the updated local or global model is directly transmitted over the noisy channel. %Furthermore, this holds for both full and partial clients participation, the latter of which is significantly more difficult to analyze when combined with simultaneous uplink and downlink communication noise.  
%On the other hand, when only model differential is transmitted on the uplink, we show that a \textit{constant} SNR (i.e., $\mathcal{O}(1)$ scaling) is sufficient to achieve the $\mathcal{O}({1}/{T})$ convergence rate, despite the fact that both uplink and downlink noise instances exist in the aggregated global model.  
The key insight of these theoretical results is a ``flying under the radar'' principle: SGD is inherently a noisy process, and as long as uplink/downlink channel noises do not dominate the SGD noise during model training (which is controlled by the time-varying learning rate), the scaling of convergence is not affected. 
This general principle is exemplified with two widely adopted communication techniques -- \textit{transmit power control} and \textit{receive diversity combining} -- by controlling the resulting post-processing SNR to satisfy the theoretical analyses {under a fixed total budget constraint}.   Comprehensive numerical evaluations on three widely adopted ML tasks with increasing difficulties ({MNIST}, {CIFAR-10} and {Shakespeare}) are carried out using these techniques. We carry out a series of experiments to demonstrate that the fine-tuned transmit power control and receive diversity combining that are guided by the theoretical analyses can significantly outperform the equal-SNR-over-time baseline, and in fact can approach the ideal noise-free communication performance in many of the experiment settings. %These results not only corroborate the theoretical conclusions but also highlight the benefit of designing communication algorithms that are specifically tailored to the characteristics of FL. 

\congr{
To summarize, the main contributions of this work include the following.
\begin{itemize}[leftmargin=*]\itemsep=0pt
\item We present novel convergence analyses for FL with {simultaneous uplink and downlink noisy analog communications}, with full vs. partial clients participation, direct model vs. model differential, and non-IID local datasets. To the best of the authors' knowledge, this is the first time FL convergence analysis is carried out when {both upload and download phases are over noisy communication channels}, which introduce significant challenges because of the {noise propagation} in both directions. %: the downlink noise would enter the local SGD model training and affect the local model accuracy, which is then used to compute the new global model; the uplink noise would affect the accuracy of the global model, which is then broadcast back to the clients for new SGD operations.
\item We establish \textit{SNR scaling laws}. In particular, we prove that in order to maintain the well established $\mathcal{O}({1}/{T})$\footnote{\zixiangTCCN{Notation $f=\mathcal{O}(g)$ denotes $f$ is of order at most of $g$.}} convergence rate of \textsc{FedAvg} with noise-free communications, $\mathcal{O}(t^2)$ SNR scaling is needed for direct model and $\mathcal{O}(1)$ (i.e., constant) for model differential. This \textit{$t^2$-vs-$1$ scaling law comparison} under the same communication environment is novel. 
\item We enhance the widely adopted transmit power control and receive diversity combining algorithms to better serve FL over noisy channels, and validate their performance advantages over the state of the art methods under the same total resource budget via extensive numerical experiments.% using several real-world FL tasks.
\end{itemize}

}

The remainder of this paper is organized as follows. Related works are surveyed in Section~\ref{sec:related}. The system model that captures the noisy channels  in both uplink and downlink of FL is described in Section \ref{sec:model}. Theoretical analyses are presented in Section \ref{sec:convergence} for three different FL configurations. These results inspire novel communication designs of transmit power control and receive diversity combining that are presented in Section~\ref{sec:design}. Experimental results are given in Section \ref{sec:experiment}, followed by the conclusions in Section \ref{sec:conclusion}. All technical proofs are given in the Appendices.

\section{Related Works}
\label{sec:related}

% \mypara{FL.} Federated Learning \cite{mcmahan2017fl} has been extensively studied in recent years in the machine learning and artificial intelligence community, which aims to address various questions around improving machine learning efficiency and effectiveness \cite{stich2018local,wang2018cooperative,haddadpour2019local,li2019convergence}, preserving the privacy of user data \cite{bhowmick2018protection,niu2019secure}, robustness to attacks and failures \cite{xie2019practical}, and ensuring fairness and addressing sources of bias \cite{bonawitz2019towards,li2019fair}. 

\mypara{Improve FL communication efficiency.} The original \textsc{FedAvg} reduces the communication overhead by only periodically averaging the local models. Theoretical understanding of the communication-computation tradeoff has been actively pursued and, depending on the underlying assumptions (\zixiangTCCN{e.g.,} IID or non-IID local datasets, convex or non-convex loss functions, GD or SGD), rigorous analyses of the convergence behavior have been carried out \cite{stich2018local,wang2018cooperative,li2019convergence}.  For the approach of reducing the size of messages, general discussions on sparsification, subsampling, and quantization are given in \cite{konecny2016fl}. There are also recent efforts in developing quantization and source coding to reduce the communication cost  \cite{zhu2020one,reisizadeh2019fedpaq,amiri2020federated,amiri2020federated2,Zheng2020jsac,chen2021twc,du2020high}. Nevertheless, they mostly do not consider the communication channel noise. %mostly focus on either uplink or downlink but not both, and largely consider only full clients participation.
%Enabling more flexible communication-computation tradeoff is studied in \cite{li2018federated}.

\mypara{Communication design for FL. } Recent years have also seen increased effort in the communication algorithm and system design for FL. Trade-off between local model update and global model aggregation is studied in \cite{mo2020energy} to optimize the transmission power/rate and training time. Various radio resource allocation and client selection policies \cite{zeng2019energy,shi2019device,yang2019energy,yang2019scheduling,chen2020convergence,xu2020client} have been proposed to minimize the learning loss or the training time. Joint communication and computation is investigated \cite{yang2020federated,zhu2019broadband,zhu2020one,chen2019joint}. In particular, the analog aggregation design \cite{zhu2019broadband,amiri2020sp,zhu2020one} serves as one of our design examples in Section~\ref{sec:design}.

\mypara{FL with imperfect/noisy communications.} %The most relevant literature to this work are those who study noisy communications in the FL process. 
Existing literature is dominated by uplink-only noisy communications \cite{frey2020over,jiang2020iccc,yang2020federated,amiri2020federated,zhu2019broadband,xia2020fast,sery2020over,guo2021iot}. There is very limited study on downlink-only noisy communications for FL; \cite{amiri2020convergence} proposes and analyzes downlink digital and analog transmissions while assuming an error-free uplink. On the other hand, existing literature that consider both upload and download imperfect communications focus only on how to modify the ML model training method. In particular, \cite{Ang2020tc} changes the loss function of FL to accommodate the communication error. \cite{tang2019doublesqueeze,yu2018double,chen2021scalecom} propose to compress the gradients in order to tolerate both uplink and downlink bandwidth bottlenecks. Their methods are either error compensation, quantization or leveraging sparsity. None of these considers improving the communication design.

\section{System Model for Learning and Noisy Communication}
\label{sec:model}

We first introduce the FL problem formulation, and then describe the FL pipeline where both local model upload (uplink) and global model download (downlink) take place over noisy channels.

% will first recall the learning problem of FL. By dividing each communication round of FL into four steps, we next derive the formulation of the communication system model with both uplink and downlink transmission errors for FL tasks.

% We first introduce the standard FL problem formulation, and then present a complete FL pipeline with noisy communications. In particular, we describe the FL process and highlight the general noisy communication models for local model upload (uplink) and global model download (downlink).

\subsection{FL Problem Formulation}
\label{sec:modelFL}
The federated learning problem setting studied in this paper mostly follows the standard model in the original paper \cite{mcmahan2017fl}. In particular, we consider a FL system with one central parameter server (e.g., base station) and a set of at most $N$ clients (e.g., mobile devices). Client $k \in [N] \triangleq \{1, 2, \cdots, N\}$ stores a local dataset $\mathcal{D}_k = \{\vect{z}_i\}_{i=1}^{D_k}$, with its size denoted by $D_k$, that never leaves the client. Datasets across clients are assumed to be non-IID and disjoint. The maximum data size when all clients participate in FL is $D_{\text{tot}} = \sum_{k=1}^N D_k$. 
Each data sample $\vect{z}$ is given as an input-output pair $\{\vect{x}, y\}$. 
The loss function $f(\vect{w}, \vect{z})$ measures how well a ML model with parameter $\vect{w} \in \mathbb{R}^d$ fits a single data sample $\vect{z}$. Without loss of generality, we assume that $\vect{w}$ has zero-mean and unit-variance elements\footnote{The parameter normalization and de-normalization procedure in wireless FL can be found in the Appendix in \cite{zhu2019broadband}. We further note that \emph{weight normalization} is widely adopted in training deep neural networks \cite{salimans2016nips}.}, i.e., $\expt||w_i||^2 = 1$, $\forall i \in [d]$. For the $k$-th client, its local loss function $F_k(\cdot)$ is defined by
% \begin{equation*}
   $ F_k(\vect{w}) \triangleq \frac{1}{D_k} \sum_{\vect{z}\in\mathcal{D}_k} f(\vect{w}, \vect{z})$,
% \end{equation*}
and we further use $\nabla F_k (\vect{w}, \xi)$ to denote the SGD operation with model $\vect{w}$ and data sample $\xi$ at client $k$. 
The goal of FL is to learn a {global} machine learning (ML) model at the parameter server based on the distributed {local} datasets at the $N$ clients, by coordinating and aggregating the training processes at individual clients without allowing the server to access the raw data. Specifically, the global optimization objective over all $N$ clients is given by
$F(\vect{w}) \triangleq \sum_{k=1}^N \frac{D_{\zixiangb{k}}}{D_{\text{tot}}} F_{\zixiangb{k}} (\vect{w}) = \frac{1}{D_{\text{tot}}}\sum_{k=1}^N \sum_{\vect{z}\in\mathcal{D}_k} f(\vect{w}, \vect{z})$. 
% \begin{equation} \label{eqn:global}
%     F(\vect{w}) \triangleq \sum_{k=1}^N \frac{D_n}{D_{\text{tot}}} F_n (\vect{w}) = \frac{1}{D_{\text{tot}}}\sum_{k=1}^N \sum_{\vect{z}\in\mathcal{D}_n} f(\vect{w}, \vect{z}).
% \end{equation}
The global loss function measures how well the model fits the entire corpus of data on average. The learning objective is to find the best model parameter $\vect{w}^*$ that minimizes the global loss function: $\vect{w}^* = \argmin_\vect{w} F(\vect{w}).$ Let $F^*$ and $F_k^*$ be the minimum value of $F$ and $F_k$, respectively. Then, \zixiangTCCN{$\Gamma = F^* - \sum_{k=1}^N \frac{D_{k}}{D_{\text{tot}}}F_k^*$} quantifies the degree of non-IID as shown in \cite{li2019convergence}. 

% \subsection{Noisy FL Model}
% \label{sec:model_noise}

% \subsection{Uplink and Downlink Communication Model}
% \label{sec:model_noise2}

\subsection{FL over Noisy Uplink and Downlink Channels}
\label{sec:model_noise}
% We consider a generic FL framework where {\em partial} client participation, {\em non-IID} local datasets, and distributed {\em stochastic} gradient descent (SGD), which are critical features that separate FL from conventional GD-based distributed ML, are explicitly captured. 

We study a generic FL framework where {\em partial} client participation and {\em non-IID} local datasets, two critical features that separate FL from conventional distributed ML, are explicitly captured. Unlike the existing literature, we focus on imperfect communications and consider that \emph{both} the upload and download transmissions take place over noisy communication channels. The overall system diagram is depicted in Fig.~\ref{fig:SystemDiagram}.  In particular, the  FL-over-noisy-channel pipeline works by iteratively executing the following steps at the $t$-th learning round, $\forall t \in [T]$. 

\begin{figure}[tb]
    \centering
    \includegraphics[width=1.0\textwidth]{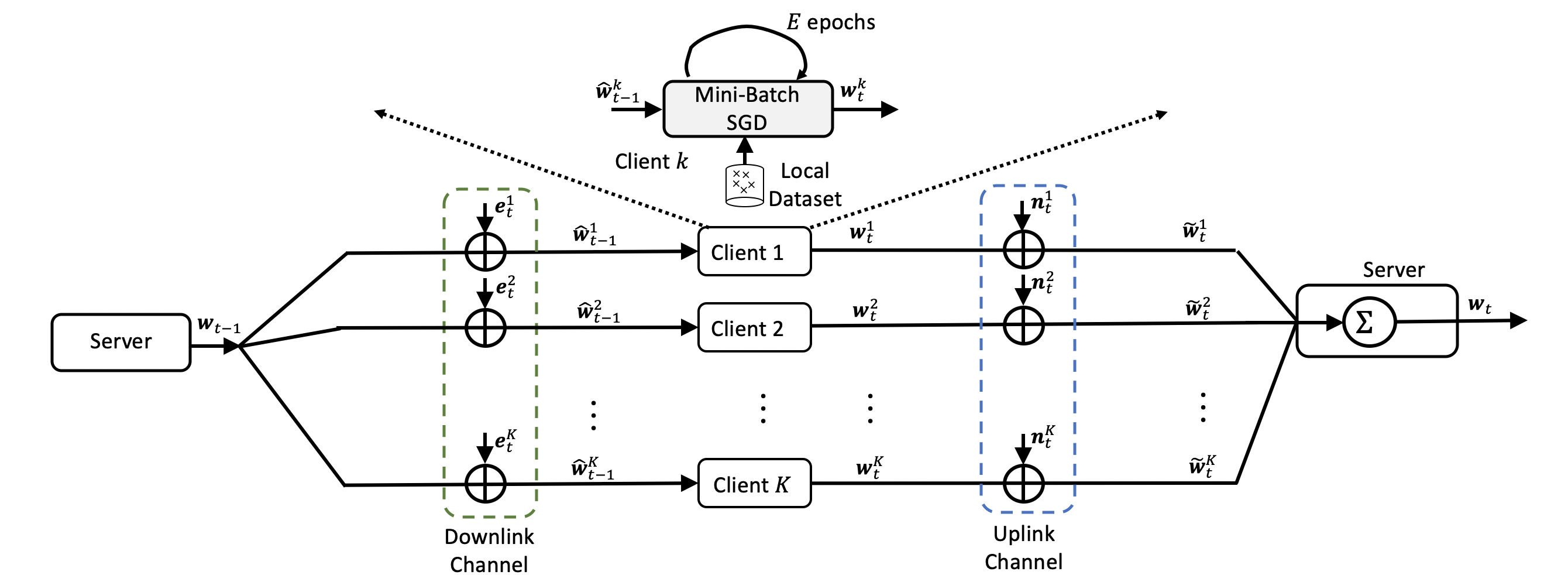}
    \caption{End-to-end FL system diagram in the $t$-th communication round. The impact of noisy channels in both uplink and downlink is captured.}
    % \includegraphics[width = 180 mm]{workflow_FL.png}
    %\caption{Wireless FL system model. The impacts on the noise of both uplink and downlink channels for $t-1$ to $t$ communication round are captured in the illustration.}
    \label{fig:SystemDiagram}
    \vspace{-0.1in}
\end{figure}

% \begin{enumerate} [leftmargin=*,itemindent=\dimexpr\labelsep+\labelwidth \relax]\itemsep=0pt
%[leftmargin=*] \itemsep=0pt
    % \item \textbf{Downlink communication for global model download.} 
    \mypara{(1) Downlink communication for global model download.} The centralized server broadcasts the current global ML model, which is described by the latest weight vector $\vect{w}_{t-1}$ from the previous round, to a set of uniformly randomly selected clients\footnote{We note that for partial clients participation, we have $K<N$; in the case of full clients participation we have $K=N$.} denoted as $\mathcal{S}_t$ with $\vert \mathcal{S}_t \vert = K$.  Because of the imperfection introduced in communications, \zixiangTCCN{ e.g., channel noise, imperfect channel estimation, and detection or estimation error}, client $k$ receives a noisy version of $\vect{w}_{t-1}$, which is written as
    \begin{equation}
    \label{eqn:dlnoise}
     \vect{\hat w}_{t-1}^k = \vect{w}_{t-1} + \vect{e}_{t}^k,
    \end{equation}
    %where $e_t^k$ represents the effective noise client $k$ has on the true global model $\vect{w}_{t}$. Note that the effective noise term does not necessarily correspond to only the channel noise -- it can reflect various processing errors in a wireless communication system, such as estimation, decoding and demodulation, frequency offset and phase noise, etc. 
    where $\vect{e}_{t}^k = [e_{t,1}^k, \cdots,  e_{t,d}^k]^T \in \mathbb{R}^d$ is the $d$-dimensional downlink {\em effective noise} vector at client $k$ and time $t$. We assume that $\vect{e}_{t}^k$ is a zero-mean random vector consisting of IID elements with  variance: 
    % $\mathbb{E}||e_{t,i}^k||^2=\zeta^2_{t,k}$, $\forall t,k,i$, and thus $\mathbb{E}||\vect{e}_t^k||^2=d\zeta^2_{t,k}$, $\forall t,k$.
    %$$
    \begin{equation}
    \label{eqn:dlnoisevar}
    \mathbb{E}||e_{t,i}^k||^2=\zeta^2_{t,k} \quad \text{and} \quad \mathbb{E}||\vect{e}_t^k||^2=d\zeta^2_{t,k}, \; \forall t\in [T], k\in \mathcal{S}_t, i \in [d].
    \end{equation}
    %$$ 
    \congr{
    \zixiangTCCN{{\em \underline{Effective noise and definition of SNR.}}} In order to keep the problem general, we do not specify a particular communication system for the actual downlink data transmission, and only use the effective noise model in \eqref{eqn:dlnoise}. The same approach applies to the uplink. This is a conscious choice to keep the problem general, and we want to focus on analyzing the impact of communication-induced noise and further controlling the resulting SNR to improve FL performance. In this way, $\vect{e}_{t}^k$ shall be interpreted as the effective noise that captures all the processing components in a downlink communication phase in addition to the natural channel noise\footnote{As a simple example, if the downlink communication is over a standard Additive White Gaussian Noise (AWGN) channel, then the actual received signal at client $k$ is $\vect{y}_{t-1}^k = \sqrt{P_{t-1}}\vect{w}_{t-1} + \vect{z}_{t}^k $ where $\vect{z}_{t}^k$ represents the AWGN and $P_{t-1}$ is the downlink broadcast transmit power. The effective channel noise becomes $\vect{e}_{t}^k = \frac{1}{\sqrt{P_{t-1}}} \vect{z}_{t}^k$.}. %This may include channel fading, channel estimation, decoding and demodulation error, frequency offset and phase noise. 
    % Because we have normalized the variance of each scalar model parameter: $\mathbb{E}|| {w}_{t-1,i} ||^2 =1, \; \forall i \in [d]$ as described in Section~\ref{sec:modelFL}, the  (post-processing) receive SNR for the $k$-th client at the $t$-th communication round can be written as
    Because we have normalized the variance of each scalar model parameter as described in Section~\ref{sec:modelFL}, the  (post-processing) receive SNR for the $k$-th client at the $t$-th communication round can be written as
    \begin{equation} \label{eqn:dlsnr}
    \ssf{SNR}_{t,k}^\text{L} = \frac{\expt||\vect{w}_{t-1}||^2}{\expt||\vect{e}_t^k||^2} = \frac{1}{\zeta^2_{t,k}}.
    \end{equation}
    Lastly we note that the noise assumption is very mild, because \eqref{eqn:dlnoisevar} only requires a \emph{bounded} variance of the random noise, but does not limit to any particular distribution. In addition, the downlink communication model is very general in the sense that the effective noise variances, $\{\zeta^2_{t,k}\}$,  can be different for different clients and at different rounds.
    }

    \mypara{(2) Local computation.} Each client uses its local data to train a local ML model improved upon the received global ML model. In this work, we assume that \zixiangTCCN{mini-batch SGD} is used in the model training. Note that this is the most commonly adopted training method in modern ML tasks, e.g., deep neural networks, but its analysis is more complicated than gradient descent (GD) when communication noise is present. %Also, as we will see, the convergence analysis of SGD with communication noise is much more complicated than gradient descent (GD). 
    
    Specifically, mini-batch SGD operates by updating the weight iteratively (for $E$ steps in each learning round) at client $k$ as follows:
    \begin{align*}
        % \begin{array}{lll}
        \text{Initialization: }& \qquad \vect{w}_{t,0}^k = \vect{\hat{w}}_{t-1}^k, \\
        \text{Iteration: }& \qquad \vect{w}_{t,\tau}^k = \vect{w}_{t,\tau-1}^k - \eta_t \nabla  F{_k}(\vect{w}_{{t},\tau-1}^k, \xi_{\tau}^k), \forall \tau=1, \cdots, E,\\
        % \text{~~}&\qquad \forall \tau=1, \cdots, E, \\
        \text{Output: }& \qquad \vect{w}_{t}^k =\vect{w}_{{t},E}^k,
        % \end{array}
        % $$
        % \vect{w}_{t+1}^k = \vect{w}_t^k - \eta_t \nabla F(\vect{w}_t^k, \xi_t^k),
        % $$
        % \label{eqn:localSGD}
    \end{align*}
    where $\xi_{\tau}^k$ is a batch of data points that are sampled independently and uniformly at random from the local dataset of client $k$ in the $\tau$-th iteration of mini-batch SGD.

    % \item \textbf{Uplink communication for local model upload.} 
    \mypara{(3) Uplink communication for local model upload.}  The $K$ participating clients upload their latest local models to the server. More specifically, client $k$ transmits a vector $\vect{x}_t^k$ to the server at the $t$-th round. We again consider the practical case where the server receives a noisy version of the individual weight vectors from each client in the uplink communications (\zixiangTCCN{e.g.,} channel noise, fading, transmitter and receiver distortion). The received vector for client $k$ can be written as
        \begin{equation}
        \label{eqn:ulnoise}
            \vect{\hat x}_{t}^k = \vect{x}_t^k + \vect{n}_t^k,
        \end{equation}
    where $\vect{n}_t^k \in \mathbb{R}^d$ is the $d$-dimensional uplink \emph{effective noise} vector for decoding client $k$'s model at time $t$. We assume that $\vect{n}_t^k$ is a zero-mean random vector consisting of IID elements with bounded variance:  
    % $$
    \begin{equation}
    \label{eqn:ulnoisevar}
    \mathbb{E}||n_{t,i}^k||^2=\sigma^2_{t,k} \quad \text{and} \quad  \mathbb{E}||\vect{n}_{t}^k||^2=d\sigma^2_{t,k}, \; \forall t\in [T], k \in \mathcal{S}_t, i \in [d].
    \end{equation}
    % $$   
    % For mathematical simplicity, we continue to assume that the transmitted signal $\vect{x}_t^k$ has zero-mean and unit-variance elements, i.e., $\mathbb{E}||x_{t,i}^k||^2=1$, $\forall t,k,i$. The receive SNR at the server for $k$-th client's signal $\vect{x}_t^k$ can be written as
    % \begin{equation}
    %     \ssf{SNR}_{t,k}^\text{S} =\frac{\expt||\vect{x}_t^k||^2}{\expt||\vect{n}_t^k||^2} =\frac{1}{\sigma^2_{t,k}}.
    % \end{equation}
    We again note that the uplink communication model in \eqref{eqn:ulnoisevar} is very general in the sense that (1) only bounded variance is assumed as opposed to the specific noise distribution; and (2) the effective noise variances, $\{\sigma^2_{t,k}\}$, can be different for different clients and at different rounds.
    
    Unlike in the download phase where the model itself is transmitted to clients, two different choices of the vector $\vect{x}_t^k$ for model upload are considered in this paper.
    \begin{enumerate}
        \item %\textit{Direct Transmissions.} 
        \textbf{Model Transmission (MT).} The $K$ participating clients upload the latest local models: $\vect{x}_t^k = \vect{w}_t^k$.
        Following \eqref{eqn:ulnoise}, the server receives the updated local model of client $k$ as
        \begin{equation}
        \label{eqn:ulmtnoise}
        \vect{\tilde w}_t^k = \vect{\hat x}_t^k = \vect{w}_t^k + \vect{n}_t^k.
        \end{equation}
        \item %\textit{Differential Transmissions.} 
        \textbf{Model Differential Transmission (MDT).} The $K$ participating clients only upload the differences between the latest local model and the previously received (noisy) global model, i.e., $\vect{x}_t^k =\vect{d}_t^k \triangleq \vect{w}_t^k-\vect{\hat w}_{t-1}^k.$ 
        For MDT, the server uses $\vect{d}_t^k$ and the previously computed global model $\vect{w}_{t-1}$ to reconstruct the updated local model of client $k$ as
       
        \begin{equation}
        \label{eqn:ulmdtnoise}
            \vect{\tilde w}_{t}^k = \vect{w}_{t-1} + \vect{\hat x}_t^k = \vect{w}_{t-1} + \vect{d}_t^k + \vect{n}_t^k =  \vect{w}_t^k + \vect{n}_t^k - \vect{e}_t^k.
        \end{equation}
        % {\color{blue}
        % \begin{equation}
        %     \vect{\hat w}_{t}^k = \vect{w}_{t-1}^k + \vect{\hat d}_t^k = \vect{w}_{t-1}^k + \vect{d}_t^k + \vect{n}_t^k =\vect{w}_{t}^k + \vect{e}_{t-1}^k + \vect{n}_t^k.
        % \end{equation}}
    \end{enumerate}
    % We see from \eqref{eqn:ulmdtnoise} that for MDT, both uplink and downlink channel noises collectively impact the received local models at the server. 
    The SNR for these two models, however, has to be defined slightly differently because we have normalized the ML model parameter $\vect{w}$ to have unit-variance elements in Section~\ref{sec:modelFL}. Thus, for MT, we can write the receive SNR  at the server for $k$-th client's signal as 
    \begin{equation}
    \label{eqn:MTulSNR}
        \ssf{SNR}_{t,k}^\text{S,MT} =\frac{\expt \norm{\vect{w}_t^k}^2}{\expt \norm{\vect{n}_t^k}^2} =\frac{1}{\sigma^2_{t,k}}.
    \end{equation} 
    For MDT, we keep the SNR expression general since the variance of model difference $\vect{d}_t^k$ is unknown \emph{a priori} and also changes over time. We have: %gradually shrinks as FL converges. We thus have
    \begin{equation}
    \label{eqn:MDTulSNR}
        \ssf{SNR}_{t,k}^\text{S,MDT} = \frac{\expt \norm{\vect{d}_t^k}^2}{\expt \norm{\vect{n}_t^k}^2} =\frac{ \expt \norm{\vect{d}_t^k}^2}{d\sigma^2_{t,k}}.
    \end{equation} 
    \zixiangTCCN{\underline{\em Differences between MT and MDT, and why they are both considered.}}
    The different choices of MT and MDT are not considered in most of the literature because with a perfect communication assumption, there is no difference between them from a pure learning perspective -- as long as the server can reconstruct $\vect{w}_t^k$, this aspect does not impact the learning performance \cite{mcmahan2017fl}. However, the choice becomes significant when communication noises are present. From a practical system point of view, both schemes can be useful in different use cases. For example, MDT in the uplink relies on the server keeping the previous global model $\vect{w}_{t-1}$, from which the new local models can be reconstructed. This, however, may not always be true if the server deletes intermediate model aggregation (after broadcast) for privacy preservation \cite{bonawitz2019towards}, which makes reconstruction from the model differential infeasible. 
    
    We also note that the {\em download} phase, on the other hand, does not have these two choices -- we always transmit the global model $\vect{w}_{t-1}$ itself. This is because we have partial (and random) clients participation, where the set of clients participating the $t$-th round can be totally different from the $(t-1)$-th round, and they do not have the previous global model to reconstruct based on the model difference.

    \zixiangTCCN{\em \underline{Noise propagation.}} Both uplink and downlink channel noises collectively impact the received local models at the server. This {noise propagation} effect is more prominent in MDT (\eqref{eqn:ulmdtnoise} explicitly has both noise terms). However, this effect in fact exists in both cases, because the local model is trained using the previously received global model, which contains the downlink noise.

    % \item \textbf{Global aggregation.} 
    \mypara{(4) Global aggregation.}  The server aggregates the received local models to generate a new global ML model, following the standard \textsc{FedAvg} \cite{mcmahan2017fl}:
%    \begin{align}
	$
        \vect{w}_{t} = \sum_{k \in \mathcal{S}_t } \frac{D_k}{\sum_{i \in \mathcal{S}_t} D_i}\vect{\tilde w}_t^k.
        	$
%       \label{eqn:agg}
%    \end{align}
    The server then moves on to the $(t+1)$-th round. 
    For ease of exposition and to simply the analysis, we assume in the remainder of the paper that the local dataset sizes at all clients are the same\footnote{We emphasize that all the results of this paper can be extended to handle different local dataset sizes.}: $D_i = D_j$, $\forall i, j \in [N]$, which leads to the following simplifications.
    %, and focus on the general case of randomly selected $K$ out of $N$ clients participating in the server aggregation with non-IID datasets. 
%    The aggregation can be simplified as
%            \begin{equation}
%                \vect{w}_{t} = \frac{1}{K} \sum_{k \in \mathcal{S}_{t}} \vect{\hat x}_{t}^k = \frac{1}{K} \sum_{k \in \mathcal{S}_{t}} \left( \vect{w}_{t}^k + \vect{n}_{t}^k \right).
%            \end{equation}
%    the SNR for the {\em global} model (after aggregation) can be written as
%    \begin{equation}
%        \ssf{SNR}_{t}^G=\frac{K^2}{\sigma^2_{t}},
%    \end{equation}
%    where $\sigma_{t}^2\triangleq\sum_{k \in \mathcal{S}_{t}}\sigma^2_{t,k}$.

    \begin{enumerate}
        \item \textbf{MT.} The aggregation can be simplified as
            \begin{equation}
            \label{eqn:glbMT}
                \vect{w}_{t} =  \frac{1}{K} \sum_{k \in \mathcal{S}_{t}} \vect{\tilde w}_{t}^k = \frac{1}{K} \sum_{k \in \mathcal{S}_{t}} \vect{\hat x}_{t}^k {= \frac{1}{K} \sum_{k \in \mathcal{S}_{t}} \left( \vect{w}_{t}^k + \vect{n}_{t}^k \right)}.
            \end{equation}
            % and the post-aggregation SNR of the global model can be written as
            % \begin{equation}
            %     \ssf{SNR}_{t}^{\text{S,MT}}=\frac{K^2}{\sigma^2_{t}},
            % \end{equation}
            % where $\sigma_{t}^2\triangleq\sum_{k \in \mathcal{S}_{t}}\sigma^2_{t,k}$.
        \item \textbf{MDT.} The aggregation can be written as
        \begin{equation}
        \label{eqn:glbMDT}
            \vect{w}_{t} =\frac{1}{K} \sum_{k \in \mathcal{S}_{t}} \vect{\tilde w}_{t}^k =\vect{w}_{t-1} + \frac{1}{K} \sum_{k \in \mathcal{S}_{t}} \vect{\hat x}_{t}^k {= \frac{1}{K} \sum_{k \in \mathcal{S}_{t}} \left( \vect{w}_{t}^k + \vect{n}_{t}^k - \vect{e}_{t}^k \right).}
        \end{equation}
        % and the post-aggregation SNR of the global model can be written as
        % \begin{equation}
        %         \ssf{SNR}_{t}^{\text{S,MDT}}=\frac{K^2}{\sigma^2_{t} + \zeta^2_{t}},
        %     \end{equation}
        % where $\zeta_{t}^2 \triangleq \sum_{k \in \mathcal{S}_{t}}\zeta^2_{t,k}$.
    \end{enumerate}
    For the case of MT, the SNR for the global model (after aggregation) can be written as
    \begin{equation}
    \label{eqn:snrgt1}
        % \ssf{SNR}_{t}^G= \frac{\expt||\sum_{k \in \mathcal{S}_{t}}\vect{w}_{t}^k||^2}{\expt||\sum_{k \in \mathcal{S}_{t}}\vect{n}_{t}^k||^2}= \frac{K}{\sigma^2_{t}}.
        \ssf{SNR}_{t}^G= \frac{\expt||\sum_{k \in \mathcal{S}_{t}}\vect{w}_{t}^k||^2}{\expt||\sum_{k \in \mathcal{S}_{t}}\vect{n}_{t}^k||^2}= \frac{\expt||\sum_{k \in \mathcal{S}_{t}}\vect{w}_{t}^k||^2 }{ d \sigma^2_{t}},
    \end{equation}
    and for MDT, the SNR for the global model can be written as
    \begin{equation}
    \label{eqn:snrgt2}
        % \ssf{SNR}_{t}^G= \frac{\expt||\sum_{k \in \mathcal{S}_{t}}\vect{w}_{t}^k||^2}{\expt||\sum_{k \in \mathcal{S}_{t}} (\vect{n}_{t}^k- \vect{e}_{t}^k)||^2}= \frac{K}{\sigma^2_{t} + \zeta^2_t},
        \ssf{SNR}_{t}^G= \frac{\expt||\sum_{k \in \mathcal{S}_{t}}\vect{w}_{t}^k||^2}{\expt||\sum_{k \in \mathcal{S}_{t}} (\vect{n}_{t}^k- \vect{e}_{t}^k)||^2}= \frac{\expt||\sum_{k \in \mathcal{S}_{t}}\vect{w}_{t}^k||^2 }{ d( \sigma^2_{t} + \zeta^2_t )},
    \end{equation}
    where $\sigma_{t}^2\triangleq\sum_{k \in \mathcal{S}_{t}}\sigma^2_{t,k}$ and $\zeta_{t}^2\triangleq\sum_{k \in \mathcal{S}_{t}}\zeta^2_{t,k}$ denote the total uplink and downlink effective noise power for participating clients, respectively. 
    
    % \begin{remark} \normalfont
    % % We note that the total signal power (numerators in Eqns.~\eqref{eqn:snrgt1} and \eqref{eqn:snrgt2}) is difficult to evaluate. 
    In general, $\{\vect{w}_{t}^k\}$ are correlated across clients because the local model updates all start from (roughly) the same global model. Intuitively, once FL convergences, these models will largely be the same, leading to a signal power term of $dK^2$ for the numerator. On the other hand, if we assume that these local models are independent across clients, which is reasonable in the early phases of FL with large local epochs, where the (roughly) same starting point has diminishing impact due to the long training period and non-IID nature of the data distribution, we can have a signal power term of $dK$. Nevertheless, since the SNR control can be realized by adjusting the effective noise power levels, we focus on the impact of $\sigma^2_{t}$ and $\zeta^2_t$ on the FL performance in Section~\ref{sec:convergence}. %This issue, however, can be addressed by focusing on the effective noise power levels $\sigma^2_{t}$ and $\zeta^2_t$. In practice, 
    % \end{remark}
    
    % where $\sigma_{t}^2\triangleq\sum_{k \in \mathcal{S}_{t}}\sigma^2_{t,k}$ and $\zeta_{t}^2\triangleq\sum_{k \in \mathcal{S}_{t}}\zeta^2_{t,k}$. If we further assume that $\vect{w}_{t}^k$ is independent across clients, we can have that $\ssf{SNR}_{t}^G= K/\sigma^2_{t}$ for MT and $\ssf{SNR}_{t}^G= K/(\sigma^2_{t} + \zeta^2_t)$ for MDT. Nevertheless, since the SNR control can be realized by adjusting the effective noise power levels, we will focus on the impact of $\sigma^2_{t}$ and $\zeta^2_t$ on the FL performance in Section~\ref{sec:convergence}.  %We further note that both definitions of SNR for the global model are based on the assumption that $\vect{w}_{t}^k$ is independent across
% \end{enumerate}

% \section{Convergence Analysis and system design of \textsc{FedAvg} in the presence of Communication Errors}
% \label{sec:convergence}
% In this section, we first analyze the convergence of \textsc{FedAvg} in the presence of both uplink and downlink communication errors when direct model transmission is adopted for local model upload. The main theoretical result is that this configuration enables learning convergence to the global optimum at a rate of $\mathcal{O}({1}/{T})$ \textit{if we gradually decrease the noise level at rate $\mathcal{O}(1/t^2)$ and the learning rate at $\mathcal{O}(1/t)$}. Inspired by the derived theoretical guarantee, we next take the analog aggregation framework in \cite{zhu2019broadband} as an example to demonstrate the wireless communication system design specific for FL tasks in presence of communication errors.

\zixiangTCCN{
In this paper, we mainly focus on analog communication for FL, where model parameters are transmitted in an analog manner. Therefore, digital communication processing such as source coding, channel coding and modulation are not incorporated. The adopted zero-mean bounded random noise assumption is reasonable for this setting, because it does not require any specific distribution and thus can be applicable to a broad range of analog communication systems.}

\section{Convergence Analysis of FL over Noisy Channels}
\label{sec:convergence}

\subsection{Convergence Analysis for Model Transmission for Full Clients Participation}
\label{sec:conv_MT_full}

We first analyze the convergence of \textsc{FedAvg} in the presence of both uplink and downlink communication noise when direct model transmission (MT) is adopted for local model upload: $\vect{x}_t^k = \vect{w}_t^k$. To \zixiangb{simplify} the analysis and highlight the key techniques in deriving the convergence rate, we assume $K=N$ in this subsection (i.e., \textit{full} clients participation with $\mathcal{S}_t = [K] = [N]$), and leave the case of partial clients participation to Section~\ref{sec:conv_MT_part}. %The main result is that this configuration enables federated learning convergence to the global optimum at a rate of $\mathcal{O}({1}/{T})$ \emph{if we gradually control the effective noise power at rate $\mathcal{O}(1/t^2)$ and the learning rate at $\mathcal{O}(1/t)$ over $t$}. 

We make the following standard assumptions that are commonly adopted in the convergence analysis of \textsc{FedAvg} and its variants; see \cite{li2019convergence,jiang2018nips,stich2018local,reisizadeh2019fedpaq,Zheng2020jsac}.  In particular, Assumption \ref{as:F}-2) indicates that we focus on strongly convex $F_k(\cdot)$, which represents a category of loss functions that are widely studied in the literature.  %In particular, Assumption \ref{as:F}-1) indicates that the gradient of $f_i$ is Lipschitz continuous. The strongly convex loss function assumption in \ref{as:F}-2) is a category of loss functions that are widely studied in the literature. [list some reference.] [Explain Assumption \ref{as:F}-3).]
% Assumption \ref{as:F}-4) implies that the variance of stochastic gradients is uniformly bounded \cite{stich2018local}.
\begin{assumption}\label{as:F}
\begin{enumerate}[leftmargin=12pt,topsep=0pt, itemsep=0pt,parsep=0pt]
    \item[1)] \textbf{$L$-smooth:} $\forall~\vect{v}$ and $\vect{w}$, $F_k(\vect{v}) \leq F_k(\vect{w}) + (\vect{v} - \vect{w})^T \nabla F_k(\vect{w}) + \frac{L}{2} \norm{\vect{v} - \vect{w}}^2$.
%    \item[1)] \textbf{$L$-smooth:} $\norm{ \nabla F_k(\vect{w}) - \nabla F_k(\vect{v}) }\leq L \norm{\vect{w} - \vect{v}}, \forall~\vect{v}, \vect{w}$.%, $\forall~\vect{v}$ and $\vect{w}$.
    
    \item[2)] \textbf{$\mu$-strongly convex:} $\forall~\vect{v}$ and $\vect{w}$, $F_k(\vect{v}) \geq F_k(\vect{w}) + (\vect{v} - \vect{w})^T \nabla F_k(\vect{w}) + \frac{\mu}{2} \norm{\vect{v} - \vect{w}}^2$.
%    \item[2)] \textbf{$\mu$-strongly convex:}\\ $\left(  \nabla F_k(\vect{w}) - \nabla F_k(\vect{v}) \right)^T \left(\vect{w} - \vect{v}\right) \geq \mu \norm{\vect{w} - \vect{v}}^2, \forall~\vect{v}, \vect{w}$.
    
    \item[3)] \textbf{Bounded variance for unbiased mini-batch SGD:}  
    % \congc{This is not mini-batch SGD, but only SGD (because $\xi_t^k$ is one data point, not a batch). We should think about how to make it real mini-batch.} 
    The mini-batch SGD is unbiased: $\expt [ \nabla F_k (\vect{w}, \xi) ] = \nabla F_k (\vect{w})$, and the variance of stochastic gradients is bounded: $\expt \norm{\nabla F_k (\vect{w}, \xi) - \nabla F_k (\vect{w})}^2 \leq \delta_k^2$,  for mini-batch data $\xi$ at client $k \in [N]$. 
    
    \item[4)] \textbf{Uniformly bounded gradient:} $\expt \norm{\nabla F_k (\vect{w}, \xi)}^2 \leq H^2$ for mini-batch data $\xi$ at client $k \in [N]$.
\end{enumerate}
\end{assumption}

We present the main convergence result of MT with full clients participation in Theorem \ref{thm:MTfull}.%, whose proof can be found in Appendix \ref{app:proof_thm1}.

% The complete proof of Theorem \ref{thm:MTfull} is given in Appendix \ref{app:proof_thm1}. 

% \begin{theorem}
% \label{thm:MTfull}
%     % Let Assumption \ref{as:F} and \ref{as:w_bounded} hold and 
%     Define $\phi = \frac{L}{\mu}$, $\gamma = \max\{8\phi, E\}$. Set learning rate as $\eta_t=\frac{2}{\mu(\gamma+t)}$ and adopt a SNR control policy that satisfies: 
%   {\color{blue} \begin{align}
%     \label{eqn:thm1_snr}
%     \ssf{SNR}_{t}^G &\geq \frac{K}{4}\mu^2(\gamma+t)^2\sim \mathcal{O}(t^2) \nonumber\\
%     \sum_{k \in \mathcal{S}_t} \frac{1}{\ssf{SNR}_{t,k}^{\text{L}}} &\leq \frac{4}{\mu^2(\gamma+t)(\gamma+t-2)}\sim\mathcal{O}\left(\frac{1}{t^2}\right).
%     \end{align}}
%     %\congc{It would be better if both of these SNR requirements are expressed as inequalities, not equalities.} 
%     Then, under Assumption \ref{as:F}, the convergence of \textsc{FedAvg} with non-IID datasets and partial clients participation satisfies
%         \begin{equation}
%         \label{eqn:conv_res}
%             \expt \squab{F(\vect{w}_T)} - F^* \leq \frac{2\phi}{\gamma+T} \squab{\frac{D}{\mu} + \left ( 2L+\frac{E\mu}{4} \right ) \norm{\vect{w}_0 - \vect{w}^*}^2},
%         \end{equation}
%     where the constant $D$ is
%         \begin{equation*}
%           D = \frac{\delta_k^2}{N^2} + 6L \Gamma + 8(E-1)^2H^2 + \frac{N-K}{N-1}\frac{4}{K} E^2H^2 + \frac{d}{K^2} + \frac{d}{N^2}.
%         \end{equation*}
% \end{theorem}

\begin{theorem}
\label{thm:MTfull}
    Define $\phi = {L}/{\mu}$, $\gamma = \max\{8\phi, E\}$. Set learning rate as $\eta_t={2}/(\mu(\gamma+t))$ and adopt a SNR control policy that scales the effective uplink and downlink noise power over $t$  such that: 
%   \begin{align}
%     \label{eqn:thm1_snr}
%     \ssf{SNR}_{t}^G &\geq \frac{K}{4}\mu^2(\gamma+t)^2\sim \mathcal{O}(t^2) \nonumber\\
%     \sum_{k \in \mathcal{S}_t} \frac{1}{\ssf{SNR}_{t,k}^{\text{L}}} &\leq \frac{4}{\mu^2(\gamma+t)(\gamma+t-2)}\sim\mathcal{O}\left(\frac{1}{t^2}\right).
%     \end{align}
    \begin{align}
    \sigma_{t}^2 &\leq \frac{4N^2}{\mu^2(\gamma+t-1)^2} \sim \mathcal{O}\left(\frac{1}{t^2} \right)  \label{eqn:thmMTfull_ul} \\
    \zeta_{t}^2 &\leq \frac{4N^2}{\mu^2(\gamma+t)(\gamma+t-2)}\sim\mathcal{O}\left(\frac{1}{t^2}\right). \label{eqn:thmMTfull_dl}
    \end{align}
    where $\sigma_{t}^2\triangleq\sum_{k \in [N]}\sigma^2_{t,k}$ and $\zeta_{t}^2\triangleq\sum_{k \in [N]}\zeta^2_{t,k}$ denote the {total} uplink and downlink effective noise power, respectively. 
    Then, under Assumption \ref{as:F}, the convergence of \textsc{FedAvg} with non-IID datasets and full clients participation satisfies
        \begin{equation}
        \label{eqn:conv_res}
            % \expt \squab{F(\vect{w}_T)} - F^* \leq \frac{2\phi}{\gamma+T} \squab{\frac{D}{\mu} + \left ( 2L+\frac{E\mu}{4} \right ) \norm{\vect{w}_0 - \vect{w}^*}^2},
            % \expt \norm{\vect{w}_T - \vect{w}^*}^2 \leq \frac{1}{\gamma+T} \squab{ \frac{4D}{\mu^2} + \left( 8\phi + E \right) \norm{\vect{w}_0 - \vect{w}^*}^2}
            \expt \norm{\vect{w}_T - \vect{w}^*}^2 \leq \frac{8L + \mu E}{\mu(T+\gamma)} \norm{\vect{w}_0 - \vect{w}^*}^2 +   \frac{4D}{\mu^2 (T+\gamma)} 
        \end{equation}
    with $D = \sum_{k=1}^{N}{\delta_k^2}/{N^2} + 6L \Gamma + 8(E-1)^2H^2 + 2d$.
        % \begin{equation*}
        %   D = \sum_{k=1}^{N}\frac{\delta_k^2}{N^2} + 6L \Gamma + 8(E-1)^2H^2 +  2d.
        % \end{equation*}
\end{theorem}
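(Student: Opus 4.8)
The plan is to adapt the virtual-averaged-sequence argument used in the noiseless analysis of \textsc{FedAvg} on non-IID data (cf.\ \cite{li2019convergence}) so that it carries the uplink and downlink effective-noise terms through the convergence recursion. Write the per-round averaged local iterate as $\bar{\vect{w}}_{t,\tau}\triangleq\frac1N\sum_{k\in[N]}\vect{w}_{t,\tau}^k$. Two observations pin down where the channel noise enters: by \eqref{eqn:dlnoise} the $E$ local SGD steps of round $t$ all start from the \emph{non-identical} points $\vect{w}_{t,0}^k=\vect{w}_{t-1}+\vect{e}_t^k$, so $\bar{\vect{w}}_{t,0}=\vect{w}_{t-1}+\bar{\vect{e}}_t$ with $\bar{\vect{e}}_t\triangleq\frac1N\sum_k\vect{e}_t^k$; and by \eqref{eqn:glbMT} the round ends with $\vect{w}_t=\bar{\vect{w}}_{t,E}+\bar{\vect{n}}_t$ where $\bar{\vect{n}}_t\triangleq\frac1N\sum_k\vect{n}_t^k$. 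Hence the whole proof reduces to showing that, under the scalings \eqref{eqn:thmMTfull_ul}--\eqref{eqn:thmMTfull_dl}, (i) the downlink perturbation of the common anchor, (ii) the extra inter-client spread it creates, and (iii) the uplink variance injected at aggregation are each of the same $\mathcal{O}(\eta_t^2)$ order as the terms that already drive noiseless \textsc{FedAvg}, so that the standard $\tfrac1{\gamma+t}$ induction survives with a merely enlarged constant.

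First I would establish the one-step descent inequality for a single local update inside round $t$. Using $L$-smoothness and $\mu$-strong convexity (Assumption~\ref{as:F}-1,2), unbiasedness and bounded variance of the mini-batch gradients (Assumption~\ref{as:F}-3), and the elementary $\norm{\cdot}^2$ splitting, one obtains, once $\eta_t\le\tfrac1{4L}$ (guaranteed by $\gamma\ge8\kappa$),
\begin{equation*}
\expt\norm{\bar{\vect{w}}_{t,\tau}-\vect{w}^*}^2\le(1-\eta_t\mu)\,\expt\norm{\bar{\vect{w}}_{t,\tau-1}-\vect{w}^*}^2+\eta_t^2\Big(\sum_{k}\tfrac{\delta_k^2}{N^2}+6L\Gamma\Big)+2\,\expt\Big[\tfrac1N\sum_k\norm{\bar{\vect{w}}_{t,\tau-1}-\vect{w}_{t,\tau-1}^k}^2\Big].
\end{equation*}
The only place the downlink noise enters this step is the divergence term $\tfrac1N\sum_k\expt\norm{\bar{\vect{w}}_{t,\tau}-\vect{w}_{t,\tau}^k}^2$: since $\vect{w}_{t,\tau}^k-\bar{\vect{w}}_{t,\tau}$ now carries the extra component $\vect{e}_t^k-\bar{\vect{e}}_t$, invoking bounded gradients (Assumption~\ref{as:F}-4) and the constant within-round step size gives a bound of the form $4(E-1)^2\eta_t^2H^2$ plus an extra term proportional to $\tfrac1N\sum_k\expt\norm{\vect{e}_t^k-\bar{\vect{e}}_t}^2$, and this new term is controlled purely by the second-moment assumption \eqref{eqn:dlnoisevar}.

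Next I would chain the $E$ steps of round $t$ and append the two boundary injections: the downlink perturbation adds a term of order $\expt\norm{\bar{\vect{e}}_t}^2=d\zeta_t^2/N^2$ to the squared distance at $\tau=0$, while the uplink perturbation adds $\expt\norm{\bar{\vect{n}}_t}^2=d\sigma_t^2/N^2$ at $\tau=E$ (the cross terms vanish because $\bar{\vect{e}}_t$ and $\bar{\vect{n}}_t$ are zero-mean and independent of the iterates they perturb). Substituting the SNR-control assumptions \eqref{eqn:thmMTfull_ul}--\eqref{eqn:thmMTfull_dl} --- whose denominators $(\gamma+t-1)^2$ and $(\gamma+t)(\gamma+t-2)$ are tuned precisely so that $d\sigma_t^2/N^2$ and $d\zeta_t^2/N^2$ are each of order $\eta_t^2 d$ and the divergence-inflation term is a higher-order perturbation --- collapses the round-$t$ analysis into the single clean recursion
\begin{equation*}
\expt\norm{\vect{w}_t-\vect{w}^*}^2\le(1-\eta_t\mu)\,\expt\norm{\vect{w}_{t-1}-\vect{w}^*}^2+\eta_t^2 D,\qquad D=\sum_{k}\tfrac{\delta_k^2}{N^2}+6L\Gamma+8(E-1)^2H^2+2d,
\end{equation*}
where the condition $\gamma\ge E$ is used to keep the accumulation over the $E$ local steps $\mathcal{O}(1)$.

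Finally, with $\eta_t=\tfrac2{\mu(\gamma+t)}$ so that $\eta_t\mu=\tfrac2{\gamma+t}\le1$ (since $\gamma\ge8\kappa\ge2$), I would close by induction on $t$, proving $\expt\norm{\vect{w}_t-\vect{w}^*}^2\le\tfrac{\nu}{\gamma+t}$ with $\nu=\tfrac{4D}{\mu^2}+(8\kappa+E)\norm{\vect{w}_0-\vect{w}^*}^2$. The base case holds because $\nu\ge(8\kappa+E)\norm{\vect{w}_0-\vect{w}^*}^2\ge\gamma\norm{\vect{w}_0-\vect{w}^*}^2$; the inductive step, after multiplying the recursion through by $\gamma+t$, reduces to $\nu\ge\tfrac{4D}{\mu^2}\cdot\tfrac{\gamma+t-1}{\gamma+t}$, which is implied by $\nu\ge\tfrac{4D}{\mu^2}$; taking $t=T$ yields \eqref{eqn:conv_res}. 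I expect the main obstacle to be exactly the downlink-noise bookkeeping: one has to propagate $\{\vect{e}_t^k\}$ through \emph{both} the averaged trajectory and the inter-client divergence using only a second-moment bound \eqref{eqn:dlnoisevar} (not an almost-sure bound), and then verify that the slightly unusual schedule \eqref{eqn:thmMTfull_dl} makes the divergence contribution genuinely lower-order, so that it neither enlarges the leading constant $D$ beyond the displayed $+2d$ nor breaks the $\tfrac1{\gamma+t}$ induction; a secondary nuisance is that the learning rate is held constant over a round rather than per local step, which forces the $E$ within-round SGD updates to be handled as a block instead of reusing a per-step recursion verbatim.
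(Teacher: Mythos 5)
Your proposal follows essentially the same route as the paper's proof: a virtual averaged sequence, the standard one-step SGD descent bound, injection of the uplink/downlink noise as zero-mean variance terms $d\sigma_t^2/N^2$ and $d\zeta_t^2/N^2$ tuned to be $\mathcal{O}(\eta_t^2 d)$, the recursion $\Delta_{t+1}\le(1-\eta_t\mu)\Delta_t+\eta_t^2 D$ with $D$ picking up the $+2d$ term, and the identical $\tfrac{1}{\gamma+t}$ induction; the paper merely organizes the bookkeeping on an expanded per-SGD-step timeline with a four-case analysis (whether $t$ and $t+1$ are communication steps) instead of your per-round chaining. The extra inter-client divergence caused by the client-specific downlink noise that you flag is exactly the term the paper absorbs without comment by asserting its one-step Lemma is unaffected by communication noise, so your treatment is, if anything, more explicit on that point.
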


% \begin{remark} \normalfont
% A complete proof of Theorem~\ref{thm:MTfull} can be found in Appendix \ref{app:proof_thm1}, but some important remarks regarding this proof are in order. The core technique utilized in Appendix \ref{app:proof_thm1} is the \emph{perturbed iterate framework} that was pioneered in \cite{mania2017siam}, especially the virtual sequence construction that have been widely adopted in the distributed SGD analysis \cite{stich2018local,li2019convergence,reisizadeh2019fedpaq,Zheng2020jsac}. The \textbf{unique challenge} of this proof, however, is how to handle \emph{simultaneous} uplink and downlink noises, which cannot be isolated from the SGD iterations. Not only do we have to incorporate more virtual sequences in the proof, but they also have the ``coupling'' effect in that downlink noise is present in the SGD steps and further in the new local model for uplink, while the uplink noise is present in the next-round downlink model. A careful manipulation of these coupled noise components in the various virtual sequences is a key analytical novelty of the proof.
% \end{remark}

% \congc{Add some interpretation of the effective noise control.}
A few remarks about Theorem~\ref{thm:MTfull} and its proof are now in order.

\begin{remark} \normalfont
A complete proof of Theorem~\ref{thm:MTfull} can be found in Appendix \ref{app:proof_thm1}. The core technique utilized in Appendix \ref{app:proof_thm1} is the \emph{perturbed iterate framework} that was pioneered in \cite{mania2017siam}, especially the virtual sequence construction that have been widely adopted in the distributed SGD analysis \cite{stich2018local,li2019convergence,reisizadeh2019fedpaq,Zheng2020jsac}. The {unique challenge} of this proof, however, is how to handle \emph{simultaneous} uplink and downlink noises, which cannot be isolated from the SGD iterations. Not only do we have to incorporate more virtual sequences in the proof, but they also have the ``coupling'' effect in that downlink noise is present in the SGD steps and further in the new local model for uplink, while the uplink noise is present in the next-round downlink model. A careful manipulation of these coupled noise components in the various virtual sequences is a key analytical novelty of the proof.
\end{remark}

\begin{remark} \normalfont
We make an important clarification that although the requirement of Theorem~\ref{thm:MTfull} is presented in terms of the effective noise power, what ultimately matters is the SNR defined in Section~\ref{sec:model_noise}. Controlling the effective noise power to scale as $\mathcal{O}(1/t^2)$ is equivalent to scaling the SNR as $\mathcal{O}(t^2)$, and can be implemented by either increasing the signal power (e.g., transmit power control) or reducing the post-processing noise power (e.g., receive diversity combining) while satisfying a fixed total resource budget constraint. We discuss design examples that realize the requirement of Theorem~\ref{thm:MTfull}  in Section~\ref{sec:design}. 
%There exist signal processing and communication techniques that can satisfy the requirement by either increasing the signal power (e.g., transmit power control) or reducing the post-processing noise power (e.g., diversity combining). We discuss design examples that realize the requirement of Theorem~\ref{thm:MTfull}  in Section~\ref{sec:design}. 
\end{remark}

\begin{remark} \normalfont
\congr{It is not surprising to see that Theorem~\ref{thm:MTfull} requires the SNR to increase, which gradually suppresses the noise effect as the FL process converges. There are, however, two unique characteristics about this theorem:
\begin{enumerate}[leftmargin=*]\itemsep=0pt
\item It characterizes a sufficient condition for the SNR \textbf{scaling law} as $\mathcal{O}(t^2)$. As we will see in Section~\ref{sec:experiment}, choosing a SNR scaling that is slower than $\mathcal{O}(t^2)$ degrades the FL performance.
\item This $\mathcal{O}(t^2)$ scaling law can be realized under a fixed total budget constraint. In other words, the benefit of Theorem~\ref{thm:MTfull} does not come from using more communication resources, but rather is due to a more judicious allocation (following the scaling law) of the same \zixiangb{resource budget}.
\end{enumerate}
}
\end{remark}

\begin{remark} \normalfont
Theorem~\ref{thm:MTfull} guarantees that even under {\em simultaneous} uplink and downlink noisy communications, the same $\mathcal{O}(1/T)$ convergence rate of \textsc{FedAvg} with {perfect} communications can be achieved {if we control the effective noise power of both uplink and downlink to scale at rate $\mathcal{O}(1/t^2)$ and choose the learning rate at $\mathcal{O}(1/t)$ over $t$}. We note that the choice of $\eta_t$ to scale as $\mathcal{O}(1/t)$ is well-known in distributed and federated learning \cite{stich2018local,wang2018cooperative,jiang2018nips,li2019convergence}, which essentially controls the ``SGD noise'' that is inherent to the {stochastic} process in SGD to gradually shrink as the FL process converges. \zixiangTCCN{We also note that for other learning rate choices in SGD, the fundamental insight of Theorem~\ref{thm:MTfull}, i.e.,  \emph{controlling the ``effective channel noise'' to not dominate the ``SGD noise''}, is still valid. We will investigate the convergence requirement that clients adopt different learning rates in future research.
}   %\congr{As we will see in Section~\ref{sec:experiment}, under a fixed total resource budget, communication designs that consume the total resource based on the $\mathcal{O}(1/t^2)$ effective noise power requirement can significantly outperform the equal consumption (i.e., $\mathcal{O}(1)$), and approach the ideal case of noise-free model accuracy and convergence.}
\end{remark}

\begin{remark} \normalfont
Lastly, we note that the scaling law in Theorem~\ref{thm:MTfull} should be viewed as an \emph{average} SNR requirement that changes over learning rounds. The time scale of changing the average SNR is on the order of learning rounds, which is much slower\footnote{This is particularly true when there are a large number of clients participating in the FL process, as the length of learning rounds is often dominated by the ``straggler'' \cite{reisizadeh2020straggler}.} than the time scale of the time-varying wireless channel. %As a result, it only relies on large-scale, stationary information of the wireless FL system. 
Furthermore, the SNR scaling law can be used in conjunction with other ``faster'' resource allocation mechanisms, such as inner-loop power control, to handle wireless dynamics under the average SNR budget decided from Theorem~\ref{thm:MTfull}. This will become clear in Section~\ref{sec:tpc}.
\end{remark}

% We note that, to the best of the authors' knowledge, there is no prior literature that simultaneously consider both uplink and downlink communication errors. It is thus of interest to recover the convergence result of uplink-only and downlink-only noise cases \cite{Zheng2020jsac}, as described in the following two corollaries.

%  \begin{coro}
%  \label{coro:1}
%  Under Assumption \ref{as:F} and the definition of $\phi$, $\gamma$ and $\sigma_t$ in Theorem \ref{thm:MTfull}, if communication errors only happen in uplink but not downlink, then the convergence upper bound \eqref{eqn:conv_res} with non-IID local datasets, partial clients participation and uplink SNR control policy {\color{blue}$\ssf{SNR}_{t}^G \geq \frac{K\mu^2(\gamma+t)^2}{4}$} still holds as in Theorem \ref{thm:MTfull}, with only the change of $D$ being replaced by %\congc{this looks exactly the same as the previous D?}
%          % \begin{equation*}
%           $ D = \frac{\delta_k^2}{N^2} + 6L \Gamma + 8(E-1)^2H^2 + \frac{N-K}{N-1}\frac{4}{K} E^2H^2 + \frac{d}{K^2}.$
%          % \end{equation*}
%  \end{coro}
%  \begin{coro}
%  \label{coro:2}
%  When we only consider the downlink communication noise, we would have the same conclusion as in Corollary \ref{coro:1} with the downlink SNR control policy $\sum_{k \in \mathcal{S}_t} \frac{1}{\ssf{SNR}_{t,k}^{\text{L}}} \leq \frac{4}{\mu^2(\gamma+t)(\gamma+t-2)}$ and $D$ being replaced by
%      %  \begin{equation*}
%           $ D = \frac{\delta_k^2}{N^2} + 6L \Gamma + 8(E-1)^2H^2 + \frac{N-K}{N-1}\frac{4}{K} E^2H^2 + \frac{d}{N^2}.$
%          % \end{equation*}
%  \end{coro}

\subsection{Convergence Analysis for Model Transmission for Partial Clients Participation}
\label{sec:conv_MT_part}

We now generalize the convergence analysis for full clients participation to \emph{partial} clients participation, where we have a given $K<N$ and uniformly randomly select a set of clients $\mathcal{S}_{t}$ at round $t$ to carry out the FL process. In this section, we mostly follow the FL system model described in Section~\ref{sec:model_noise}, with the only simplification that we consider \emph{homogeneous} noise power levels at the uplink and downlink, i.e., we assume 
\begin{equation}
\label{MT_pt_noise}
    \sigma_{t,k}^2 = \bar{\sigma}_t^2, \quad \text{and} \quad \zeta_{t,k}^2 = \bar{\zeta}_t^2, \quad \forall t \in [T], k \in [N].
\end{equation}
The main reason to introduce this simplification is due to the time-varying randomly participating clients: since $\mathcal{S}_{t}$ changes over $t$, the total power levels also vary over $t$ if we insist on heterogeneous noise power for different clients. Furthermore, since clients are randomly selected, the  total power level becomes a random variable as well, which significantly complicates the convergence analysis. Making this assumption would allow us to focus on the challenge with respect to the model update from {partial} clients participation.

\begin{theorem}
\label{thm:MTpart}
    Let $\phi$, $\gamma$ and $\eta_t$ be the same as in Theorem~\ref{thm:MTfull}. Adopt a SNR control policy that scales the effective uplink and downlink noise power over $t$ such that: 
    \begin{align}
    \bar{\sigma}_t^2 &\leq \frac{4K}{\mu^2(\gamma+t-1)^2} \sim \mathcal{O}\left(\frac{1}{t^2} \right)  \label{eqn:thmMTpart_ul} \\
    \bar{\zeta}_t^2 &\leq \frac{4N}{\mu^2(\gamma+t)(\gamma+t-2)}\sim\mathcal{O}\left(\frac{1}{t^2}\right). \label{eqn:thmMTpart_dl}
    \end{align}
    where $\bar{\sigma}_t^2$ and $\bar{\zeta}_t^2$ represent the individual client effective noise in the uplink and downlink, respectively, which are defined in \eqref{MT_pt_noise}. 
    Then, under Assumption \ref{as:F}, the convergence of \textsc{FedAvg} with non-IID datasets and partial clients participation has the same convergence rate expression as \eqref{eqn:conv_res}, with $D$ being replaced as $D = \sum_{k=1}^N {\delta_k^2}/{N^2} + {4(N-K)E^2 H^2}/(K(N-1)) +  6L \Gamma + 8(E-1)^2 H^2 + 2d$.
    % \begin{equation*}
    %       D = \sum_{k=1}^N \frac{\delta_k^2}{N^2} + 6L \Gamma + 8(E-1)^2 H^2 + \frac{N-K}{N-1} \frac{4}{K} E^2 H^2 + 2d.
    % \end{equation*}
\end{theorem}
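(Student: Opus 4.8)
The plan is to reuse the machinery developed for Theorem~\ref{thm:MTfull} and graft onto it the extra accounting that partial client sampling demands, in the spirit of the partial-participation analysis of \cite{li2019convergence}. The first step is to introduce a \emph{virtual} fully-participated iterate: at round $t$, imagine that \emph{all} $N$ clients download $\vect{w}_{t-1}$ through IID downlink noise $\{\vect{e}_t^k\}_{k\in[N]}$, run the $E$-step local SGD to obtain $\{\vect{w}_t^k\}_{k\in[N]}$, and set $\bar{\vect{w}}_t \triangleq \tfrac1N\sum_{k\in[N]} \vect{w}_t^k$. Since $\mathcal{S}_t$ is drawn uniformly without replacement and is independent of the local updates and of both noise processes, $\expt_{\mathcal{S}_t}\big[\tfrac1K\sum_{k\in\mathcal{S}_t}\vect{w}_t^k\big]=\bar{\vect{w}}_t$; together with the fact that $\{\vect{n}_t^k\}$ is zero-mean and independent of everything else, all cross terms vanish and
\begin{equation*}
\expt\norm{\vect{w}_t-\vect{w}^*}^2 \;=\; \expt\norm{\bar{\vect{w}}_t-\vect{w}^*}^2 \;+\; \expt\norm{\tfrac1K\sum_{k\in\mathcal{S}_t}\vect{w}_t^k-\bar{\vect{w}}_t}^2 \;+\; \expt\norm{\tfrac1K\sum_{k\in\mathcal{S}_t}\vect{n}_t^k}^2 .
\end{equation*}
The homogeneous-noise assumption \eqref{MT_pt_noise} is precisely what keeps this clean: the aggregate uplink and downlink noise powers stay deterministic, so the random selection $\mathcal{S}_t$ does not inject an additional layer of randomness into the recursion.

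Next I would bound the three terms. The third is immediate: by independence, $\expt\norm{\tfrac1K\sum_{k\in\mathcal{S}_t}\vect{n}_t^k}^2 = \tfrac{d\bar{\sigma}_t^2}{K}$, which by \eqref{eqn:thmMTpart_ul} is at most $\tfrac{4d}{\mu^2(\gamma+t-1)^2}$ --- the same uplink contribution already absorbed in Theorem~\ref{thm:MTfull} (the factor $K$ in \eqref{eqn:thmMTpart_ul} is tuned to cancel here). The first term, $\expt\norm{\bar{\vect{w}}_t-\vect{w}^*}^2$, is exactly a \emph{fully-participated} noisy-\textsc{FedAvg} quantity, so it inherits the analysis of Theorem~\ref{thm:MTfull}: the within-round one-step contraction $\norm{\bar{\vect{w}}_{t,\tau}-\vect{w}^*}^2\le(1-\mu\eta_t)\norm{\bar{\vect{w}}_{t,\tau-1}-\vect{w}^*}^2+\eta_t^2(\cdots)$ from $\mu$-strong convexity and $L$-smoothness, whose residual absorbs the SGD-variance term $\sum_{k}\delta_k^2/N^2$, the non-IID term $6L\Gamma$ and the local-drift term $8(E-1)^2H^2$ (all from Assumption~\ref{as:F}); plus the downlink noise injected at the round boundary via $\bar{\vect{w}}_{t,0}=\vect{w}_{t-1}+\bar{\vect{e}}_t$, which (cross terms again vanishing) adds $\expt\norm{\bar{\vect{e}}_t}^2=\tfrac{d\bar{\zeta}_t^2}{N}\le\tfrac{4d}{\mu^2(\gamma+t)(\gamma+t-2)}$ by \eqref{eqn:thmMTpart_dl}. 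The uplink and downlink noise pieces together contribute the $2d$ inside $D$.

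The genuinely new piece is the second term, the sampling variance. Using the variance identity for sampling $K$ out of $N$ without replacement, $\expt\norm{\tfrac1K\sum_{k\in\mathcal{S}_t}\vect{w}_t^k-\bar{\vect{w}}_t}^2\le\tfrac{N-K}{(N-1)K}\cdot\tfrac1N\sum_{k\in[N]}\expt\norm{\vect{w}_t^k-\bar{\vect{w}}_t}^2$, and then controlling the per-client deviation: writing $\vect{w}_t^k-\bar{\vect{w}}_t=(\vect{e}_t^k-\bar{\vect{e}}_t)-\eta_t\sum_{\tau=0}^{E-1}\big(\nabla F_k(\vect{w}_{t,\tau}^k,\xi_{\tau+1}^k)-(\text{average over }k)\big)$, the gradient-accumulation part is at most $4\eta_t^2E^2H^2$ in squared norm by the uniformly-bounded-gradient Assumption~\ref{as:F}-4), and the downlink-deviation part contributes $O(d\bar{\zeta}_t^2)=O(\eta_t^2 d)$. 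This yields the extra term $\tfrac{N-K}{N-1}\tfrac{4}{K}E^2H^2$ in $D$, with the $O(\eta_t^2 d)$ remainder folded into $2d$. Finally I would collect the three bounds into a recursion of the form $a_t\le(1-\mu\eta_t)\,a_{t-1}+\eta_t^2 D$ with $a_t:=\expt\norm{\vect{w}_t-\vect{w}^*}^2$, as in the proof of Theorem~\ref{thm:MTfull}, and finish by the standard induction $a_t\le\tfrac{v}{\gamma+t}$; the seemingly odd product denominators in \eqref{eqn:thmMTpart_ul}--\eqref{eqn:thmMTpart_dl} are chosen exactly so that the channel-noise terms line up with the pre-existing $\eta_t^2$ terms and the telescoping goes through verbatim. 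I expect the main obstacle to be the careful conditioning that makes every cross term vanish --- the four independent sources of randomness (client selection $\mathcal{S}_t$, downlink noise, uplink noise, and SGD mini-batches) must be peeled off in the right order --- together with the bound on $\expt\norm{\vect{w}_t^k-\bar{\vect{w}}_t}^2$, which is where partial participation genuinely departs from the full-participation argument.
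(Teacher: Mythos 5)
Your proposal follows essentially the same route as the paper's proof: a virtual process in which all $N$ clients receive the noisy broadcast and update locally, unbiasedness of the uniform-without-replacement sampling and of the zero-mean uplink/downlink noise to kill all cross terms, the sampling-variance bound $\frac{N-K}{N-1}\frac{4}{K}\eta_t^2E^2H^2$ (the paper's Lemma~\ref{lemma:sample}), the noise contributions $\frac{d\bar{\sigma}_t^2}{K}$ and $\frac{d\bar{\zeta}_t^2}{N}$, and the same $(1-\eta_t\mu)$-recursion plus induction as in Theorem~\ref{thm:MTfull}. The only differences are cosmetic bookkeeping: you package the argument per round rather than over the extended SGD timeline with the paper's four cases, and you fold a small extra downlink-deviation remainder into the constant, which the paper's omitted-proof sampling lemma does not carry.
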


The proof of Theorem \ref{thm:MTpart} is given in Appendix \ref{app:proof_thm_MTpart}. We can see that partial clients participation does not fundamentally change the behavior of FL in the presence of uplink and downlink communication noises. %, and the requirement for SNR control remains largely the same. 
\zixiangTCCN{However, unlike full client participation, the uplink effective noise depends on the number of active users, which makes it harder to satisfy \eqref{eqn:thmMTpart_ul} compared with \eqref{eqn:thmMTpart_dl}. The reason behind this difference is that, in partial client participation, the downlink process remains the same as the fully client participation, while the number of participants in the uplink process reduces from $N$ to $K$. Therefore, the effective uplink noise can only be controlled by $K$ rather than $N$ participants, which implies that each user needs to allocate more transmission power than the fully client participation case to achieve the desired noise-free convergence rate of FL. We will provide a practical example to handle this tighter upper bound in Section \ref{sec:diversity}.
}

% \subsection{Convergence Analysis for \textsc{FedAvg} in Model Differential Transmissions}
\subsection{Convergence Analysis for Model Differential Transmission}
\label{sec:conv_MDT}

In this section, we consider the model different transmission (MDT) scheme when the clients upload model parameters. Since only model differential is transmitted, the receiver must possess a copy of the ``base'' model to reconstruct the updated model. This precludes using MDT in the downlink for partial clients participation, because participating clients differ from round to round, and a newly participating client does not have the ``base'' model of the previous round to reconstruct the new global model. We thus only focus on MDT in the uplink and MT in the downlink with partial clients participation. %The extension for MDT and full client participation can be similarly obtained using the techniques of this section and Section~\ref{sec:conv_MT_full}.

%We will first give the convergence analysis of \textsc{FedAvg} to show that we no longer require the communication error variance decay as in direct transmissions which directly guide the following system designs. To ease the exposition, we only give the convergence analysis of MDT for partial clients participation, with the same effective noise power assumptions in Section~\ref{sec:conv_MT_part}. The extension for MDT and full client participation can be similarly obtained using the techniques of this section and Section~\ref{sec:conv_MT_full}.

\begin{theorem}
\label{thm:MDTpart}
    Let $\phi$, $\gamma$ and $\eta_t$ be the same as in Theorem~\ref{thm:MTfull}, and the effective noise follows \eqref{MT_pt_noise}. Adopt a SNR control policy that maintains a {constant} uplink SNR at each client over $t$:
    \begin{equation}
    \ssf{SNR}_{t,k}^\text{S,MDT} = \nu \sim\mathcal{O}\left(1\right), \label{eqn:thmMDTpart_ul}
    \end{equation}    
    and scales the effective downlink noise power at each client over $t$ such that:
    \begin{equation}
    \label{eqn:thmMDTpart_dl}
    \bar{\zeta}_t^2 \leq \frac{\frac{4}{\mu^2}}{  \frac{1}{N} (\gamma+t)(\gamma+t-2) + \frac{1}{K} \left( 1+\frac{1}{\nu} \right) (\gamma+t)^2 }   \sim\mathcal{O}\left(\frac{1}{t^2}\right).
    % \bar{\zeta}_t^2 \leq \frac{4}{K\mu^2 \left( (\gamma+t)(\gamma+t-2) + N \left( 1+\frac{1}{\nu} \right) (\gamma+t)^2 \right)}   \sim\mathcal{O}\left(\frac{1}{t^2}\right). 
    \end{equation}
    %where $\bar{\sigma}_t^2$ and $\bar{\zeta}_t^2$ represent the individual client effective noise in the uplink and downlink, respectively, which are defined in \eqref{MT_pt_noise}. 
    Then, under Assumption \ref{as:F}, the convergence of \textsc{FedAvg} with non-IID datasets and partial clients participation for uplink MDT and downlink MT has the same convergence rate expression as \eqref{eqn:conv_res}, with $D$ being replaced as $D = \sum_{k=1}^N {\delta_k^2}/{N^2} + {4(N-K)E^2 H^2}/(K(N-1))  + {4E^2H^2}/(K\nu) + 6L \Gamma + 8(E-1)^2 H^2  + d$.
    % \begin{equation*}
    %       D = \sum_{k=1}^N\frac{\delta_k^2}{N^2} + 6L \Gamma + 8(E-1)^2 H^2 + \frac{N-K}{N-1} \frac{4}{K} E^2 H^2 + \frac{4E^2}{K\nu} H^2 + d.
    % \end{equation*}
\end{theorem}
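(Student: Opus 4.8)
The plan is to adapt the proof of Theorem~\ref{thm:MTpart} (partial participation, MT) to the MDT setting, where the key structural difference is visible in the aggregation rule~\eqref{eqn:glbMDT}: the global model now carries the combined noise term $\frac{1}{K}\sum_{k\in\mathcal{S}_t}(\vect{n}_t^k-\vect{e}_t^k)$ rather than $\frac{1}{K}\sum_{k\in\mathcal{S}_t}\vect{n}_t^k$. I would set up the usual virtual-sequence argument: define $\bar{\vect{w}}_{t,\tau} = \frac{1}{K}\sum_{k\in\mathcal{S}_t}\vect{w}_{t,\tau}^k$ as the (conceptual) average of the local SGD iterates, track $\expt\norm{\bar{\vect{w}}_{t,\tau}-\vect{w}^*}^2$ through the $E$ local steps using $L$-smoothness, $\mu$-strong convexity, the unbiased/bounded-variance assumption on the stochastic gradients (Assumption~\ref{as:F}-3), the uniformly bounded gradient (Assumption~\ref{as:F}-4), and the non-IID constant $\Gamma$, exactly as in the reference analyses of \textsc{FedAvg}. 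This produces the standard per-round contraction $\expt\norm{\bar{\vect{w}}_t-\vect{w}^*}^2 \le (1-\mu\eta_t)\expt\norm{\vect{w}_{t-1}-\vect{w}^*}^2 + \eta_t^2 B$ for a bias/variance constant $B$ gathering the $\delta_k^2$, $\Gamma$, and $(E-1)^2H^2$ terms, plus a partial-participation sampling-variance term $\frac{N-K}{N-1}\frac{4}{K}E^2H^2\eta_t^2$ coming from $\expt\norm{\vect{w}_t - \bar{\vect{w}}_t}^2$ under uniform sampling without replacement (the same term that appears in Theorem~\ref{thm:MTpart}).

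The new ingredient is the channel-noise term. Writing $\vect{w}_t = \bar{\vect{w}}_t + \frac{1}{K}\sum_{k\in\mathcal{S}_t}(\vect{n}_t^k-\vect{e}_t^k)$ and expanding $\expt\norm{\vect{w}_t-\vect{w}^*}^2$, the cross term vanishes because both $\vect{n}_t^k$ and $\vect{e}_t^k$ are zero-mean and independent of the iterates, leaving the extra additive contribution
\begin{equation*}
\expt\Big\|\tfrac{1}{K}\textstyle\sum_{k\in\mathcal{S}_t}(\vect{n}_t^k-\vect{e}_t^k)\Big\|^2 = \tfrac{1}{K^2}\sum_{k\in\mathcal{S}_t}\big(\expt\norm{\vect{n}_t^k}^2 + \expt\norm{\vect{e}_t^k}^2\big) = \tfrac{d}{K}(\bar\sigma_t^2 + \bar\zeta_t^2),
\end{equation*}
using independence across clients and~\eqref{eqn:ulnoisevar},~\eqref{eqn:dlnoisevar},~\eqref{MT_pt_noise}. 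Here is where the constant-SNR uplink condition~\eqref{eqn:thmMDTpart_ul} enters: since $\ssf{SNR}_{t,k}^{\text{S,MDT}} = \expt\norm{\vect{d}_t^k}^2/(d\bar\sigma_t^2) = \nu$, we get $d\bar\sigma_t^2 = \expt\norm{\vect{d}_t^k}^2/\nu$, and the model differential $\vect{d}_t^k = \vect{w}_t^k - \vect{\hat w}_{t-1}^k = -\eta_t\sum_{\tau=1}^E \nabla F_k(\vect{w}_{t,\tau-1}^k,\xi_\tau^k)$ has norm bounded via Assumption~\ref{as:F}-4 and Jensen by $\expt\norm{\vect{d}_t^k}^2 \le \eta_t^2 E^2 H^2$. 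Hence the uplink noise contributes at most $\frac{\eta_t^2 E^2 H^2}{K\nu}$ per round — an $\mathcal{O}(\eta_t^2)$ term that folds directly into $B$, giving the $\frac{4E^2}{K\nu}H^2$ summand in $D$ (the factor $4$ coming from $\eta_t^2 = 4/(\mu^2(\gamma+t)^2)$ relative to the $1/(\gamma+t)$ normalization). This is precisely why MDT tolerates constant SNR whereas MT needs SNR scaling as $\mathcal{O}(t^2)$: the uplink signal being transmitted shrinks like $\eta_t$, so a constant SNR already yields noise of order $\eta_t^2$.

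The downlink noise, by contrast, still multiplies a full-size model, contributing $\frac{d\bar\zeta_t^2}{K}$; combining this with the $\frac{d\bar\zeta_t^2}{N}$-type term that arises because the downlink noise from round $t$ also propagates through the local training into round $t+1$ (so it must be controlled against both the $(\gamma+t)(\gamma+t-2)$ and the $(\gamma+t)^2$ scalings, the latter inflated by the $(1+1/\nu)$ factor that tracks how uplink noise amplifies the effect), one sees exactly why the bound~\eqref{eqn:thmMDTpart_dl} takes the stated reciprocal form: it is the largest $\bar\zeta_t^2$ for which the total downlink contribution stays $\le \eta_t^2\cdot d = \frac{4d}{\mu^2(\gamma+t)^2}$, matching the residual $d$ term in $D$. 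With $B$ thus assembled, the final step is the standard induction: assuming $\eta_t = \frac{2}{\mu(\gamma+t)}$, one verifies by induction on $t$ that $\expt\norm{\vect{w}_t-\vect{w}^*}^2 \le \frac{v}{\gamma+t}$ with $v = \max\{\frac{4B}{\mu^2},(\gamma)\norm{\vect{w}_0-\vect{w}^*}^2\}$, using $(1-\mu\eta_t) = \frac{\gamma+t-2}{\gamma+t}$ and the elementary inequality $\frac{\gamma+t-2}{(\gamma+t)^2}\le\frac{1}{\gamma+t+1}\le\frac{1}{\gamma+t}\cdot\frac{\gamma+t-1}{\gamma+t}$, to close the recursion and land on~\eqref{eqn:conv_res}. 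The main obstacle is the bookkeeping in the previous paragraph: correctly accounting for how a single round's downlink noise instance enters the error recursion twice — once directly in $\vect{w}_t$ and once indirectly through its effect on the next round's local SGD trajectory and hence on the next round's model differential (which is where the $1/\nu$ coupling appears) — and assembling these into the single clean bound~\eqref{eqn:thmMDTpart_dl}; the rest is a routine adaptation of Theorems~\ref{thm:MTfull} and~\ref{thm:MTpart}.
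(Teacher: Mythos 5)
Your proposal is correct and follows the paper's overall skeleton -- reuse the Theorem~\ref{thm:MTpart} machinery (virtual sequences, the one-step SGD contraction of Lemma~\ref{lemma:sgd}, the sampling term $\frac{N-K}{N-1}\frac{4}{K}E^2H^2$ from Lemma~\ref{lemma:sample}), treat the MDT aggregation noise $\frac{1}{K}\sum_{k\in\mathcal{S}_t}(\vect{n}_t^k-\vect{e}_t^k)$ as an extra additive variance, convert the constant uplink SNR into an effective noise power proportional to $\expt\norm{\vect{d}_t^k}^2$, and close with the usual $\Delta_{t+1}\le(1-\eta_t\mu)\Delta_t+\eta_t^2 D$ induction -- but it diverges in one genuine, and in fact cleaner, technical step. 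You bound the differential directly: since $\vect{d}_t^k=\vect{w}_t^k-\vect{\hat w}_{t-1}^k$ is exactly the accumulated local SGD updates, $\expt\norm{\vect{d}_t^k}^2\le\eta_t^2E^2H^2$, so the uplink noise contributes a pure $\mathcal{O}(\eta_t^2)$ term $\frac{\eta_t^2E^2H^2}{K\nu}$ with no downlink coupling. The paper's Lemma~\ref{lemma:unbiased2} instead decomposes $\vect{d}_{t+1}^k$ against the \emph{noiseless} global model $\vect{w}_{t+1-E}$, which inserts an extra $dK\bar{\zeta}_{t+1-E}^2$ into the differential power; that is the true origin of the $\left(1+\frac{1}{\nu}\right)$ factor in \eqref{eqn:thmMDTpart_dl} and of the constant $4$ in $\frac{4E^2}{K\nu}H^2$ (it comes from $\eta_\tau\le 2\eta_{\tau+E}$ in the step-indexed timeline, not from $\eta_t^2=4/\mu^2(\gamma+t)^2$ as you claim, and not from ``uplink noise amplifying propagated downlink noise'' as in your last paragraph -- that narrative is inconsistent with your own uplink bound and should be dropped). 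None of this invalidates your route: the hypotheses \eqref{eqn:thmMDTpart_ul}--\eqref{eqn:thmMDTpart_dl} are given, your accounting only needs the weaker condition $(1-\eta_t\mu)\frac{d}{N}\bar{\zeta}_t^2+\frac{d}{K}\bar{\zeta}_t^2\le\eta_t^2 d$, which \eqref{eqn:thmMDTpart_dl} implies, and your resulting constant is dominated by the stated $D$, so the claimed bound follows (with a slightly smaller constant than the paper obtains). One caveat applies equally to you and to the paper: the cross term between the round-$t$ downlink noise $\vect{e}_t^k$ and the aggregated local models is asserted to vanish by zero-mean arguments, yet $\vect{w}_t^k$ is trained from $\vect{w}_{t-1}+\vect{e}_t^k$ and is therefore not independent of $\vect{e}_t^k$; a fully rigorous treatment would bound this correlation, but your proposal is at the same level of rigor as the paper's own proof on this point.
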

% \begin{theorem}
% \label{thm:MDTpart}
%     % Let Assumption \ref{as:F} and \ref{as:w_bounded} hold and 
%     Define $\phi = \frac{L}{\mu}$, $\gamma = \max\{8\phi, E\}$. Choose learning rate $\eta_t=\frac{2}{\mu(\gamma+t)}$ and downlink SNR control policy  $\sum_{k\in \mathcal{S}_t} \frac{1}{\ssf{SNR^L}_{t,k}} = \frac{\mu^2(\gamma+t)^2}{4}$. Then, under Assumptions \ref{as:F} and a fixed uplink $\ssf{SNR^G}_t = \frac{K^2}{\sigma^2}$, the convergence of \textsc{FedAvg} with uplink differential transmissions, non-i.i.d. local datasets and partial clients participation satisfies
%         \begin{equation}
%         \label{eqn:conv_res2}
%             \expt \squab{F(\vect{w}_T)} - F^* \leq \frac{2\phi}{\gamma+T} \squab{\frac{D}{\mu} + \left ( 2L+\frac{E\mu}{4} \right ) \norm{\vect{w}_0 - \vect{w}^*}^2},
%         \end{equation}
%     where the constant $D$ is
%         \begin{equation*}
%           D=\frac{\delta_k^2}{N^2} + 6L \Gamma + 8(E-1)^2 H^2 + \frac{N-K}{N-1} \frac{4}{K} E^2 H^2 + \frac{4d\sigma^2 E^2}{K}  H^2 + \frac{d}{N^2} + \frac{4d^2\sigma^2 + 4d}{K^2}.
%         \end{equation*}
% \end{theorem}
The complete proof of {Theorem \ref{thm:MDTpart}} can be found in Appendix \ref{app:proof_thmMDTpart}. 
It is instrumental to note that unlike direct model transmission, only transmitting model differentials in the uplink allows us to remove the corresponding SNR scaling requirement. Instead, one can keep a \emph{constant} SNR in uplink throughout the entire FL process. Intuitively, this is because the ``scaling'' already takes place in the model differential $\vect{d}_t^k$, which is the difference between the updated local model at client $k$ after $E$ epochs of training and the starting local model. As FL gradually converges, this differential becomes smaller. Thus, by keeping a constant communication SNR, we essentially scales down the effective noise power at the server.

\congr{Lastly, we note that the constant SNR requirement of Theorem \ref{thm:MDTpart} enables very simple implementation given the MDT SNR expression in \eqref{eqn:MDTulSNR}. The signal power in the numerator of \eqref{eqn:MDTulSNR} is unknown and varying over learning rounds. However, a constant SNR requirement means one can fix the transmit power and ``scales'' individual $\vect{d}_t^k$ to have the desired power, without prior knowledge of its true variance.}

% \begin{remark}
% Unlike direct transmission scheme, differential transmissions do not require the decrease of communication error variance for the parameters in uplink communications during the convergence, which greatly reduce the consumption of the uplink communication resources. Since we use normalization before the transmissions, we will always have $\expt\norm{d_{t,i}^k}^2 = 1\;\;\forall t,i,k$. Howver, the exact differential weights $\vect{d}_t^k$ diminish as the convergence, by using the normalization and de-normalization duiring the communication, we actually obtain a increased effective SNR under a fixed communication error variance.
% \end{remark}

\section{Communication Design Examples for FL in Noisy Channels}
\label{sec:design}

An immediate engineering question following the previous analyses is how we can realize the {effective noise power} (or equivalently the SNR) specified in the theorems. A natural approach is \emph{transmit power control}, which has the flexibility of controlling the average receive SNR (and thus the effective noise power) while satisfying a total power constraint. Specially, for a FL task with $T$ total communication rounds and a given total power budget of $P$ over all rounds, it is straightforward to compute that 
\begin{equation} \label{eqn:pwralloc1}
    P_t = 6Pt^2/(T(T+1)(2T+1)), \forall t=1, \cdots, T,
\end{equation}
where $P_t$ is the desired average transmit power of the communication round $t$. 

Since we consider analog communication and aggregation for FL, we also need to take the wireless channel fading into account. To combat the influence of channel fading on received power, we now propose two design examples to demonstrate how the proposed $\mathcal{O}(t^2)$-power increased strategy could be adopted in both continuous and discrete average power allocation schemes.

% \zixiangtcom{
% As noted in Theorem \ref{thm:MTfull} and Theorem \ref{thm:MTpart}, for a FL task with fixed power budget $P$, intuitively allocating equal power to each communication round is sub-optimal and will lead to divergence. To control the ``effective channel noise'' not to dominate the ``SGD noise'', we should control the transmit power of FL to increase at the rate of $\mathcal{O}(t^2)$ with the communication round $t$. By this way, the receive SNR is increased and the effective noise of the signal is decreased with the progress of FL. Specially, for a FL task with  $T$ total communication rounds and total uplink/downlink power budget $P$, it is straightforward to compute that $$P_t = 6Pt^2/(T(T+1)(2T+1)), \forall t=1, \cdots, T,$$
% where $P_t$ is the desired transmit power of the communication round $t$. We note that in practical machine learning tasks, the total training round $T$ is usually known, therefore, it is not challenging to calculate $P_t$ in prior.

% The above power allocation strategy only consider the noise level required for the SGD convergence. Since we consider the analog aggregation for FL, we also need to take wireless channel fading into account. To combat the influence of channel fading on received power, we next propose two design examples to demonstrate how the proposed $\mathcal{O}(t^2)$-power increased strategy could be adopted in both continuous and discrete power allocation schemes.}

\subsection{Design Example I: Transmit Power Control for Analog Aggregation}
\label{sec:tpc}
% we next take the analog aggregation framework in \cite{zhu2019broadband} as an example to demonstrate the wireless communication system design specific for FL tasks in presence of communication errors.
% Based on the convergence analysis in Theorem \ref{thm:MTfull}, we next introduce the power control policy to the analog aggregation framework in order to ensure the convergence of \textsc{FedAvg}. Consider a communication system where several narrowband orthogonal channels (e.g. sub-carriers in OFDM systems, time slots in TDMA systems, or eigenchannels in MIMO) are shared by $K$ random selected clients in uplink scenario during a communication round in \textsc{FedAvg}. Each element in the transmitted model $\vect{w}\in \mathbb{R}^d$ is allocated and transmitted in a narrowband channel and aggregated automatically over the air. Denote the received signal of each element $i \;\; \forall i = 1 \cdots d$ in the $t$-th communication round as 

% An immediate engineering question following the theoretical analysis is how we can realize the {effective noise power} (or equivalently the SNR) specified in the theorems. A natural approach is \emph{transmit power control}, which has the flexibility of controlling the average receive SNR (and thus the effective noise power) while satisfying a total power constraint. 
We first design a power control policy for the analog aggregation FL framework in \cite{zhu2019broadband,amiri2020federated,amiri2020sp}, as an example to demonstrate the system design for FL tasks in the presence of communication noise. 

\mypara{The analog aggregation method in \cite{zhu2019broadband,amiri2020federated,amiri2020sp}.} Consider a communication system where several narrowband orthogonal channels \zixiangTCCN{(e.g., sub-carriers in orthogonal frequency-division multiplexing (OFDM), time slots in time division multiple access (TDMA))} are shared by $K$ random selected clients in an uplink model upload phase of a communication round. Each element in the transmitted model $\vect{w}\in \mathbb{R}^d$ is allocated and transmitted in a narrowband channel and aggregated automatically over the air. Denote the received signal of each element $i = 1, \cdots, d$ in the $t$-th communication round as
\begin{equation*}
    y_{t,i} = \frac{1}{K}\sum_{k \in \mathcal{S}_{t}} r_{t,k}^{-\alpha/2} h_{t,k,i} \sqrt{p_{t,k,i}} w_{t,k,i} + n_{t,i} \;\;\forall k \in \mathcal{S}_{t}, 
\end{equation*}
where $r_{t,k}^{-\alpha/2}$ and $h_{t,k,i} \zixiangTCCN{ \sim\mathcal{CN}(0,1)}$ are the large-scale and small-scale fading coefficients of the channel, respectively, \zixiangTCCN{$n_{t,i} \sim\mathcal{CN}(0,1)$} is the additive Gaussian white noise of the channel, and $p_{t,k,i}$ denotes the transmit power determined by the power control policy. We assume perfect channel state information at the transmitters (CSIT). Due to the aggregation requirement of federated learning, the channel inversion rule is used in \cite{zhu2019broadband}, which leads to the following \textit{instantaneous} transmit power of user $k$ at time $t$ for model weight element $i$:
\begin{equation}\label{eqn:channelinversion}
    % p_{t,k,i} = \frac{\sqrt{\rho_{t}^{\text{UL}}}}{r_{t,k}^{-\alpha/2} h_{t,k,i}},
    p_{t,k,i} = \frac{{\rho_{t}^{\text{UL}}}}{r_{t,k}^{-\alpha} |h_{t,k,i}|^2},
\end{equation}
where $\rho_{t}^{\text{UL}}$ is a scalar that denotes the uplink average transmit power, which is to be optimized. Hence, the receive SNR of the global model can be written as
\begin{equation}\label{eqn:snrg_tpc}
    \ssf{SNR}_{t}^G = \mathbb{E}\norm{\frac{1}{K} \sum_{i = 1}^d\frac{\sqrt{\rho_{t}^{\text{UL}}}\sum_{k \in \mathcal{S}_{t}} w_{t,k,i}}{n_{t,i}}}^2 = \frac{\rho_{t}^{\text{UL}}\expt||\sum_{k \in \mathcal{S}_{t}}\vect{w}_{t}^k||^2 }{dK^2}.
\end{equation}

\mypara{Transmit power control.} The original analog aggregation framework in \cite{zhu2019broadband} assumes that $\rho_t^{\text{UL}}$ is a constant over time $t$. However, our theoretical analysis in Section~\ref{sec:convergence} suggests that this can be improved.  
% However, according to the Theorem \ref{thm:MTfull}, $\rho_t^{\text{UL}}$ can be refined. 
Specifically, if we take partial clients participation and MT as an example, and further assume IID weight elements, we have
% compare \eqref{eqn:snrg_tpc} with the global SNR definition in \eqref{eqn:snrgt1}, we have
\zixiangTCCN{
\begin{equation}
\label{eqn:ulpwr}
    \rho_{t}^{\text{UL}} = \frac{K}{\bar \sigma_t^2} \geq \frac{\mu^2(\gamma + t -1)^2}{4} \sim \mathcal{O}(t^2),
\end{equation}}
% \begin{equation*}
%     \rho_{t}^{\text{UL}} = \frac{K^2}{\sigma_t^2} \geq \frac{K^2}{4N^2}\mu^2(\gamma + t -1)^2 \sim \mathcal{O}(t^2),
% \end{equation*}
by plugging in Theorem~\ref{thm:MTpart}, which implies that $\rho_t$ should be increased at the rate $\mathcal{O}(t^2)$ in the uplink to ensure the convergence of \textsc{FedAvg}. Similar policy can be derived for MDT and/or full clients participation, by invoking the corresponding theorems. 

In the downlink case, when the server broadcasts the global model to $K$ randomly selected clients, the receive signal of the $i$-th element for the $n$-th user in the $t$-th communication round is
% \begin{equation}
%     y_{t,k,i} = p_{t,k,i}\left[r_{t,k}^{-\alpha/2} h_{t,k,i} \sqrt{\rho_{t}} w_{t,i} + e_{t,k,i}\right]\;\; \forall k = 1\cdots K,
% \end{equation}
% where $e_{t,k,i}\in \mathcal{CN}\sim(0,1)$ is the IID additive Gaussian white noise, $\rho_{t}$ is the transmitted power at the server and $p_{t,k,i} = \frac{1}{r_{t,k}^{-\alpha/2} h_{t,k,i}}$ is the channel inversion factor applying at the clients. 
% The downlink SNR for the $k$-th user is 
% \begin{equation}
%     \ssf{SNR^L}_{t,k,i} = \expt\norm{\frac{\sqrt{\rho_{t,i}}w_{t,i}}{e_{t,k,i}/r_{t,k}^{-\alpha/2} h_{t,k,i}}}^2 = r_{t,k}^{-\alpha}\rho_{t}.
% \end{equation}
\begin{equation*}
    y_{t,n,i} = r_{t,n}^{-\alpha/2} h_{t,n,i} \sqrt{\rho_{t}^{\text{DL}}} w_{t,i} + e_{t,n,i}\;\; \forall n = 1\cdots K,
\end{equation*}
where $e_{t,n,i}\in \mathcal{CN}\sim(0,1)$ is the additive Gaussian white noise, and $\rho_{t}^{\text{DL}}$ is the transmitted power at the server. 
The downlink SNR for the $n$-th user is 
\zixiangTCCN{
\begin{equation}
\label{eqn:snrl_tpc}
    \ssf{SNR}_{t,n,i}^L =  r_{t,n}^{-\alpha}|h_{t,n,i}|^2\rho_{t}^{\text{DL}}.
\end{equation}}
Instead of keeping $\rho_t^{\text{DL}}$ as a constant, we derive the following policy based on Theorem~\ref{thm:MDTpart} to guarantee the convergence of \textsc{FedAvg}: 
\zixiangTCCN{
\begin{equation}
\label{eqn:dlpwr}
    \rho_{t}^{\text{DL}} \geq \frac{ r_{t,k}^{\alpha} \mu^2(\gamma+t)(\gamma+t-2)}{4N|h_{t,n,i}|^2} \sim \mathcal{O}(t^2).
\end{equation}}
% \begin{equation*}
%     \sum_{n=1}^N \frac{1}{\ssf{SNR}_{t,n}^{\text{L}}}  = \sum_{n=1}^N \frac{1}{r_{t,k}^{-\alpha}\rho_{t}^{\text{DL}}}  = \zeta_t \leq  \frac{4N^2}{\mu^2(\gamma+t)(\gamma+t-2)}\sim\mathcal{O}\left(\frac{1}{t^2}\right),
% \end{equation*} 
% i.e. the transmit power needs to satisfy
% \begin{equation*}
%     \rho_{t}^{\text{DL}} \geq \sum_{k \in \mathcal{S}_t} \frac{1}{r_{t,k}^{-\alpha}}\frac{\mu^2(\gamma+t)(\gamma+t-2)}{4N^2}\sim \mathcal{O}(t^2).
% \end{equation*}
Finally, by applying the power control policy defined in Eqns.~\eqref{eqn:ulpwr} and \eqref{eqn:dlpwr}, FL tasks are able to achieve better performances under the same energy budget. This is also numerically validated in the experiment.

\mypara{Remarks.} We note that the proposed transmit power control only changes the \emph{average} transmit power at learning rounds. Such method is often referred to as the \emph{outer-loop power control} (OLPC) \cite{G:05}, which operates at a very slow time scale and only relies on the large-scale, stationary information of the wireless FL system. In fact, this method can be used in conjunction with a faster \emph{inner-loop power control}, such as the channel inversion power in \eqref{eqn:channelinversion} or any other methods that handle the fast fading component or interference, to determine the instantaneous transmit power of the sender. 
Another minor note is that the pathloss component appears in \eqref{eqn:dlpwr} but not in \eqref{eqn:ulpwr}. This is due to the broadcast nature of download. For upload, the pathloss is absorbed in the channel inversion expression \eqref{eqn:channelinversion}. 
% Lastly, we note that the power control policy defined in Eqns.~\eqref{eqn:ulpwr} and \eqref{eqn:dlpwr} can be scaled by adjusting the FL parameters. Hence, for a given total energy constraint $\sum_{t=1}^$
%Another remark is about the need to increase the average transmit power if the maximum power can be used from the very beginning. We note that although it is possible to always use the maximum average power throughout the whole FL process, doing so would waste the power budget compared to the transmit power control method in this section, which is crucial for battery-powered mobile devices. %The preserved power can also be allocated to other tasks.  

% Only adjust the average power control, not the instantaneous power

\subsection{Design Example II: Receive Diversity Combining for Analog Aggregation}
\label{sec:diversity}

% The transmit power control policy requires the selected clients and the server to use an increasing average transmit power progressively over the learning rounds, which can be impractical when the average transmit power budget is low. 
Another technique that can benefit from our theoretical results is to control the \emph{diversity order} of a receiver combining scheme, such as using multiple receive antennas, multiple time slots, or multiple frequency resources. Essentially we are leveraging the repeated transmissions to reduce the effective noise power via receive diversity combining, and by only activating sufficient diversity branches as we progress over the learning rounds, resources can be more efficiently utilized.

%since the maximum transmission power is usually limited for each equipment in the functional communication systems. To overcome the challenge of the transmission power, a straightforward method is to use multiple repeated transmissions under the power constraint followed by a combination to achieve the SNR requirement in Theorem \ref{thm:MTfull}. 

% In uplink cases, 
\mypara{Uplink diversity requirement.} We assume the uploaded local model is independently received $L_t$ times (over time, frequency, space, or some combination of them) in the $t$-th round. Reusing the notations and the channel inversion rule in \eqref{eqn:channelinversion}, the $L_t$ received signals for the $i$-th element can be denoted as
\begin{equation*}
    y_{t,i,l} = \frac{1}{K}\sum_{k=1}^K\sqrt{\rho_{t,l}}w_{t,k,i}+n_{t,i,l}\;\;\forall k \in \mathcal{S}_{t},\;\;\forall l=1\cdots L_t.
\end{equation*}
For simplicity, we fixed the average transmit power for each branch:  $\rho_{t,l} = \rho_{0}$, but this can be easily extended to incorporate power allocation over diversity branches \cite{G:05}. The receive SNR of the global model after the diversity combining can be written as 
\begin{equation*}
    \ssf{SNR}_{t}^G = \mathbb{E}\norm{\sum_{i = 1}^d\frac{\sum_{l=1}^{L_t}\frac{\sqrt{\rho_{t,l}}}{K}\sum_{k\in \mathcal{S}_{t}} w_{t,k,i}}{\sum_{l=1}^{L_t} n_{t,i,l}}}^2 = L_t\frac{ \rho_{0}\expt||\sum_{k \in \mathcal{S}_{t}}\vect{w}_{t}^k||^2}{dK^2}.
\end{equation*}
Compared with the SNR of the power control policy in (\ref{eqn:snrg_tpc}), we can derive the diversity requirement as
\begin{equation}
\label{eqn:ul_div}
    L_t = \left \lceil\rho_t^{\text{UL}}/\rho_0 \right \rceil,
\end{equation}
where $\lceil a \rceil$ denotes the ceiling operation on $a$.

% In downlink cases, 
\mypara{Downlink diversity requirement.} The server broadcasts the global weight for $Q_t$ times (again it can be over time, frequency, space, or some combination of them) in the $t$-th  round and each client combines the multiple independent copies of the received signals to achieve a higher SNR (i.e., lower effective noise power). The receive signal at client $k$ can be written as
\begin{equation*}
    y_{t,k,i,q} = r_{t,k,i,q}^{-\alpha/2} h_{t,k,i,q} \sqrt{\rho_{t,q}} w_{t,i} + e_{t,k,i,q}\;\; \forall q = 1\cdots Q_t,
\end{equation*}
where $\rho_{t,q} = \rho_1$ is the (constant) transmit power at the server. The downlink SNR for the $k$-th user is
% \begin{equation*}
$    \ssf{SNR}_{t,k}^L =  r_{t,k}^{-\alpha}Q_t\rho_1.$ 
% \end{equation*}
Similarly, compared with the local SNR in (\ref{eqn:snrl_tpc}), we can derive the diversity requirement as
\begin{equation}
\label{eqn:dl_div}
    Q_t = \left \lceil{\rho_t^{\text{DL}}}/{\rho_1} \right \rceil.
\end{equation}
By applying the combining rules in Eqns.~\eqref{eqn:ul_div} and \eqref{eqn:dl_div}, we have the complete design for receive diversity combining that can guarantee the convergence of FL at rate $\mathcal{O}(1/T)$, under the transmit power constraints at both clients and server.
% we can guarantee the convergence of the learning task under the transmission power limitations at both clients and server.

% \begin{remark}
\mypara{Remarks.}  Receive diversity combining is not as flexible as power control, because it can only achieve \emph{discrete} effective noise power levels. This is also observed in the experiments. However, it can be useful in situations where adjusting the average transmit power is not feasible, e.g., no change at the transmitter is allowed. In addition, one can combine the transmit power control in Section~\ref{sec:tpc} with the receive diversity combining in Section~\ref{sec:diversity} in a straightforward manner. We also note that there are other methods, such as increasing the precision of Analog-to-Digital Converters (ADC), to implement the SNR control policy. The general design principles in Theorems~\ref{thm:MTfull} to \ref{thm:MDTpart} can be similarly realized. 
% The different convergence performance of two schemes would be shown in the experiment part. 
% We also note that transmit power control or receiver combining is not the only approach to have an increased effective SNR of model parameters in FL. Methods such as increasing quantization bit of parameters in digital communication systems may also be adapted to implement this general SNR control policy. %We will discuss these other approaches in the journal version.  % as well. We will provide more examples that will be compatible with other communication frameworks in the journal version of this paper.
% \end{remark}

\section{Experiment Results}
\label{sec:experiment}

\subsection{Experiment Setup}
% We consider a communication system with noisy uplink and downlink channels to support various FL tasks. 
We consider noisy uplink and downlink communications to support various FL tasks. 
For simplicity, we assume that every channel use has the same noise level, and we also assume that both uplink and downlink have the same total energy budget $P = \sum_{t=1}^T P_t$, where $P_t$ is the transmission power of the $t$-th round, $t=1,\cdots, T$. However, we note that the downlink energy is consumed only by the server (i.e., $P_t$), while the uplink budget is equally shared among all clients (i.e., $P_t/N$ per transmitter), resulting in significantly smaller uplink transmit power per transmitter than the downlink. 
In each round of FL, the updated (locally or globally) ML model (or model differential when applicable) is transmitted over the noisy channel as described in Section~\ref{sec:model_noise}. 
% Suppose that there are $T$ total communication rounds and both the uplink and downlink total energy budget is $P = \sum_{t=1}^T P_t$, where $P_t$ is the transmission power of the $t$-th round. We note that the total 
We consider the following four schemes in the experiments.
% We consider a communication system with narrowband uplink and downlink parallel channels for several standard FL tasks. For simplicity, we assume that every channel has the same noise level. In each communication round of FL, the updated (locally or globally) ML model (or model differential when applicable) is transmitted in one of the narrowband channels. Suppose that the total communication rounds is $T$ and both the uplink and downlink total power budget is $P = \sum_{t=1}^T P_t$, where $P_t$ is the transmission power of the $t$-th round. We consider the following four schemes.

    \begin{enumerate}[leftmargin=*]\itemsep=0pt
        \item \textbf{Noise free.} This is the {ideal} case with no noise in either uplink or downlink. The accurate model parameters are perfectly received at the server and clients. This represents the best-case performance. 
        \item \textbf{Equal power allocation.} %In each communication round, the uplink and downlink average transmission power is the same, i.e., 
        This corresponds to $P_t = P/T, \forall t=1, \cdots, T$, as used in \cite{zhu2019broadband}. \zixiangTCCN{We adopte a normalized transmitted power $P_t = 1$ and the receive SNR of the model parameters is set as $10$ dB in the experiments.} %This represents the current state of the art in \cite{zhu2019broadband}.
        
        % \zixiangtcom{
        \item \textbf{$\mathcal{O}(t^2)$-increased power control policy.} Transmit power increases at the rate of $\mathcal{O}(t^2)$ with the round $t$ but the overall energy consumption is kept constant as other methods, i.e., the receive SNR is increased and the effective noise of the signal is decreased with the progress of FL. With the total budget $P$, \eqref{eqn:pwralloc1} gives the power allocation solution. %it is straightforward to compute that $P_t = 6Pt^2/(T(T+1)(2T+1)), \forall t=1, \cdots, T.$
        % }
        \item \textbf{$\mathcal{O}(t^2)$-increased diversity combining policy.} The transmit power in both downlink and uplink remains the same as 2). However, the final models at the server and clients of each communication round are obtained by multiple repeated transmissions and the subsequent combining. The number of the repeated transmissions increases at the rate of $\mathcal{O}(t^2)$. For simple discretization, we use $1$, $4$, $9$, $16$ and $25$ orders of receive diversity combining in both uplink and downlink model transmissions for $1$st to $9$th, $10$th to $45$th, $46$th to $125$th, $125$th to $270$th, and $270$th to $500$th communication round, respectively, of a $500$-round task. Note that the total energy budget remains the same as the previous two methods.
    \end{enumerate}
We use the standard image classification and natural language processing FL tasks to evaluate the performances of these schemes. The following three standard datasets are used in the experiments, which are commonly accepted as the benchmark tasks to evaluate the performance of FL.
    \begin{enumerate}[leftmargin=*]\itemsep=0pt
        \item \textbf{MNIST.} The training sets contains $60000$ examples. For the full clients participation case, the training sets are evenly distributed over $N = K = 10$ clients. For the partial clients participation case, the training sets are evenly partitioned over $N=2000$ clients each containing 30 examples, and we set $K=20$ per round ($1\%$ of total users). For the {IID} case, the data is shuffled and randomly assigned to each client, while for the {non-IID} case the data is sorted by labels and each client is then randomly assigned with 1 or 2 labels. The CNN model has two $5 \times 5$ convolution layers, a fully connected layer with 512 units and $\ssf{ReLU}$ activation, and a final output layer with softmax. The first convolution layer has 32 channels while the second one has 64 channels, and both are followed by $2 \times 2$ max pooling. The following parameters are used for training: local batch size $BS=5$, the number of local epochs $E=1$, and learning rate $\eta = 0.065$.
        \item \textbf{CIFAR-10.} We set $N=K=10$ for the full clients participation case while $N=100$ and $K=10$ for the partial clients participation case. We train a CNN model with two $5 \times 5$ convolution layers (both with 64 channels), two fully connected layers (384 and 192 units respectively) with $\ssf{ReLU}$ activation and a final output layer with softmax. The two convolution layers are both followed by $2 \times 2$ max pooling and a local response norm layer. The training parameters are: (a) {IID}: $BS=50$, $E=5$, learning rate initially sets to $\eta=0.15$ and decays every 10 rounds with rate 0.99; (b) {non-IID}: $BS=100$, $E=1$, $\eta=0.1$ and decay every round with rate 0.992.
        \item \textbf{Shakespeare.} This dataset is built from \emph{The Complete Works of William Shakespeare} and each speaking role is viewed as a client. Hence, the dataset is naturally unbalanced and {non-IID} since the number of lines and speaking habits of each role vary significantly. There are totally $1129$ roles in the dataset \cite{caldas2018leaf}. We randomly pick $300$ of them and build a dataset with $794659$ training examples and $198807$ test examples. We also construct an {IID} dataset by shuffling the data and redistribute evenly to $300$ roles and set $K = 10$. The ML task is the next-character prediction, and we use a classifier with an 8D embedding layer, two LSTM layers (each with $256$ hidden units) and a softmax output layer with $86$ nodes. The training parameters are: $BS = 20$, $E = 1$, learning rate initially sets to $\eta = 0.8$ and decays every $10$ rounds with rate $0.99$.
    \end{enumerate}
We compare the test accuracies and training losses as functions of the communication rounds for all the aforementioned configurations. All of the reported results are obtained by averaging over 5 independent runs. We also report the final test accuracy, which is averaged over the last 10 rounds, as the performance of the final global model. 
    
\subsection{Experiment Results for Transmit Power Control}

% We primarily evaluate the $\mathcal{O}(t^2)$-increased transmit power control and compare with the noise free and equal power allocation baselines. 
The focus of the experiment is on partial clients participation under both MT and MDT, but we first report the results for full clients participation in CIFAR-10, to highlight some common observations across all experiments.

\begin{figure}
    \centering
    \subfigure{
        \includegraphics[width=0.23\textwidth]{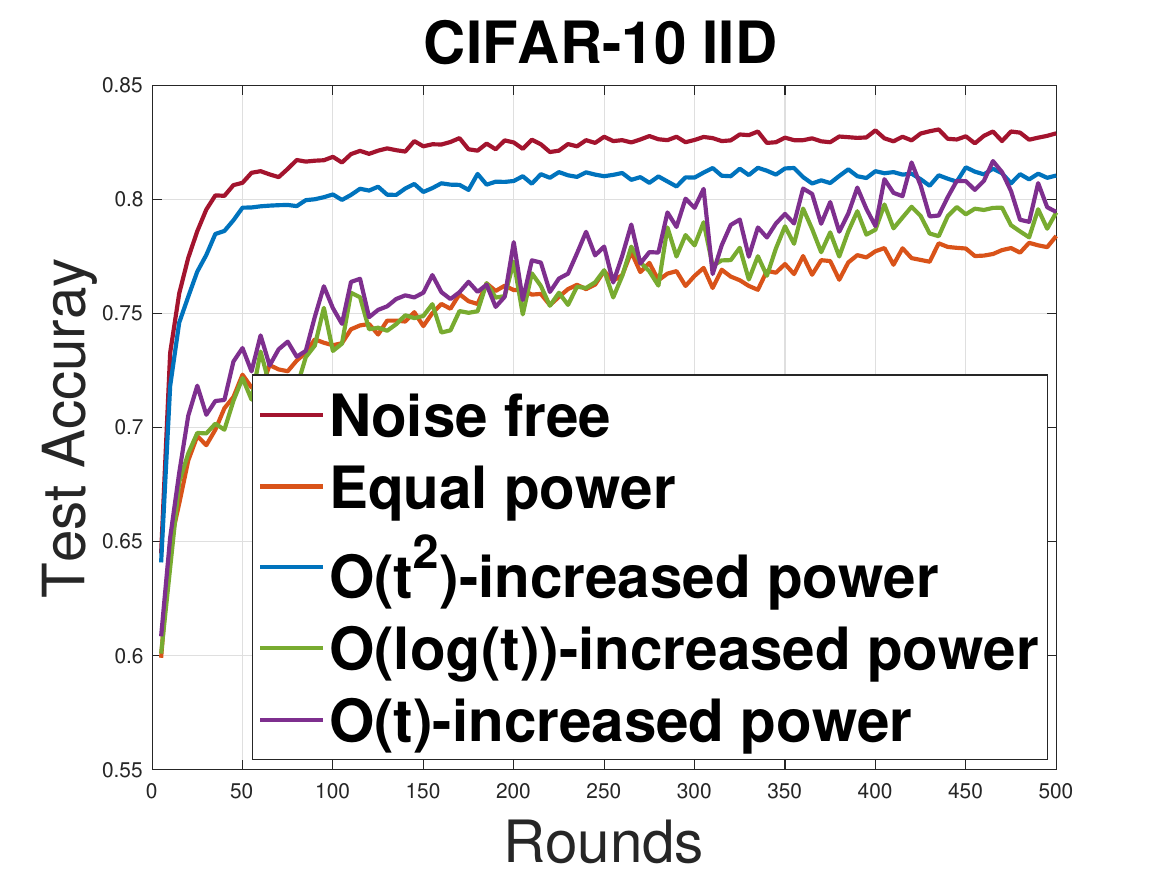}}
    \subfigure{
        \includegraphics[width=0.23\textwidth]{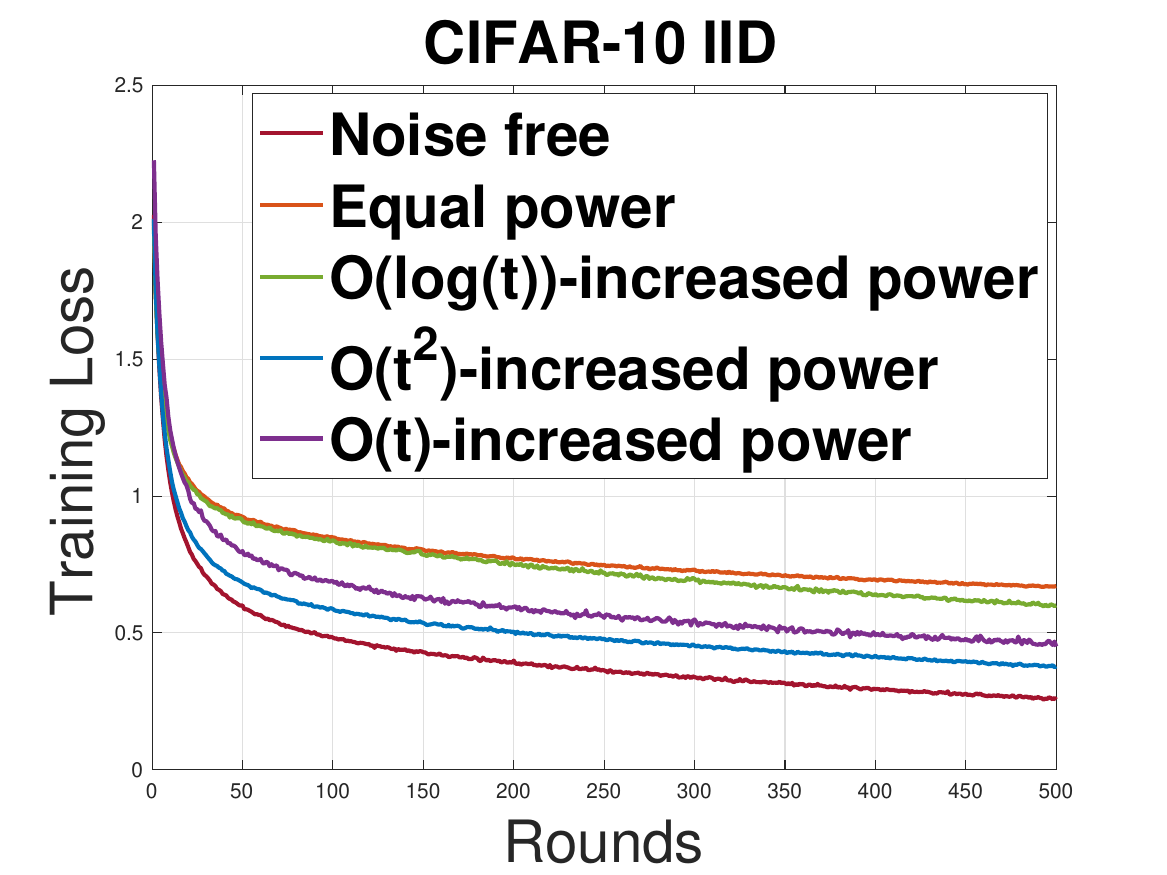}}
    \subfigure{
        \includegraphics[width=0.23\textwidth]{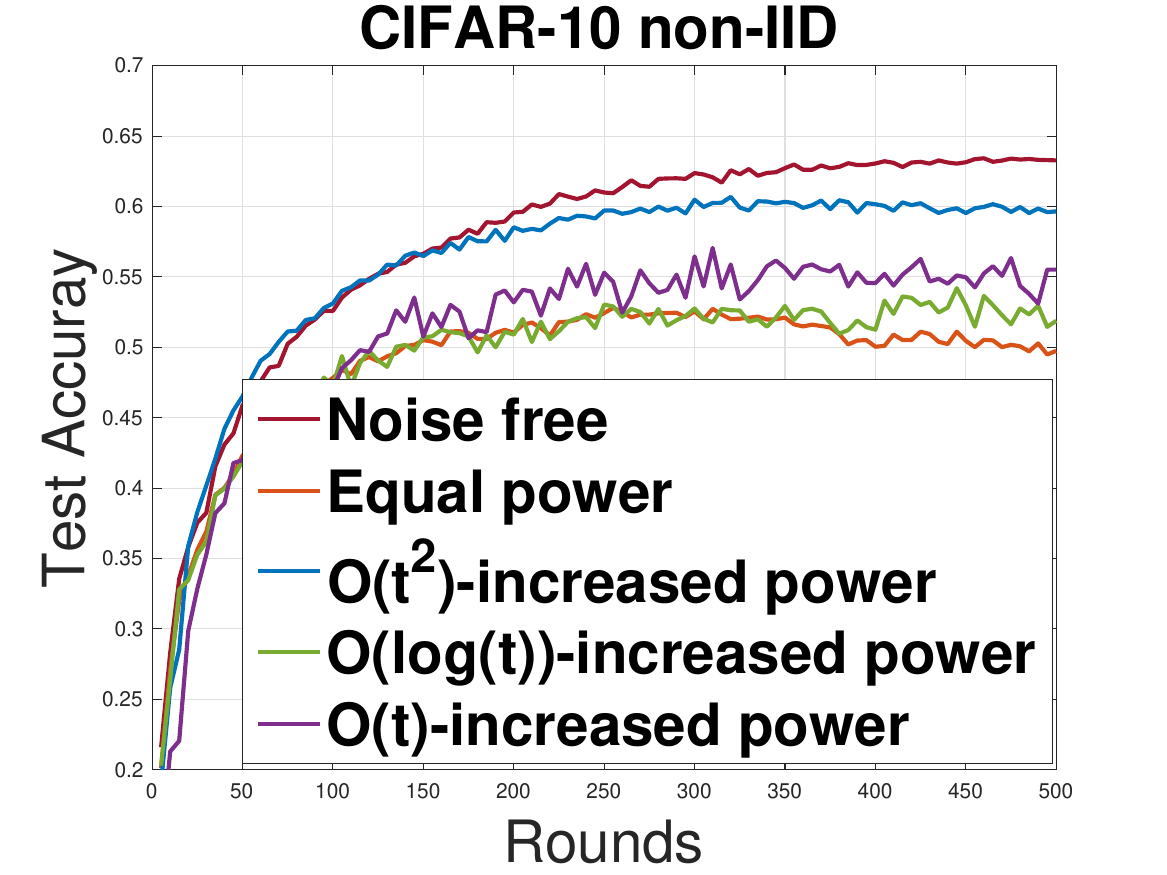}}
    \subfigure{
        \includegraphics[width=0.23\textwidth]{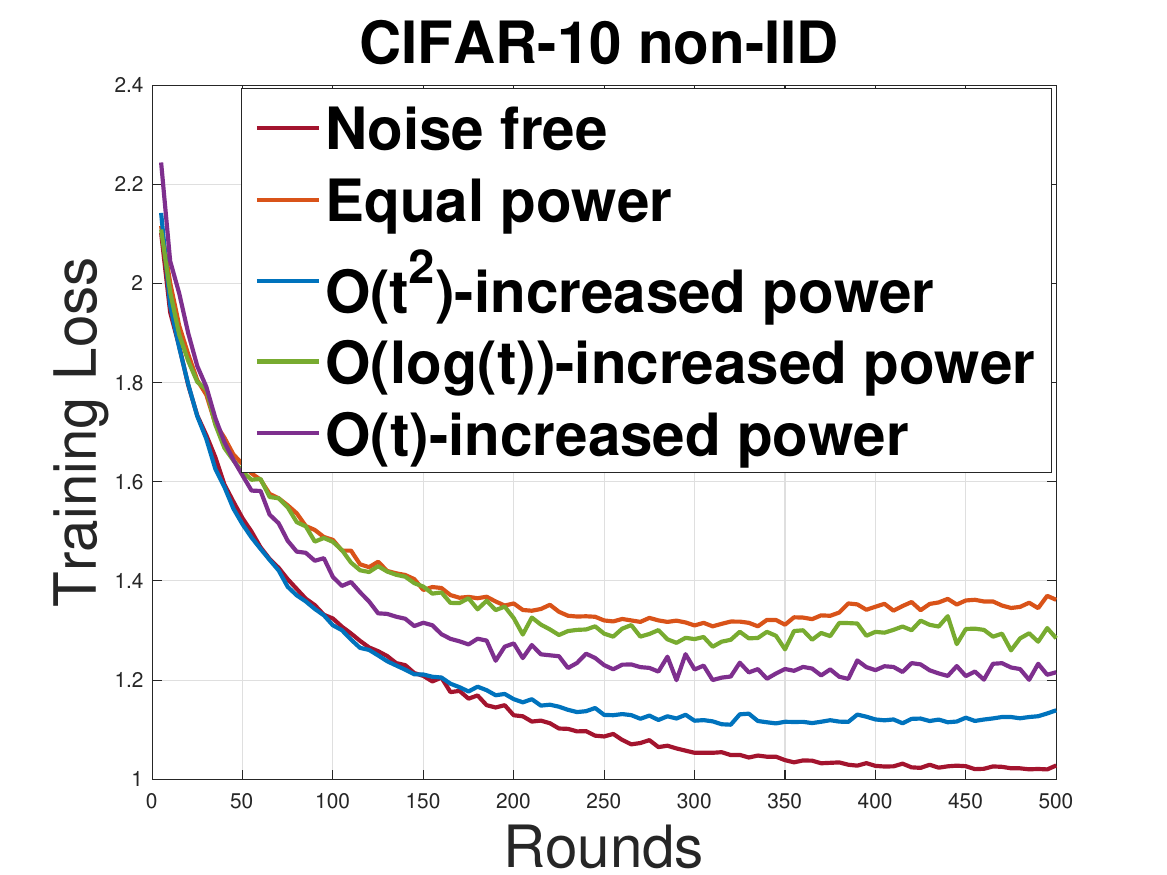}}
    \caption{\zixiangTCCN{Comparing the performance of transmit power control to the baselines with full clients participation, model transmission, and both IID (left two) and non-IID (right two) FL on the CIFAR-10 dataset.}}
    \label{fig:CIFAR10full}
    \vspace{-0.15in}
\end{figure}

\mypara{Full clients participation.} %We first focus on the client fully participation case. 
% Due to the space limitation, we only report the test accuracy and training loss of both IID and non-IID data partitioning for the CIFAR-10 dataset in Fig.~\ref{fig:CIFAR10full}. 
We see from Fig.~\ref{fig:CIFAR10full} that under the same total power budget, the $\mathcal{O}(t^2)$ power control policy performs better than the equal power allocation scheme and is very close to the noise-free ideal case. Specifically, $\mathcal{O}(t^2)$ power control policy achieves $81.1\%$ and $59.6\%$ final test accuracy in IID and non-IID data partitions on CIFAR-10, which is $2.6\%$ and $9.8\%$ better than that of the equal power allocation scheme. Note that the training loss (test accuracy) of equal power allocation scheme increases (decreases) during the late rounds ($350$th to $500$th) in the non-IID case, implying that a non-increasing SNR may occur deterioration in the convergence of FL for more difficult ML tasks. %Similar results are also observed for MNIST and Shakespeare in the full clients participation case.

To further validate the $\mathcal{O}(t^2)$ scaling, we also carry out experiments where power is increased as a slower rate of $\mathcal{O}(\log(t)) $ \zixiangTCCN{ and $\mathcal{O}(t)$}. The resulting performance is much worse than the $\mathcal{O}(t^2)$ scaling, and in fact has only very limited improvement over the equal power allocation.%, demonstrating the effectiveness of Theorem~\ref{thm:MTfull}.}

\congr{Lastly, we note that the early rounds of all methods have very similar performance. This is because although $\mathcal{O}(t^2)$ power control allocates less power than the equal power policy, both are dominated by the noise of SGD in early rounds and thus their performances are similar. This phenomenon is also observed in other experiments, which again highlights the benefits of adaptively ``flying under the radar'', to only allocate sufficient-but-not-excessive transmit power in each round. All of the aforementioned observations carry over to other tasks and different FL configurations.}

%%%%%% Partial clients participation  %%%%%%

\begin{figure}
    \centering
    \subfigure{
        \includegraphics[width=0.23\textwidth]{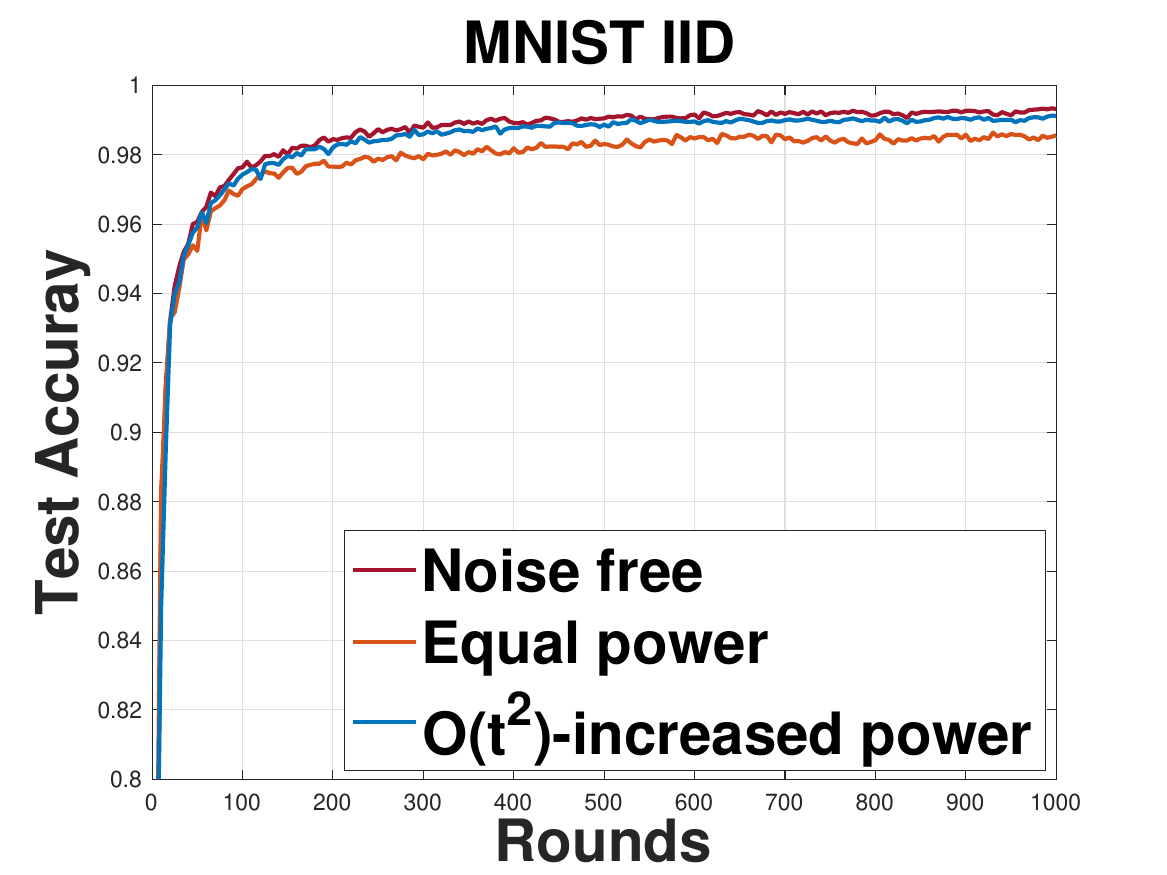}}
    \subfigure{
        \includegraphics[width=0.23\textwidth]{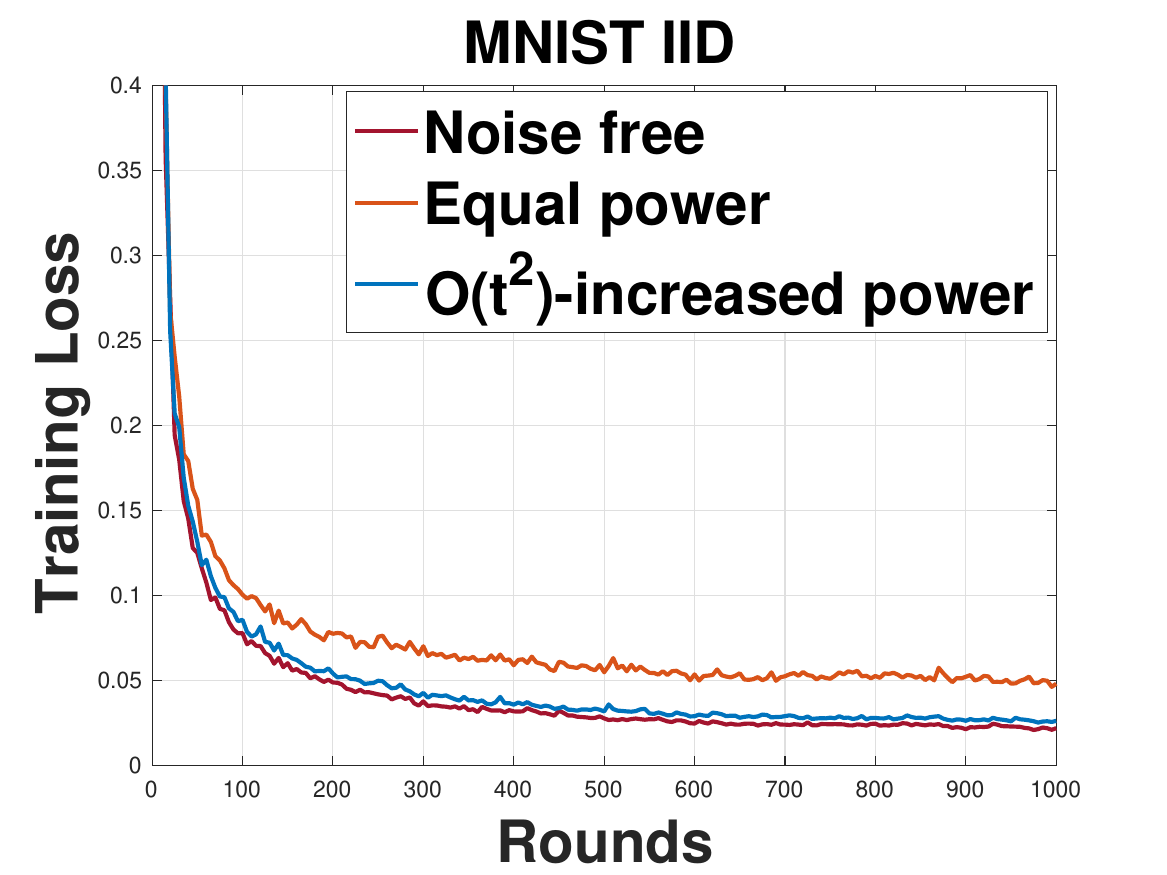}}
    \subfigure{
        \includegraphics[width=0.23\textwidth]{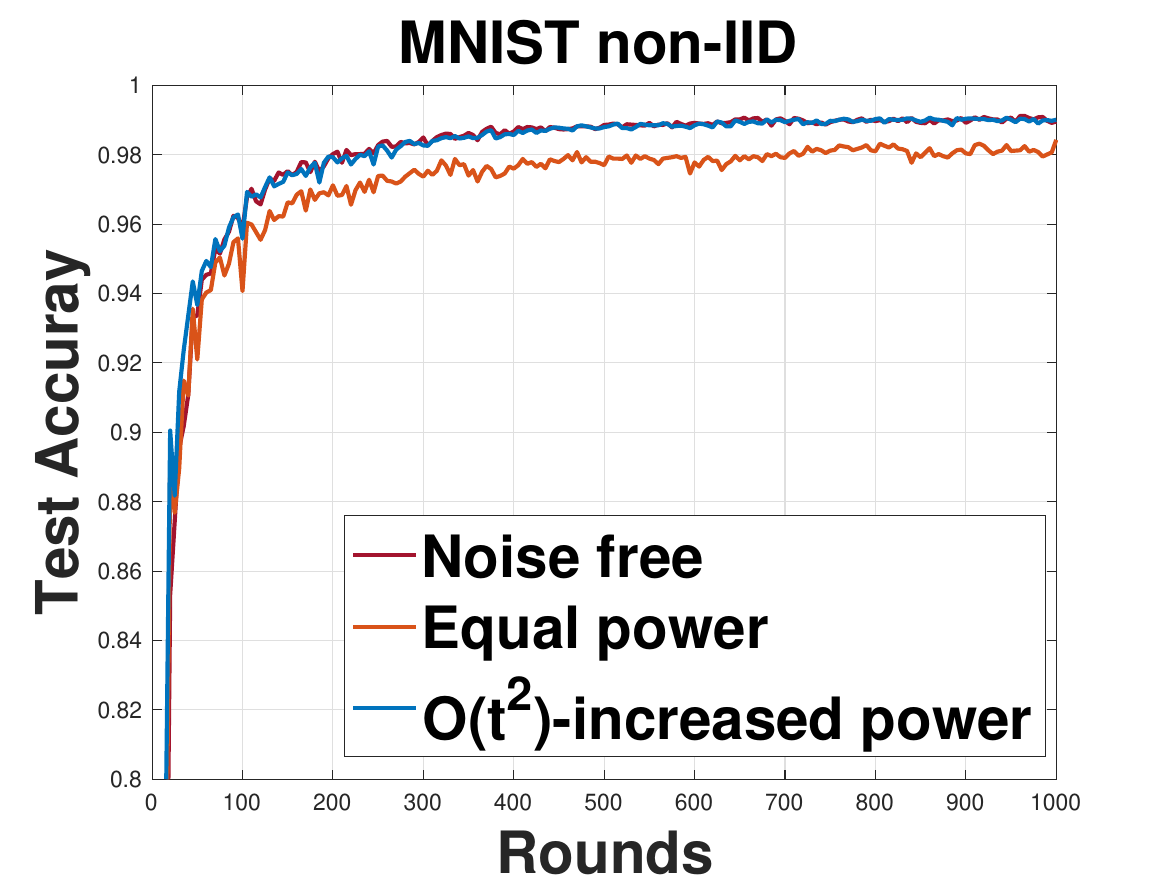}}
    \subfigure{
        \includegraphics[width=0.23\textwidth]{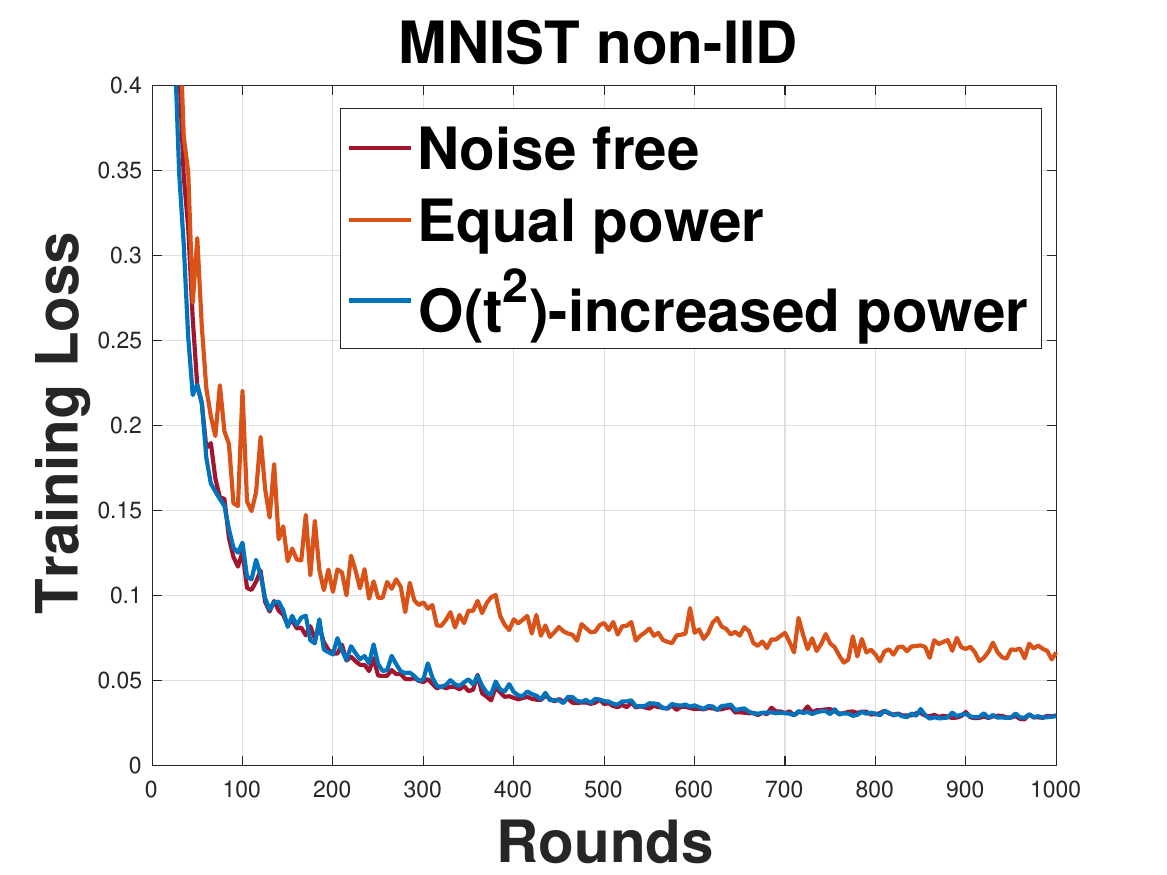}}
    \caption{Comparing the performance of transmit power control to the baselines with partial clients participation, model transmission, and both IID (left two) and non-IID (right two) FL on the MNIST dataset.}
    \label{fig:MNIST}
    \vspace{-0.15in}
\end{figure}
\begin{figure}
    \centering
    \subfigure{
        \includegraphics[width=0.23\textwidth]{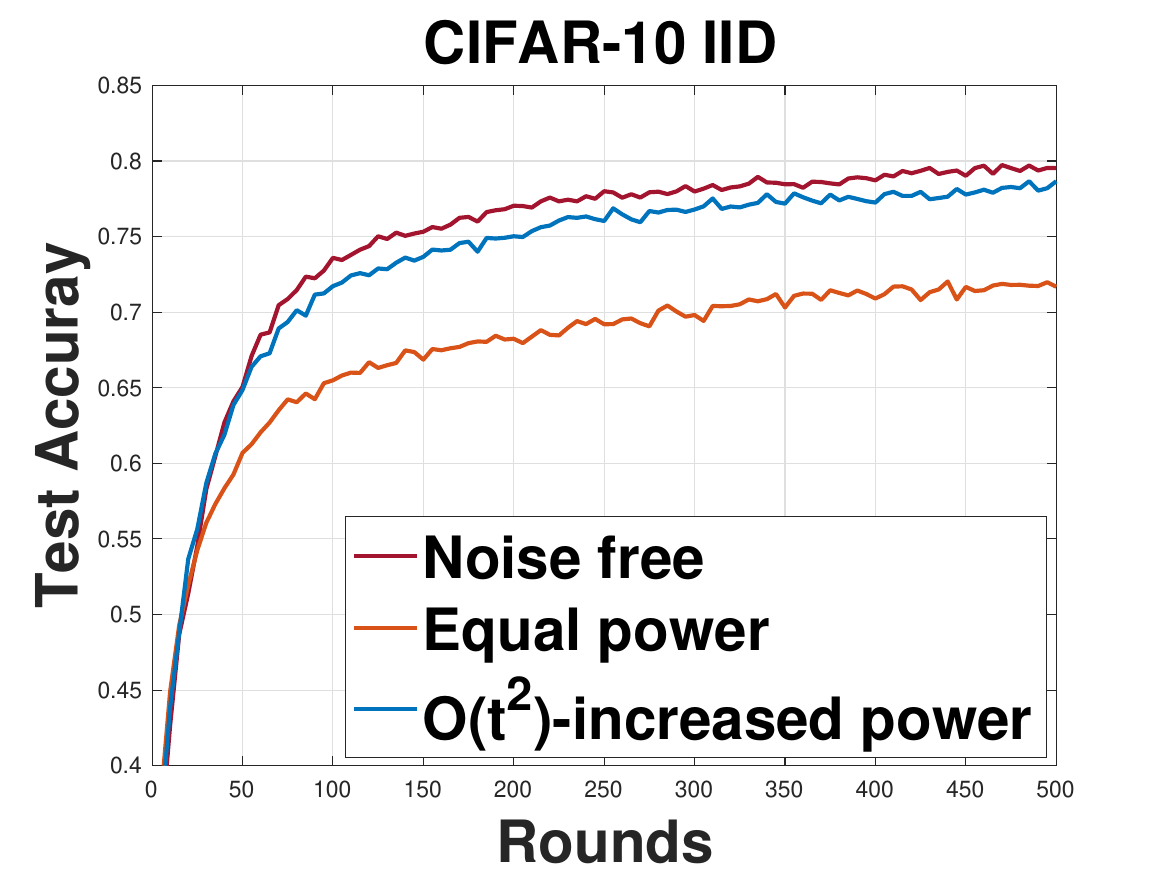}}
    \subfigure{
        \includegraphics[width=0.23\textwidth]{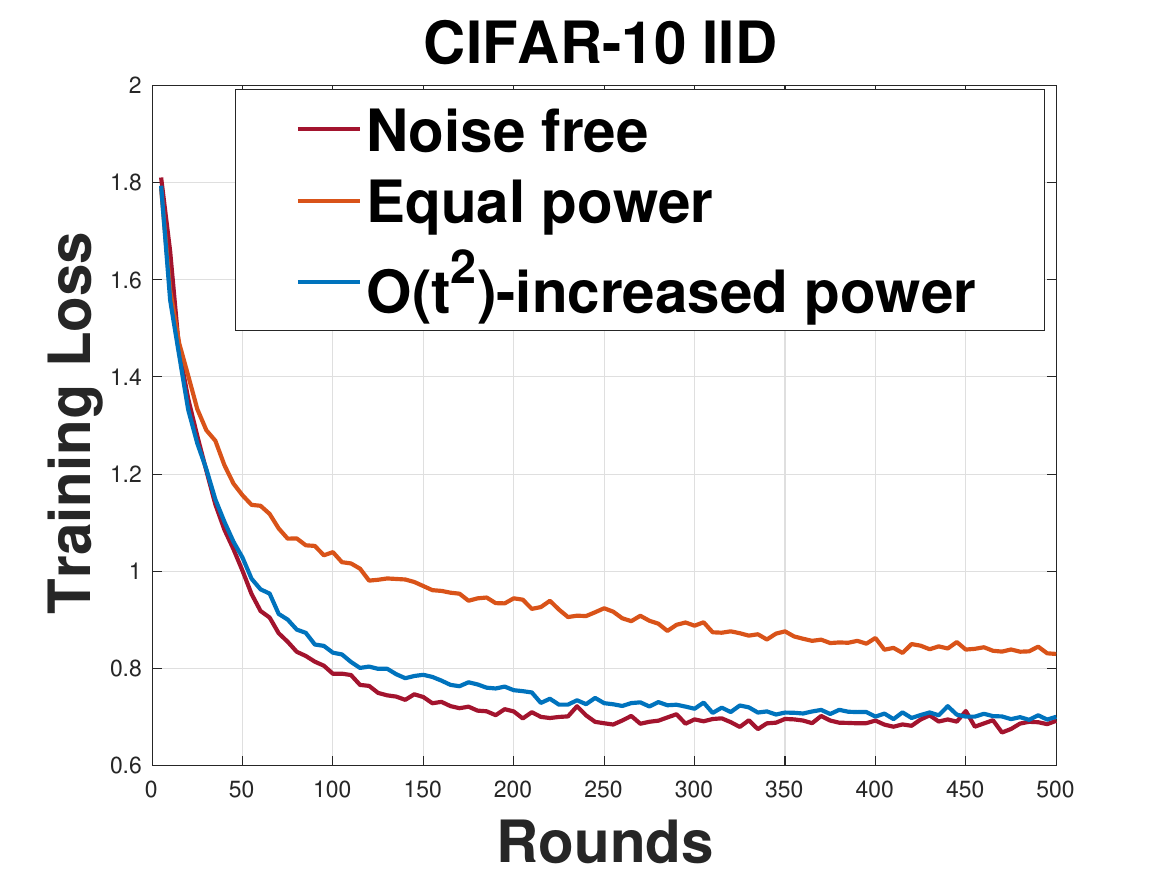}}
    \subfigure{
        \includegraphics[width=0.23\textwidth]{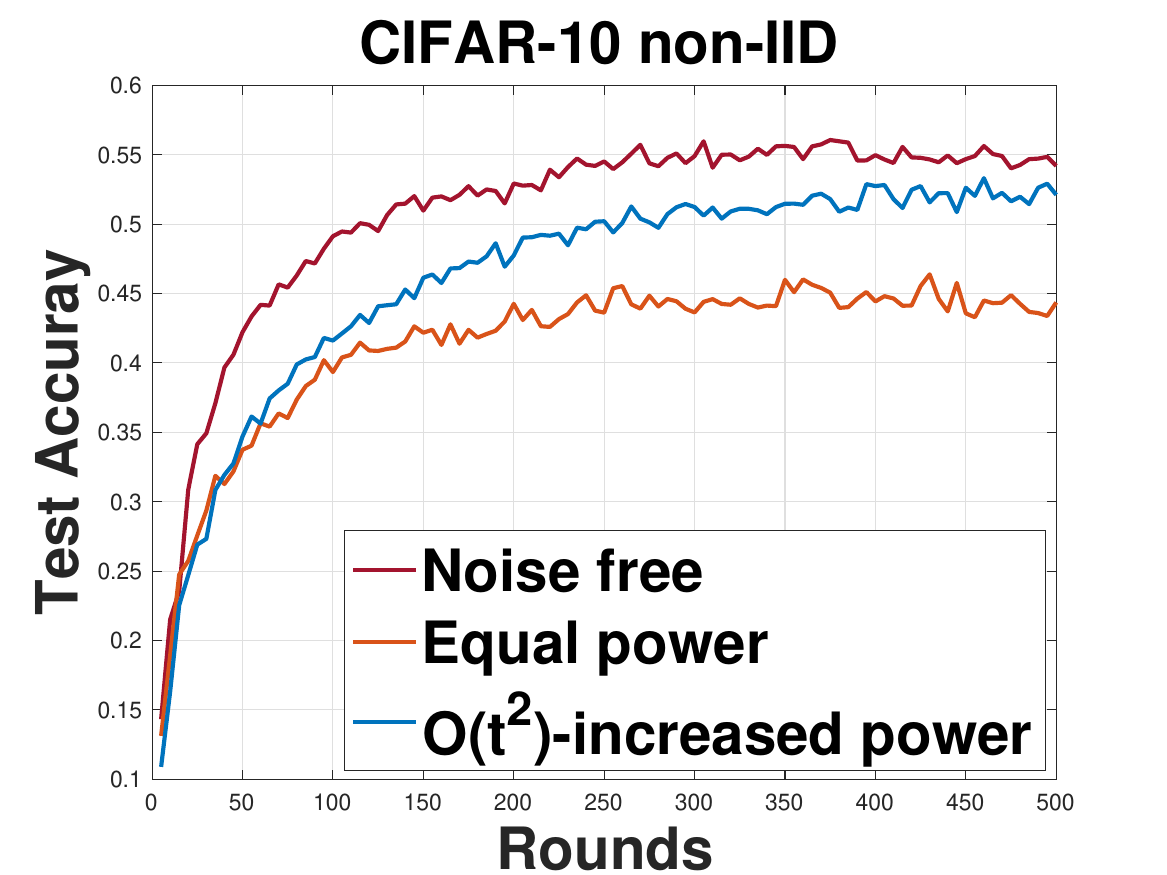}}
    \subfigure{
        \includegraphics[width=0.23\textwidth]{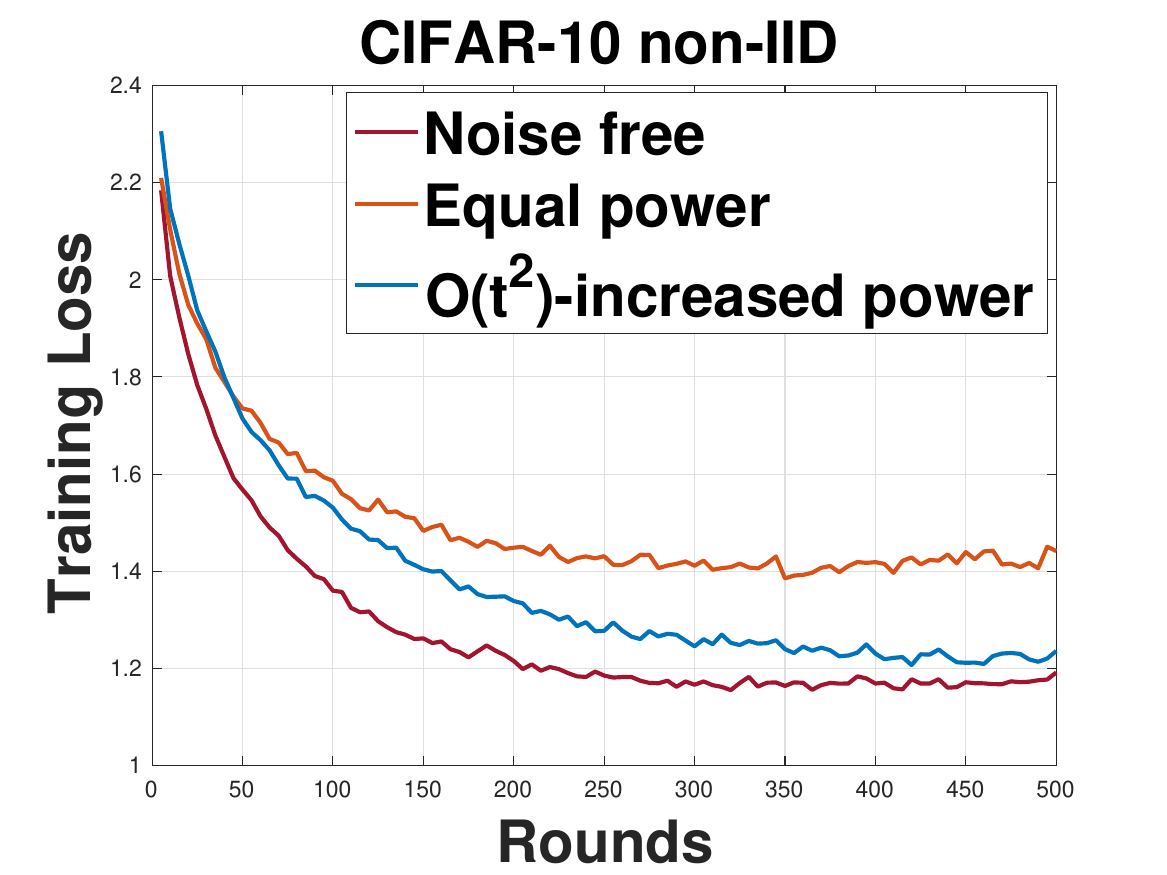}}
    \caption{Comparing the performance of transmit power control to the baselines with partial clients participation, model transmission, and both IID (left two) and non-IID (right two) FL on the CIFAR-10 dataset.}
    \label{fig:CIFAR10}
    \vspace{-0.15in}
\end{figure}
\begin{figure}
    \centering
    \subfigure{
        \includegraphics[width=0.23\textwidth]{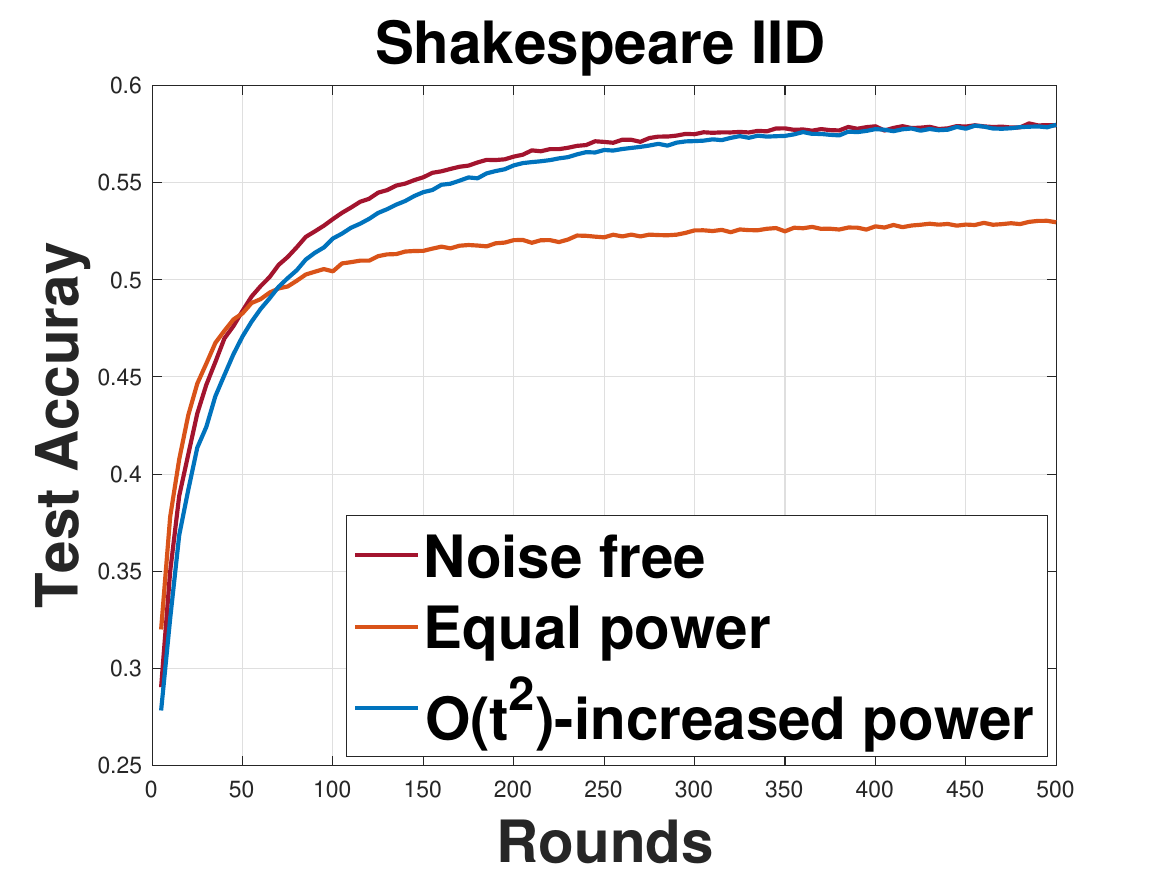}}
    \subfigure{
        \includegraphics[width=0.23\textwidth]{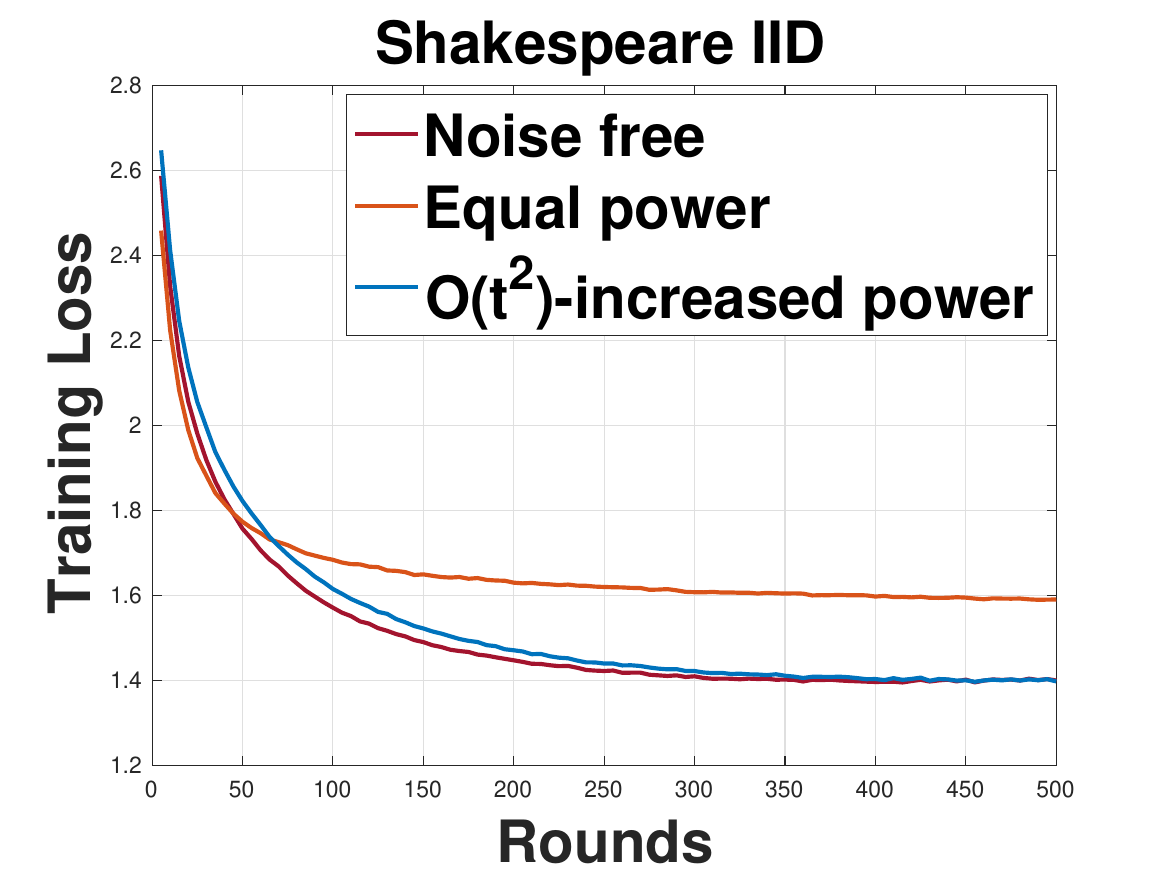}}
    \subfigure{
        \includegraphics[width=0.23\textwidth]{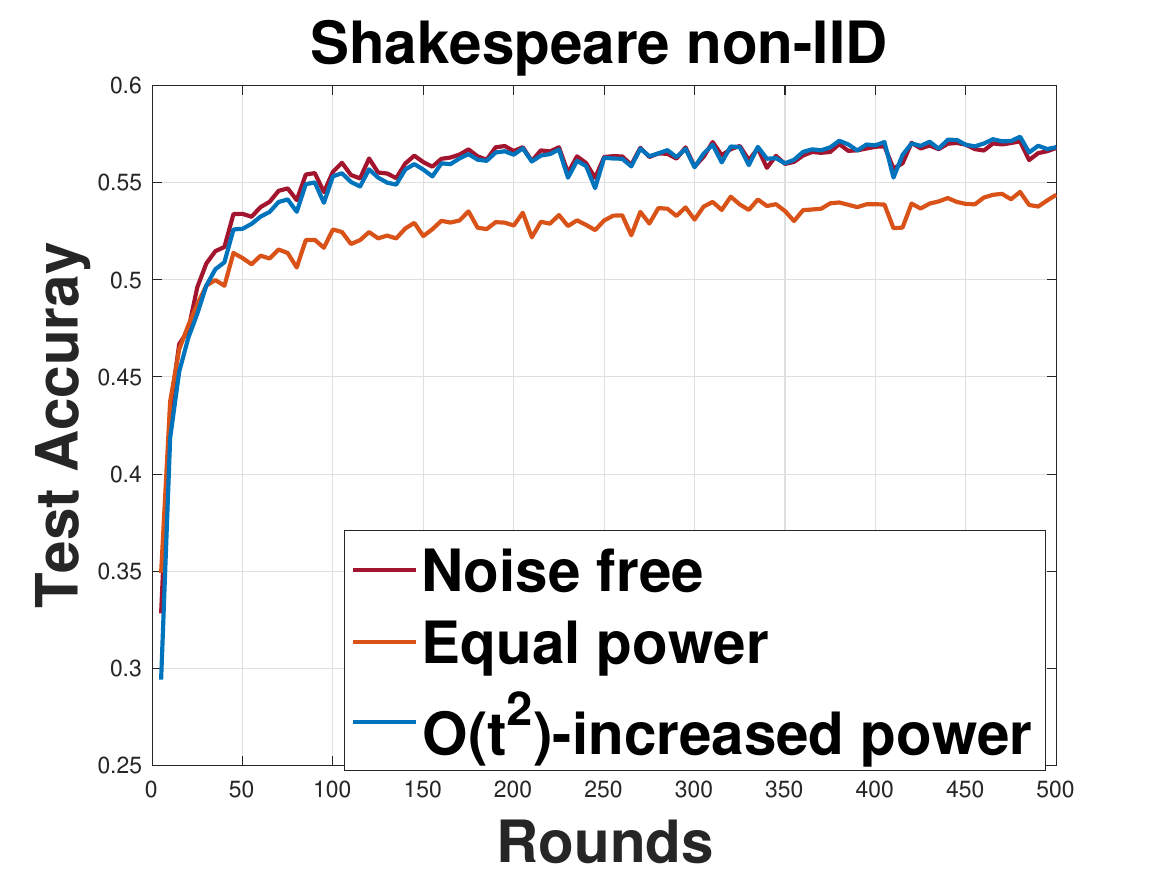}}
    \subfigure{
        \includegraphics[width=0.23\textwidth]{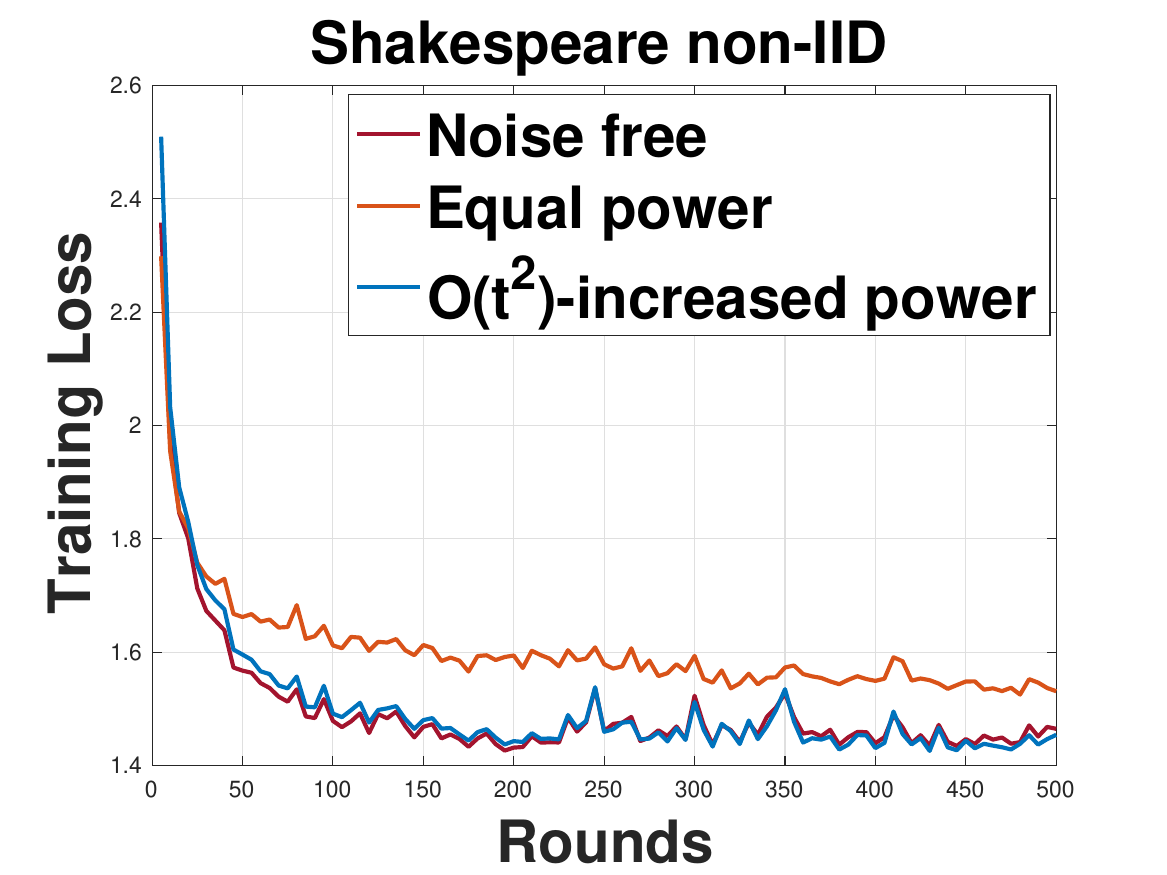}}
    \caption{Comparing the performance of transmit power control to the baselines with partial clients participation, model transmission, and both IID (left two) and non-IID (right two) FL on the Shakespeare dataset.}
    \label{fig:Shakespeare}
    \vspace{-0.15in}
\end{figure}

\mypara{Partial clients participation.} The performance comparisons of the three schemes on MNIST, CIFAR-10 and Shakespeare datasets in both IID and non-IID configurations and MT are reported in Figs.~\ref{fig:MNIST}, \ref{fig:CIFAR10}, and \ref{fig:Shakespeare}, respectively. Their final model accuracies (after $T$ rounds of FL are complete) are also summarized in Table \ref{table:accuracy}. First, we see from Fig.~\ref{fig:MNIST} that the proposed $\mathcal{O}(t^2)$-increased power allocation scheme achieves higher test accuracy and lower train loss than the equal power allocation scheme under the same energy budget on MNIST. In particular, $\mathcal{O}(t^2)$-increased power allocation scheme achieves $0.6\%$ higher test accuracy than that of equal power allocation scheme in both IID and non-IID data partitions, respectively. It may seem that the gain is insignificant, but the reason is mostly due to that MNIST classification is a very simple task. In fact, the gain of power control is much more notable under the challenging CIFAR-10 and Shakespeare tasks as shown in Figs.~\ref{fig:CIFAR10} and Fig.~\ref{fig:Shakespeare}, respectively. Compared with the equal power allocation scheme, which achieves $90.2\%$ and $81.6\%$ of the ideal (noise free) test accuracy in IID and non-IID data partitions under CIFAR-10 dataset respectively, the proposed $\mathcal{O}(t^2)$-increased power allocation achieves $99.2\%$ (IID) and $95.9\%$ (non-IID) of the ideal (noise free) test accuracy respectively after $T = 500$ communication rounds. Similarly, under Shakespeare dataset, the equal power allocation scheme achieves $91.5\%$ (IID) and $95.8\%$ (non-IID) of the ideal (noise free) test accuracy, while the proposed method improves $8.5\%$ and $3.5\%$, respectively. %Both tasks have significant accuracy improvement due to the $\mathcal{O}(t^2)$ power control.

\begin{figure}
    \centering
    \subfigure{
        \includegraphics[width=0.23\textwidth]{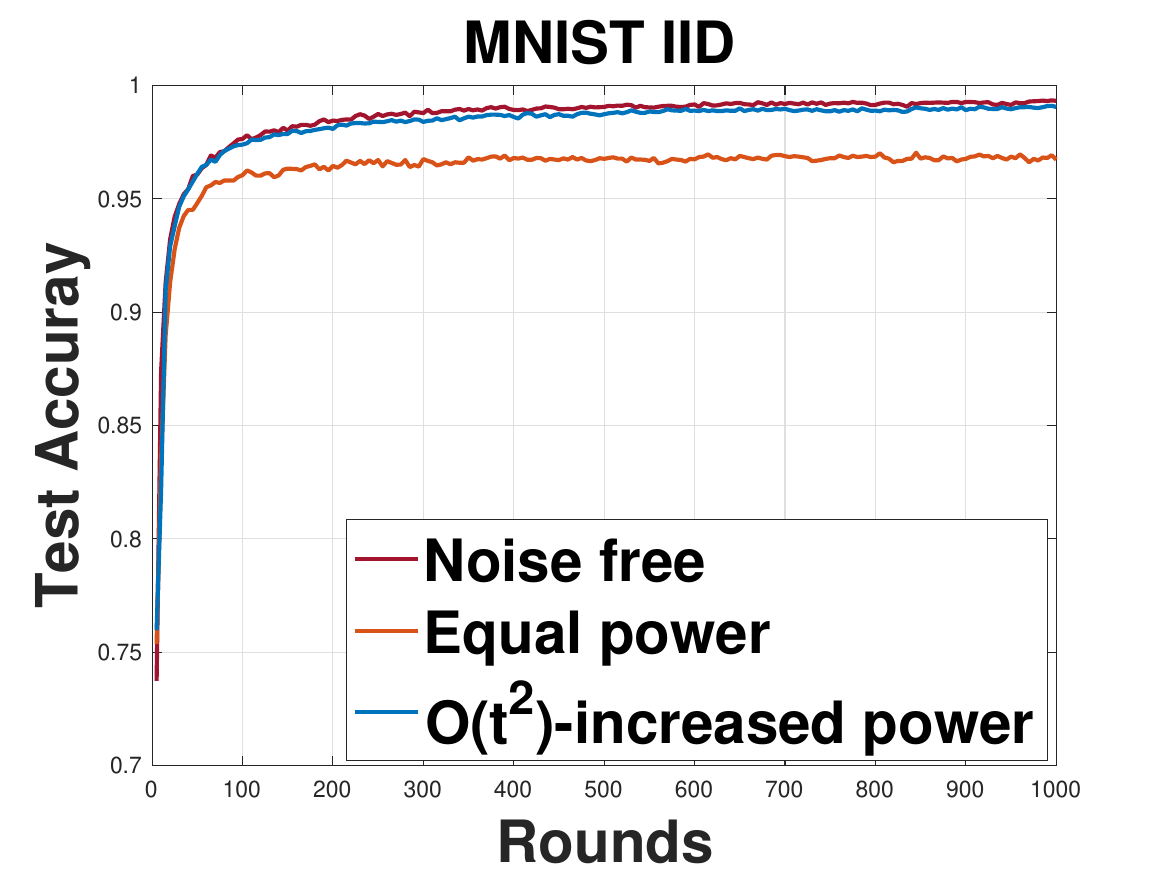}}
    \subfigure{
        \includegraphics[width=0.23\textwidth]{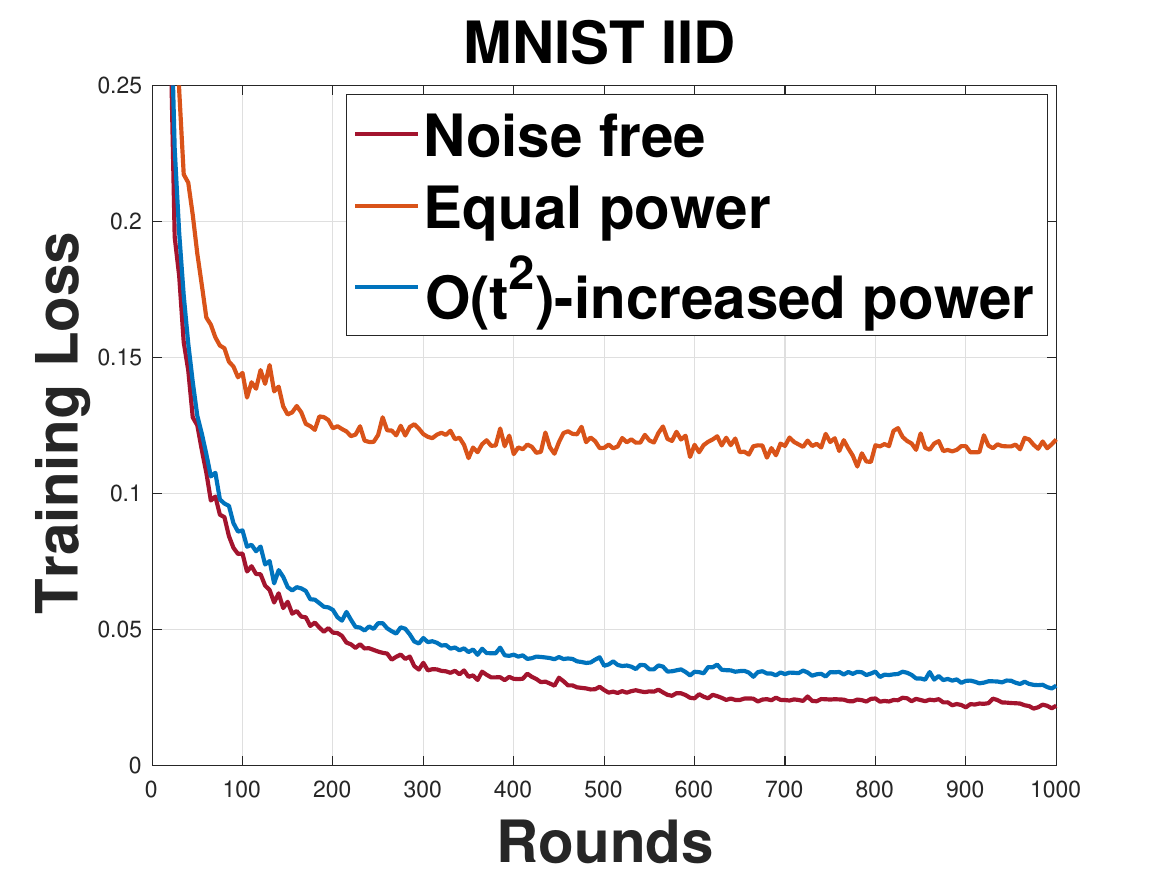}}
    \subfigure{
        \includegraphics[width=0.23\textwidth]{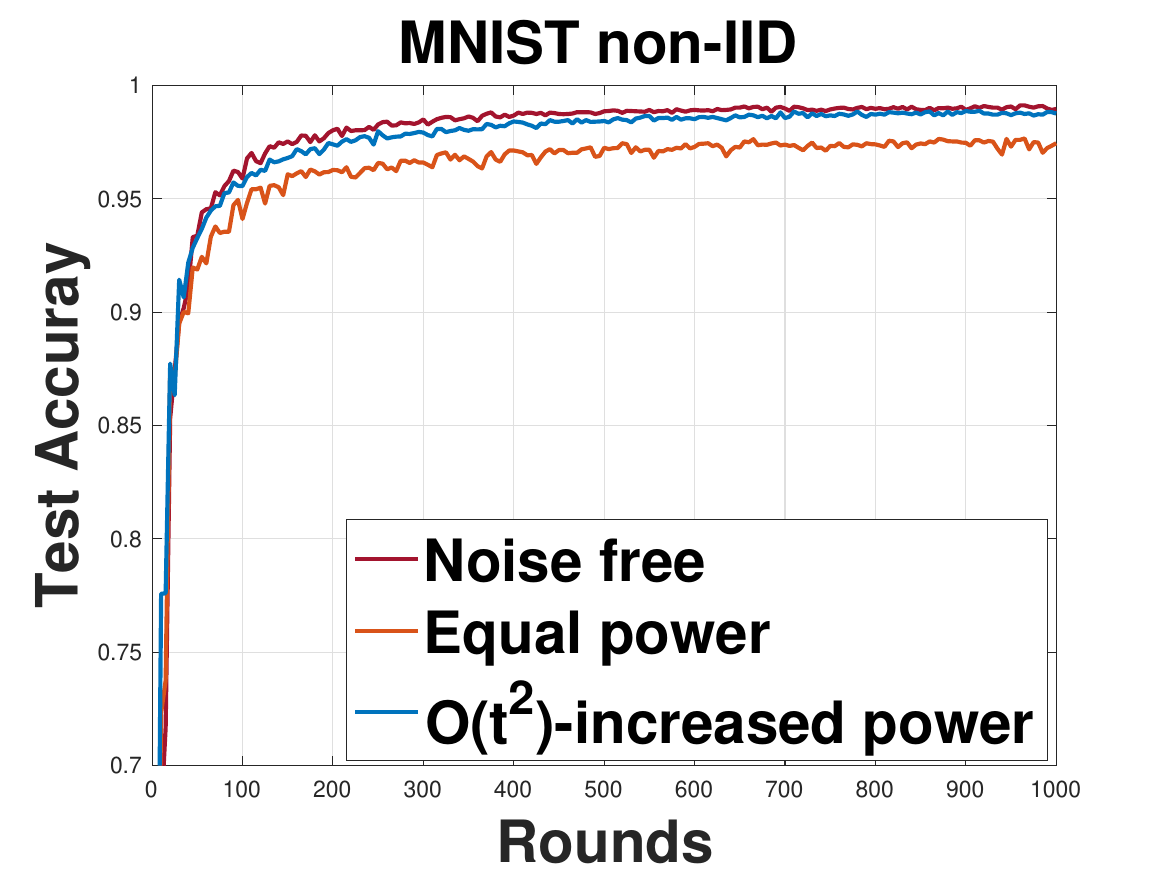}}
    \subfigure{
        \includegraphics[width=0.23\textwidth]{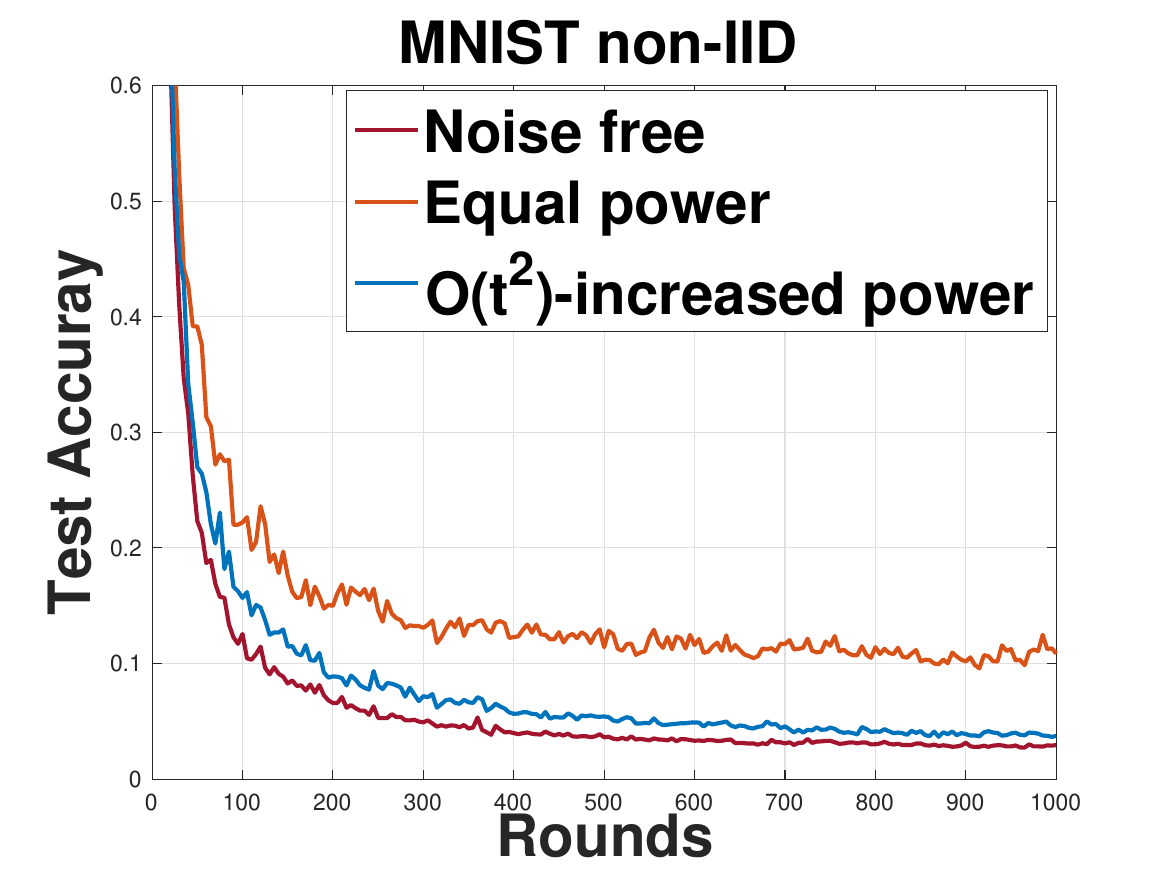}}
    \caption{Comparing the performance of transmit power control to the baselines with partial clients participation, model differential transmission, and both IID (left two) and non-IID (right two) FL on the MNIST dataset.}
    \label{fig:MNIST-DT}
    \vspace{-0.15in}
\end{figure}
\begin{figure}
    \centering
    \subfigure{
        \includegraphics[width=0.23\textwidth]{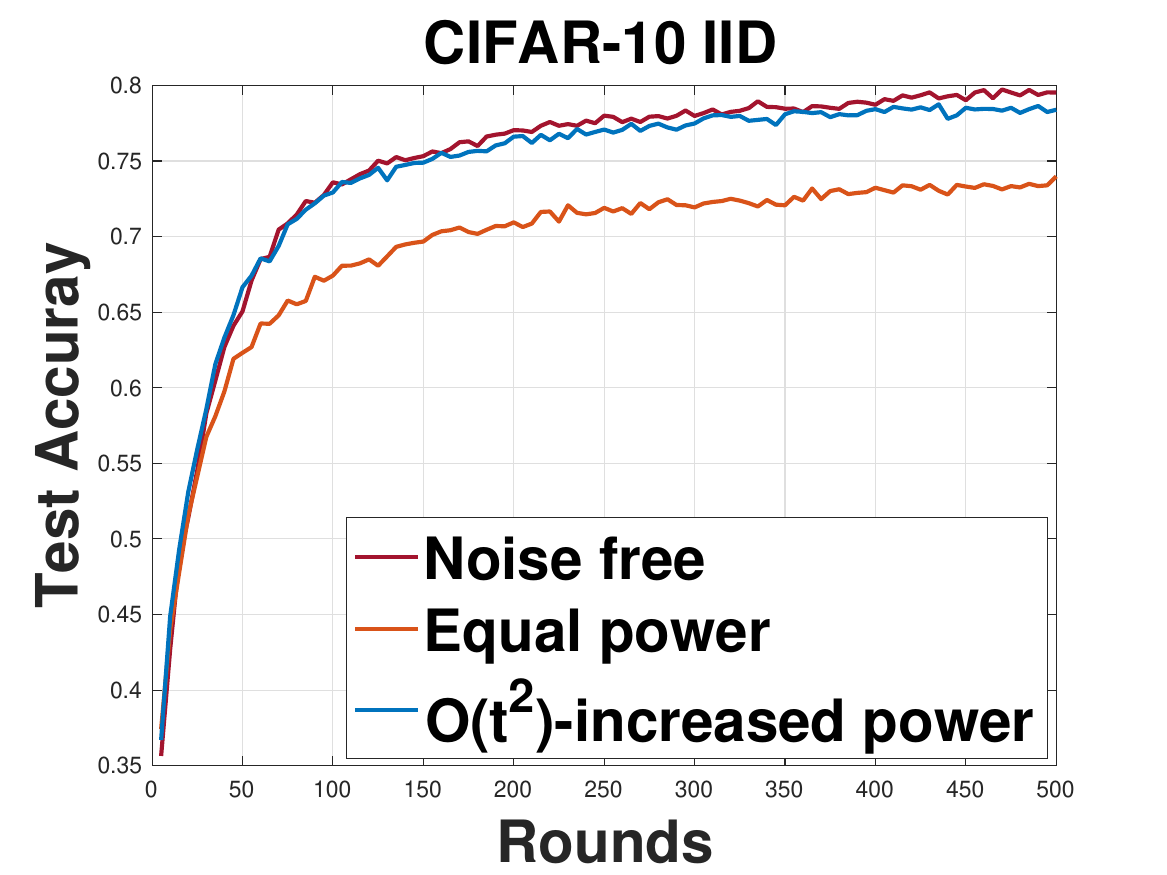}}
    \subfigure{
        \includegraphics[width=0.23\textwidth]{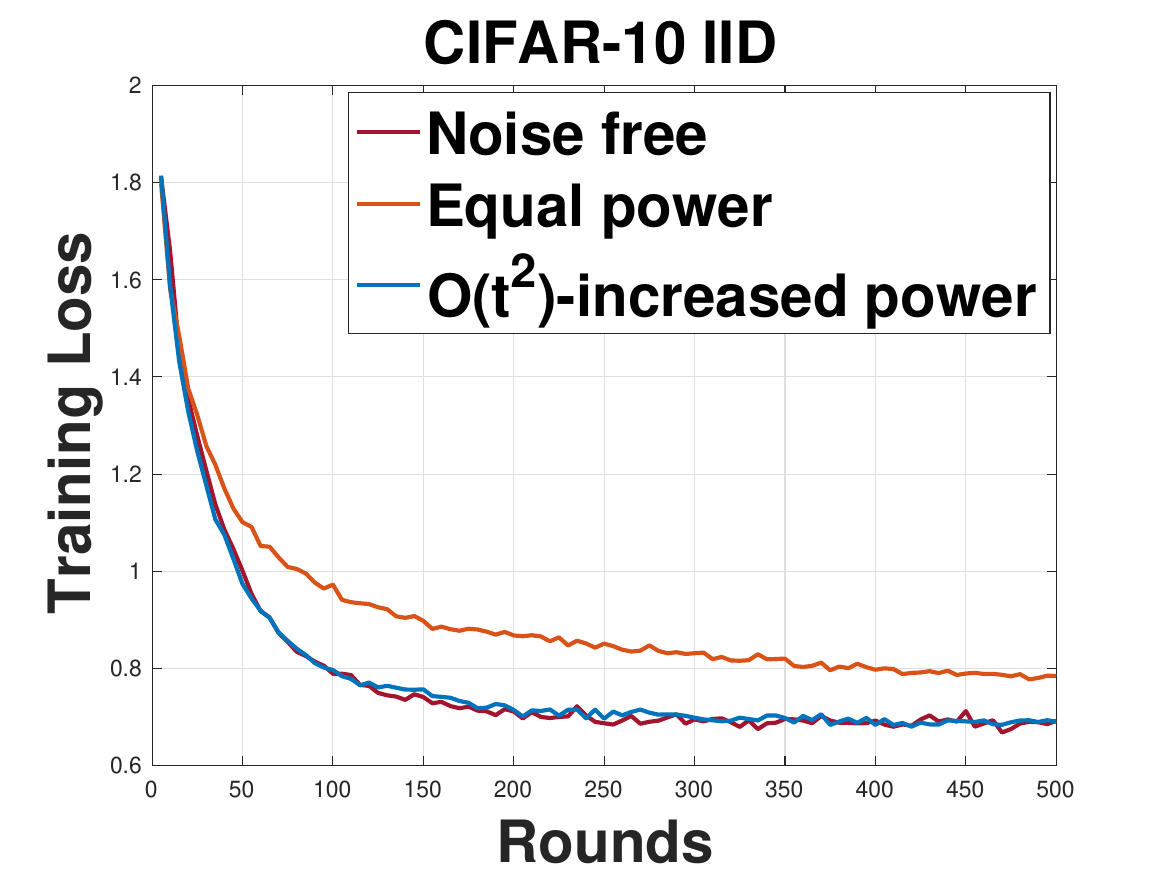}}
    \subfigure{
        \includegraphics[width=0.23\textwidth]{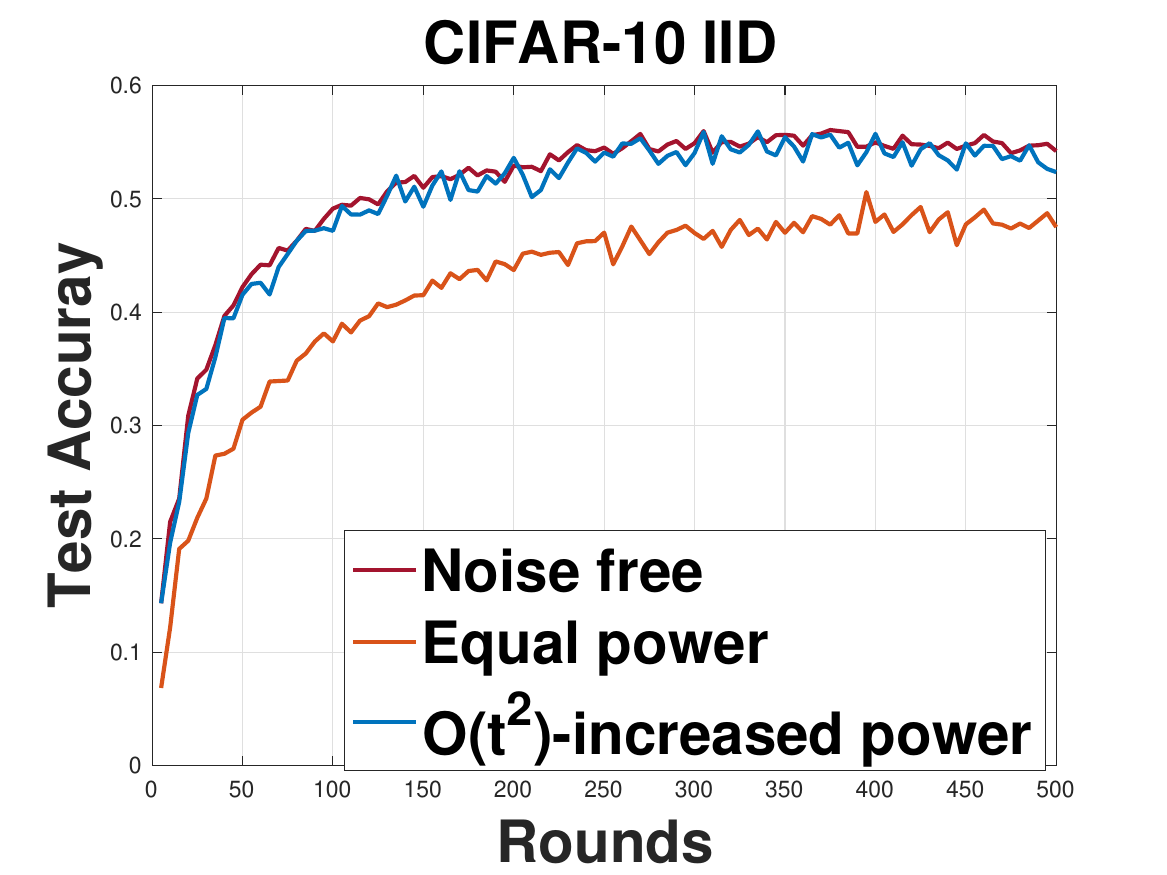}}
    \subfigure{
        \includegraphics[width=0.23\textwidth]{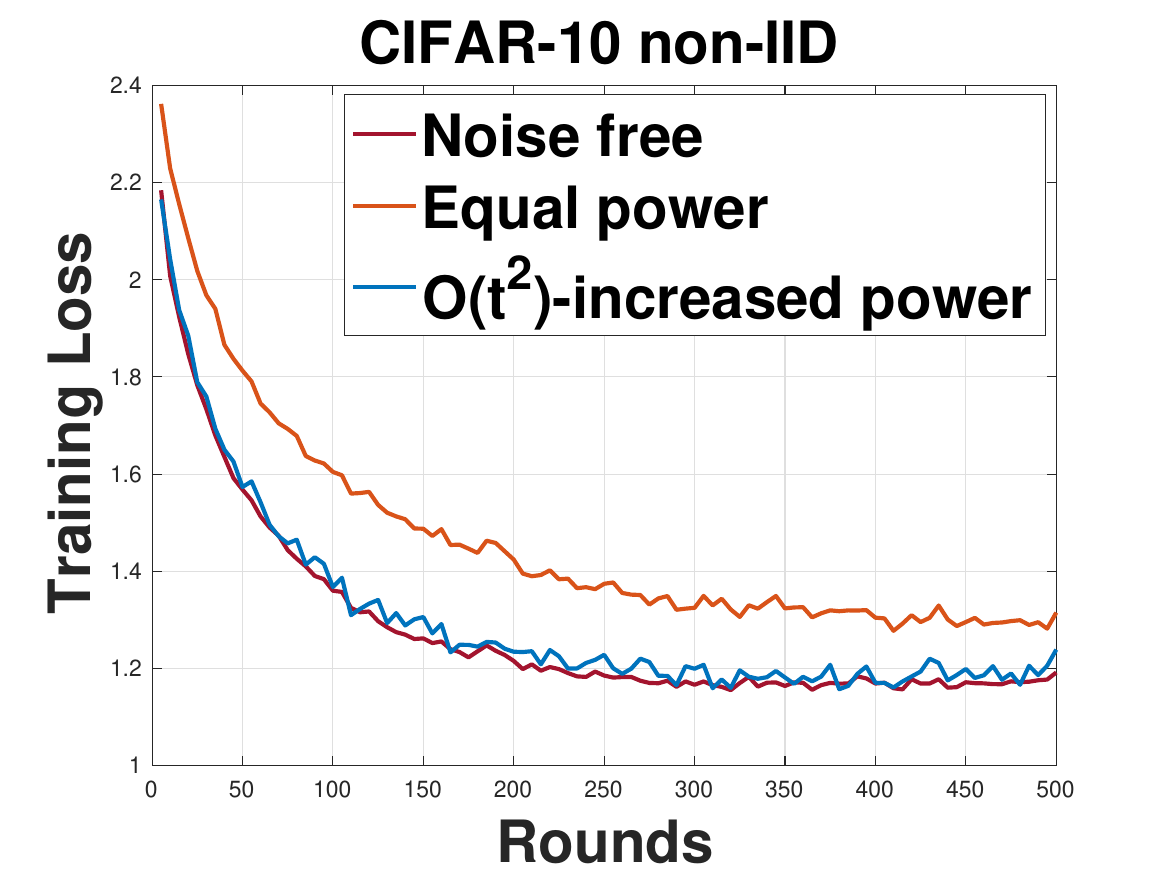}}
    \caption{Comparing the performance of transmit power control to the baselines with partial clients participation, model differential transmission, and both IID (left two) and non-IID (right two) FL on the CIFAR-10 dataset.}
    \label{fig:CIFAR10-DT}
    \vspace{-0.15in}
\end{figure}
\begin{figure}
    \centering
    \subfigure{
        \includegraphics[width=0.23\textwidth]{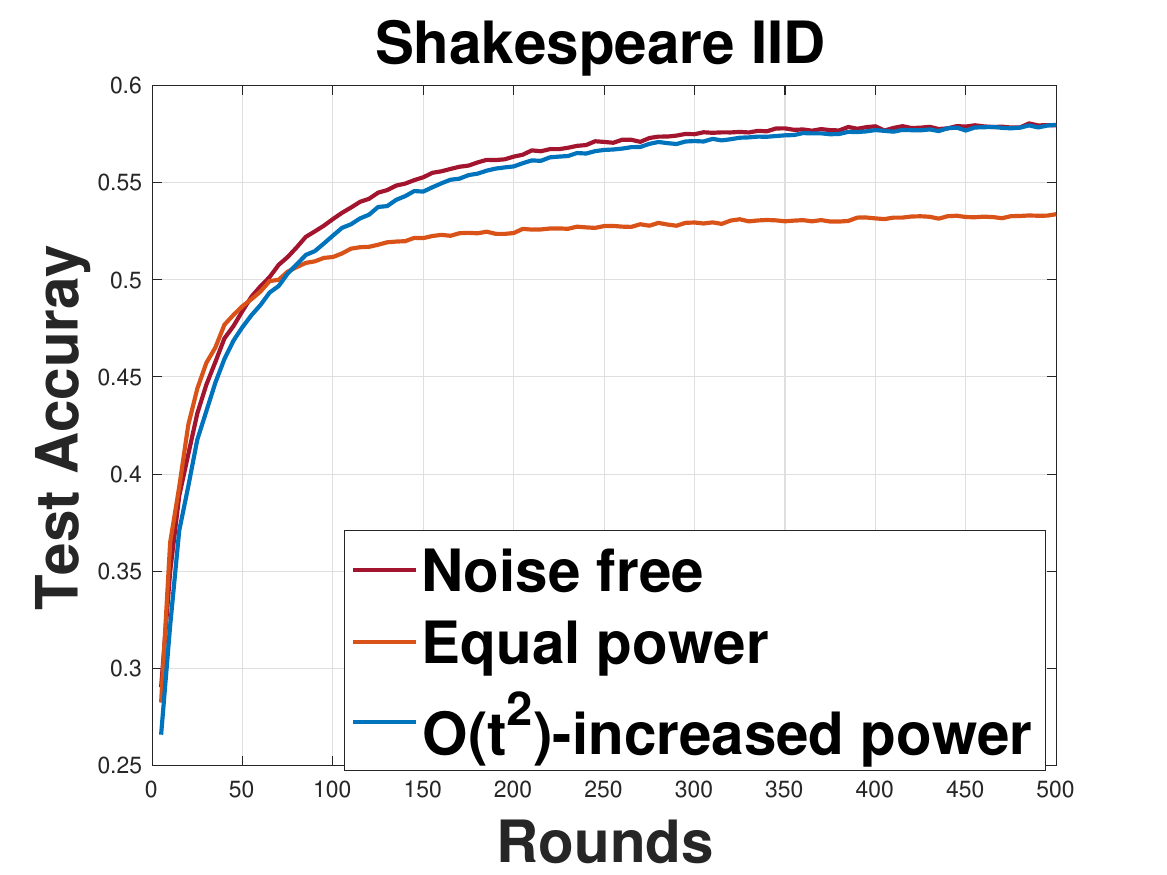}}
    \subfigure{
        \includegraphics[width=0.23\textwidth]{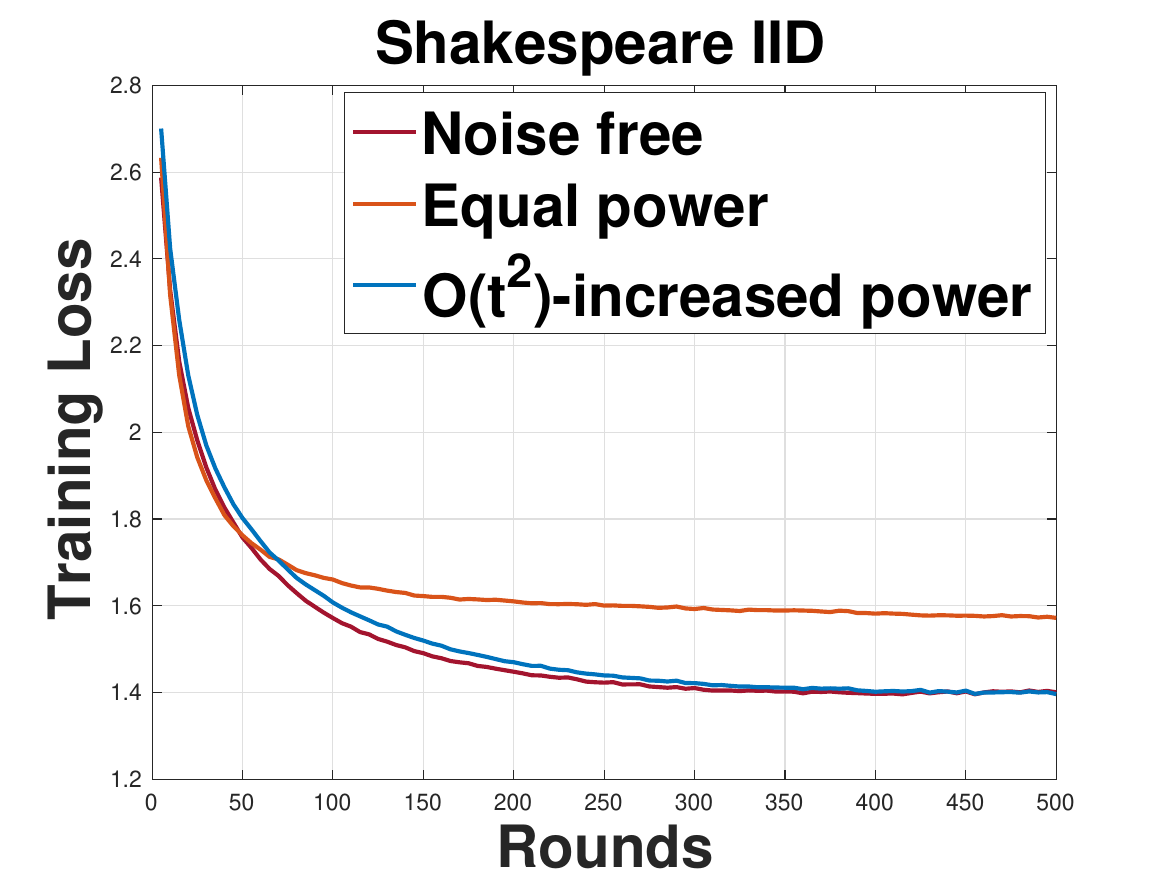}}
    \subfigure{
        \includegraphics[width=0.23\textwidth]{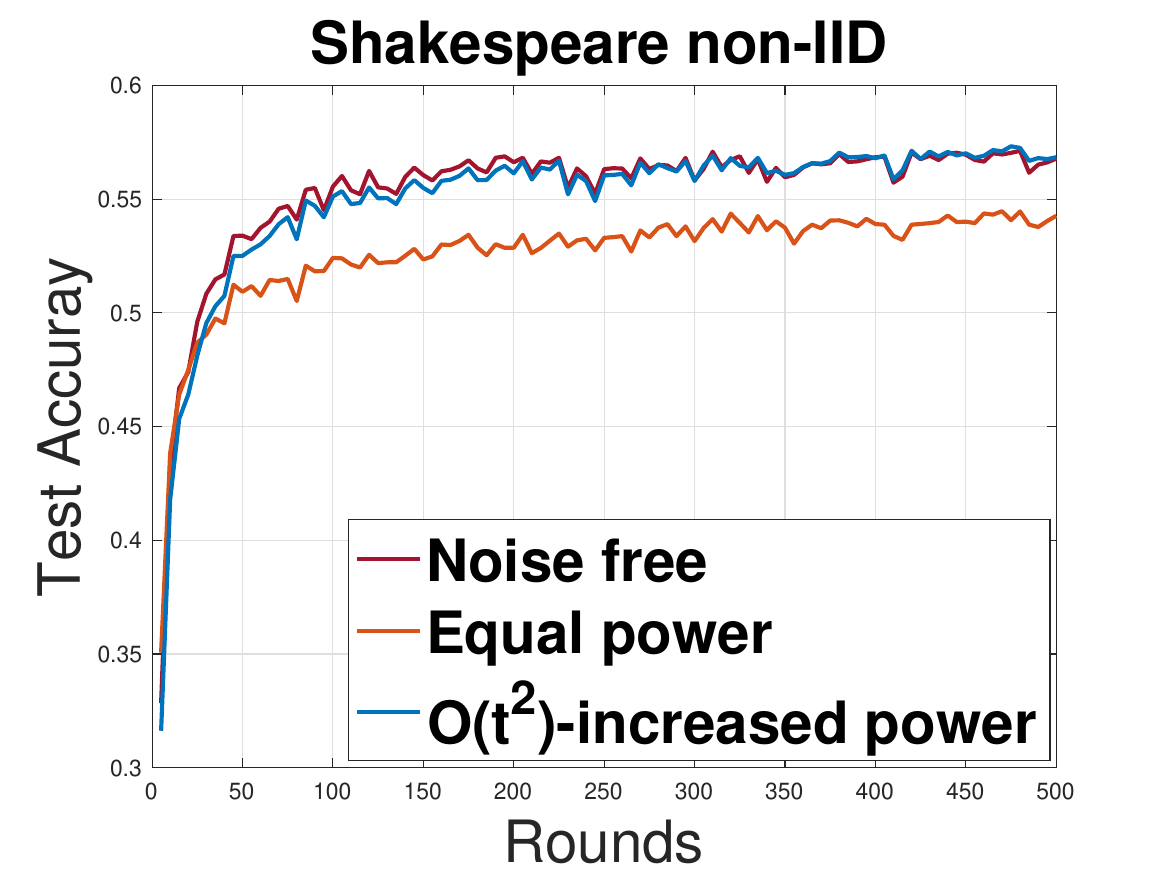}}
    \subfigure{
        \includegraphics[width=0.23\textwidth]{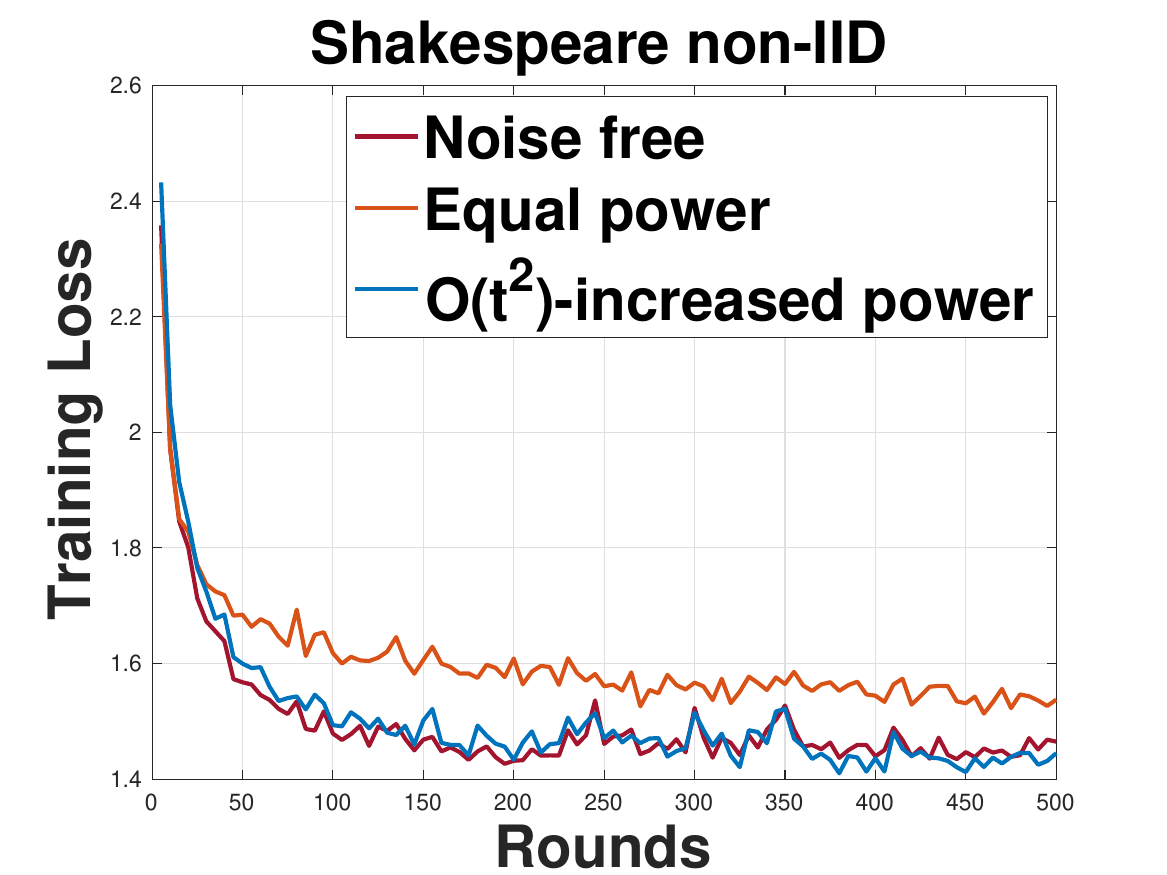}}
    \caption{Comparing the performance of transmit power control to the baselines with partial clients participation, model differential transmission, and both IID (left two) and non-IID (right two) FL on the Shakespeare dataset.}
    \label{fig:Shakespeare-DT}
    \vspace{-0.15in}
\end{figure}

\mypara{MDT.} We next present the experiment results of model differential transmission. Note that, by applying MDT, the uplink transmission power of the proposed scheme remains constant (recall that SNR is set as 10dB) while the downlink transmission power still increases at the rate of $\mathcal{O}(t^2)$. Figs.~\ref{fig:MNIST-DT}, \ref{fig:CIFAR10-DT} and \ref{fig:Shakespeare-DT} illustrate the test accuracies and training losses with MDT under MNIST, CIFAR-10 and Shakespeare datasets and the final model accuracies of the three schemes are summarized in Table \ref{table:accuracy-dt}. We see that the proposed power control policy achieves $99.7\%$ ($99.7\%$), $99.2\%$ ($98.0\%$) and $100\%$ ($98.9\%$) of the ideal test accuracy in IID (non-IID) data setting under MNIST, CIFAR-10 and Shakespeare datasets, respectively, which significantly outperforms the baseline equal power allocation scheme.% that have $97.39\%$ ($98.32\%$), $92.96\%$ ($65.10\%$) and $92.04\%$ ($95.51\%$) accuracies.

% comparing with $97.39\%$ ($98.32\%$), $92.96\%$ ($65.10\%$) and $92.04\%$ ($95.51\%$) achieved by the equal power allocation scheme.
% Finally, all experiment results demonstrate that by allocating the transmission power at the rate of $\mathcal{O}(t^2)$ going with the communication round $t$, we can achieve lower training loss and better test accuracy of the FL tasks than equal power allocation scheme under the same energy budget.

\begin{table}
\parbox{.45\linewidth}{
\caption{\small \zixiangTCCN{Performance Summary of MT}}
\label{table:accuracy}
\centering
\resizebox{\linewidth}{!}{%
\begin{tabular}{cccccc}
\hline
Dataset     & Scheme & Accuracy & Percentage* & Accuracy & Percentage*        \\ \hline
\multicolumn{2}{c}{}& \multicolumn{2}{c}{IID} & \multicolumn{2}{c}{non-IID}                                                       \\ \hline
MNIST       & Noise free          & 99.3\%  & 100\%        & 99.1\%    & 100\%\\
            & Increased power     & 99.1\%  & 99.8\%      & 99.0\%    & 99.9\%\\
            & Equal power         & 98.5\%  & 99.2\%      & 98.4\%    & 99.3\% \\ \hline
CIFAR-10    & Noise free          & 79.5\%  & 100\%        & 54.3\%    & 100\%\\
            & Increased power     & 78.9\%  & 99.2\%      & 52.1\%    & 95.9\%\\
            & Equal power         & 71.7 \% & 90.2\%      & 44.3\%    & 81.6\%\\ \hline
Shakespeare & Noise free          & 57.8\%  & 100\%        & 56.8\%    & 100\%\\
            & Increased power     & 57.8\%  & 100\%      & 56.4\%    & 99.3\%\\
            & Equal power         & 52.9 \% & 91.5\%      & 54.4\%    & 95.8\%\\ \hline
\end{tabular}
}%
}
\hfill
\parbox{.45\linewidth}{
\caption{\small \zixiangTCCN{Performance Summary of MDT}}
\label{table:accuracy-dt}
\centering
\resizebox{\linewidth}{!}{%
\begin{tabular}{cccccc}
\hline
Dataset     & Scheme & Accuracy & Percentage* & Accuracy & Percentage*        \\ \hline
\multicolumn{2}{c}{}& \multicolumn{2}{c}{IID} & \multicolumn{2}{c}{non-IID}                                                       \\ \hline
MNIST       & Noise free          & 99.3\%  & 100\%        & 99.1\%    & 100\%\\
            & Increased power     & 99.0\%  & 99.7\%      & 98.8\%    & 99.7\%\\
            & Equal power         & 96.7\%  & 97.4\%      & 97.5\%    & 98.4\% \\ \hline
CIFAR-10    & Noise free          & 79.5\%  & 100\%        & 54.3\%    & 100\%\\
            & Increased power     & 78.9\%  & 99.2\%      & 53.2\%    & 98.0\%\\
            & Equal power         & 73.9 \% & 93.0\%      & 47.7\%    & 87.8\%\\ \hline
Shakespeare & Noise free          & 57.8\%  & 100\%        & 56.8\%    & 100\%\\
            & Increased power     & 57.8\%  & 100\%      & 56.2\%    & 98.9\%\\
            & Equal power         & 53.3 \% & 92.2\%      & 54.3\%    & 95.6\%\\ \hline
\end{tabular}
}%
}
\end{table}
\subsection{Experiment Results for Receive Diversity Combining}

%  \begin{figure}
%      \centering
%      \subfigure{
%          \includegraphics[width=0.23\textwidth]{minst-iid-acc-comb-eps-converted-to.pdf}}
%      \subfigure{
%          \includegraphics[width=0.23\textwidth]{minst-iid-loss-comb-eps-converted-to.pdf}}
%      \subfigure{
%          \includegraphics[width=0.23\textwidth]{minst-noniid-acc-comb-eps-converted-to.pdf}}
%      \subfigure{
%          \includegraphics[width=0.23\textwidth]{minst-noniid-loss-comb-eps-converted-to.pdf}}
%      \caption{Comparing the performance of receive diversity combining to the baselines with partial clients participation, model differential transmission, and both IID (left two) and non-IID (right two) FL on the MNIST dataset.}
%      \label{fig:MNIST-Combine}
%      \vspace{-0.15in}
%  \end{figure}
\begin{figure}
    \centering
    \subfigure{
        \includegraphics[width=0.23\textwidth]{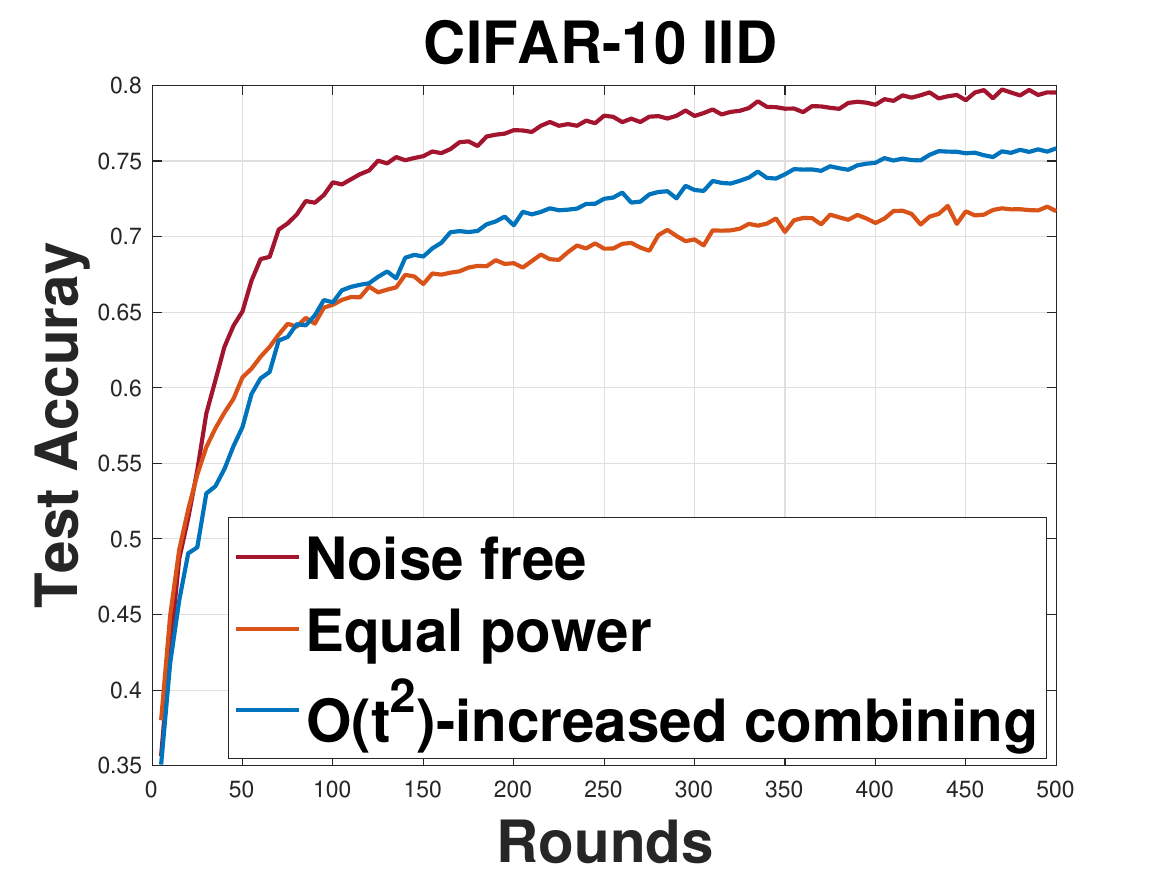}}
    \subfigure{
        \includegraphics[width=0.23\textwidth]{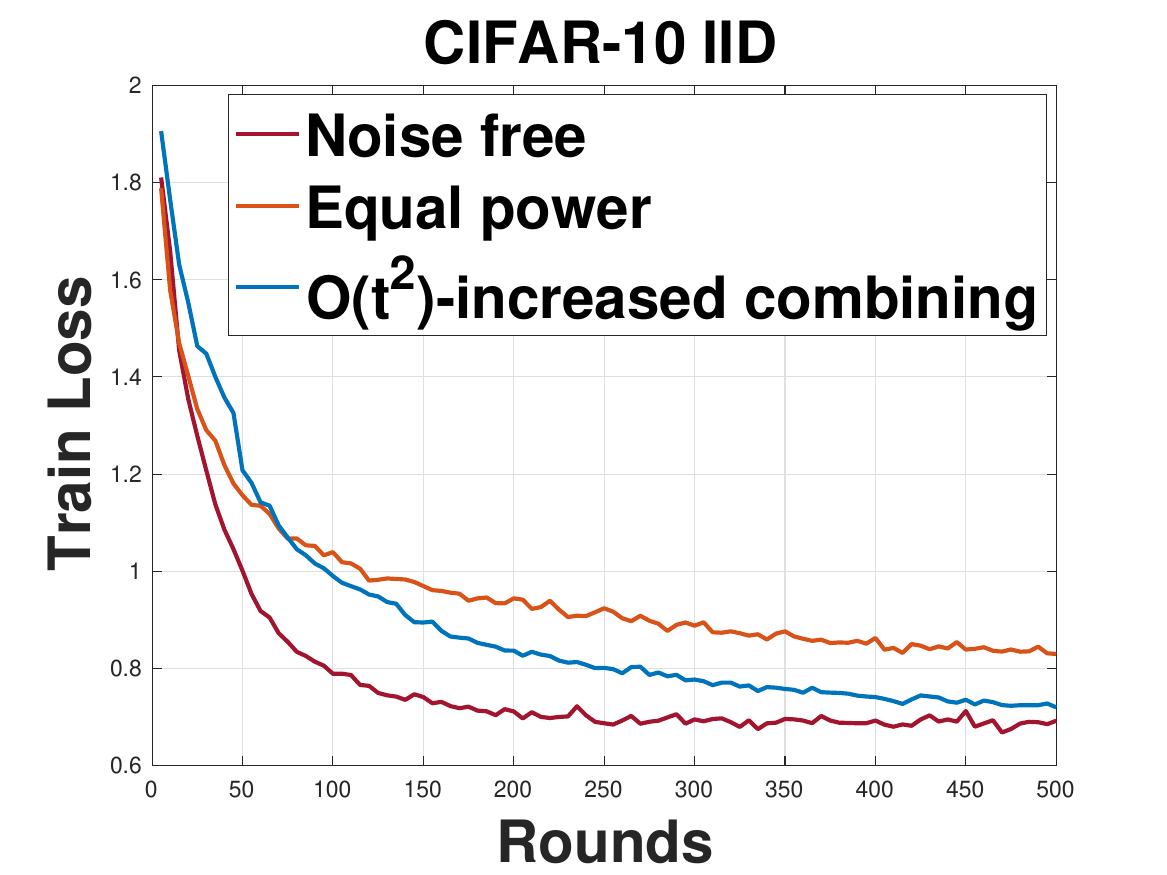}}
    \subfigure{
        \includegraphics[width=0.23\textwidth]{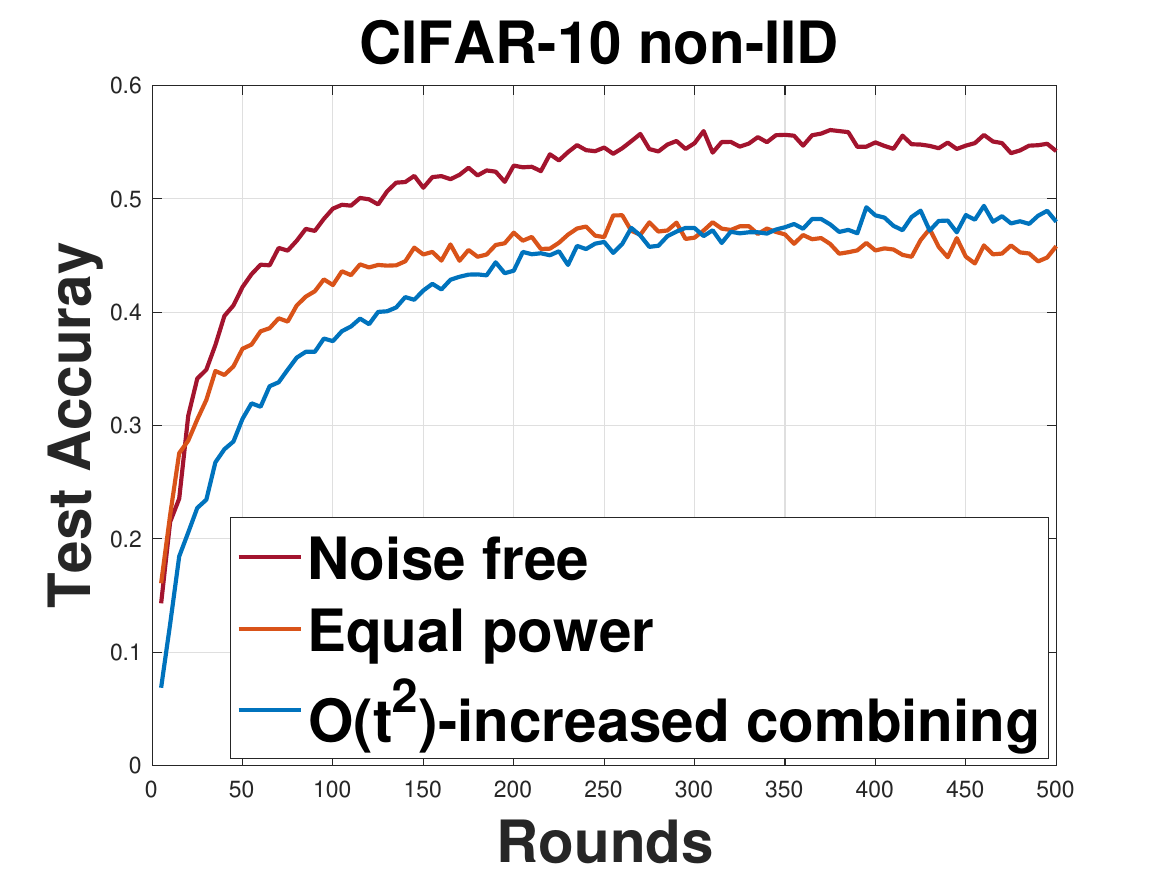}}
    \subfigure{
        \includegraphics[width=0.23\textwidth]{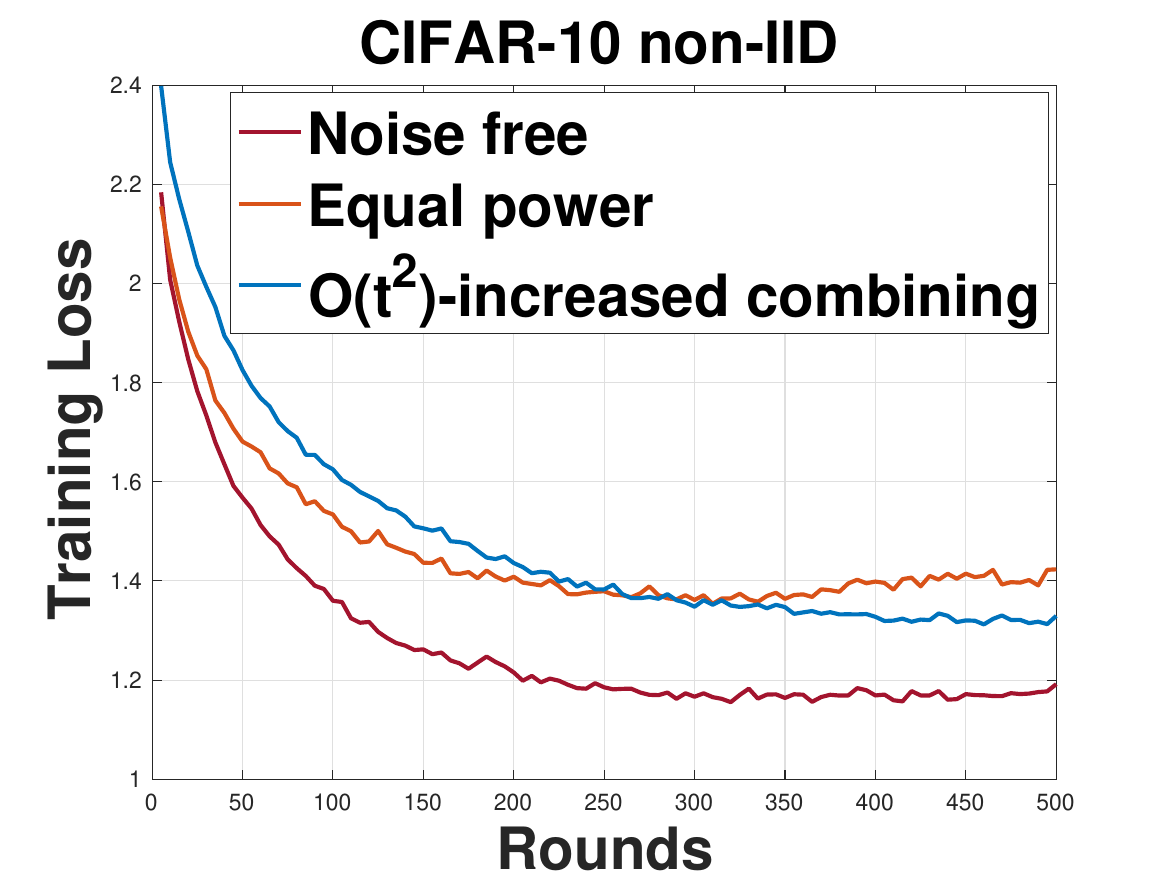}}
    \caption{Comparing the performance of receive diversity combining to the baselines with partial clients participation, model differential transmission, and both IID (left two) and non-IID (right two) FL on the CIFAR-10 dataset.}
    \label{fig:CIFAR10-Combine}
    \vspace{-0.15in}
\end{figure}

We next evaluate the performance of receive diversity combining. Due to space limitation, we only report the result for CIFAR-10, but similar conclusions hold for other tasks. Fig.~\ref{fig:CIFAR10-Combine} captures the test accuracies and training losses of receive diversity combining together with noise free and equal power allocation schemes. Although receive diversity combining is less flexible than the (continuous) transmit power control policy, we can see that it still outperforms the baseline method and approaches the noise-free ideal case. We notice that the training losses of receive diversity combining are larger than those of the equal power allocation scheme at the beginning stage of convergence, but as the diversity branches increase, the training losses eventually reduce and the model converges to a better global one. Particularly, receive diversity combining achieves $75.6\%$ and $47.8\%$ test accuracies for IID and non-IID data partitions, which is $3.9\%$ and $3.4\%$ better than the equal power allocation scheme. 

% We next evaluate the performance of diversity combining. Fig.~\ref{fig:MNIST-Combine} and Fig.~\ref{fig:CIFAR-Combine} capture the test accuracies and training losses of diversity combining with noise free and equal power allocation schemes. Although diversity combining is less flexible than the transmit power control policy, we can see that they can outperform the baseline and approach the noise-free ideal case. We notice that the training losses of diversity combining policy is larger than those of the equal power allocation scheme at the beginning stage of convergence, but as the diversity branches increase they will eventually reduce and the model converges a better global one. %Particularly, diversity combining achieves $99.08\%$ and $98.98\%$ test accuracies for IID and non-IID data partitions, which is $0.56\%$ and $0.55\%$ improved than the equal power allocation scheme. 

\section{Conclusion}\label{sec:conclusion}
In this paper, we have investigated federated learning over noisy channels, where a \textsc{FedAvg} pipeline with both uplink and downlink communication noises was studied. By theoretically analyzing the model training convergence, we have proved that the same $\mathcal{O}({1}/{T})$ convergence rate of \textsc{FedAvg} under perfect (noise-free) communications can be maintained if the uplink and downlink SNRs are controlled as $\mathcal{O}(t^2)$ over noisy channels for direct model transmission, and $\mathcal{O}(1)$ for model differential transmission. We have showcased two widely used communication methods -- transmit power control and receive diversity combining -- to implement these theoretical results. Extensive experimental results have corroborated the theoretical analysis and demonstrated the performance superiority of the advanced designs over baseline methods under the same total energy budget. %This work also leaves some unanswered questions worth further exploring. We have assumed a fixed and known $T$ for the transmit power control design. Although this is reasonable for some FL tasks where the learning rounds are determined by the computation resources, there also exists other use cases where $T$ is not given {a priori}. How to develop an \emph{any-time} version of the proposed design is an interesting problem. Another potential direction is to expand the design to other communication techniques, such as MIMO.
Future research directions include relaxing the assumption of strongly convex loss functions to a broader class (e.g., convex, non-convex), and removing the fixed $T$ assumption to develop an \emph{any-time} version of the proposed design.

% \zixiangtcom{
% As for future research, it is of interest to relax the assumption of strongly convex objective function and fixed $T$ for the transmit power control, and develop an \emph{any-time} version of the proposed design.
% }

% There are some future research directions that worth exploring. We have assumed a fixed and known $T$ for the transmit power control design. Although this is reasonable for some FL tasks, there also exists other cases where $T$ is not given {a priori}. How to develop an \emph{any-time} version of the proposed design is an interesting problem. Another potential direction is to expand the design to other communication techniques, such as MIMO.

\appendices

\section{Proof of Theorem~\ref{thm:MTfull}}
\label{app:proof_thm1}

\subsection{Preliminaries}
\label{sec:notation}

With a slight abuse of notation, we change the timeline to be with respect to the overall SGD iteration time steps instead of the communication rounds, i.e., $$t=\underbrace{1, \cdots, E}_{\text{round 1}}, \underbrace{E+1, \cdots, 2E}_{\text{round 2}}, \cdots, \cdots, \underbrace{(T-1)E+1, \cdots, TE}_{\text{round $T$}}.$$ 
Note that the (noisy) global model $\vect{w}_{t}$ is only accessible at the clients for specific $t \in \mathcal{I}_E$, where $\mathcal{I}_E=\{nE ~|~n=1,2,\dots\}$, i.e., the time steps for communication.  The notations for $\eta_t$, $\sigma_t$ and $\zeta_t$ are similarly adjusted to this extended timeline, but their values remain constant inside the same round.

% For client $k$, it trains the model with mini-batch SGD and we denote the one-step model update as $\vect{v}_{t+1}^k = \vect{w}_{t}^k - \eta_t \nabla F_k (\vect{w}_{t}^k, \xi_t^k)$. 
% % \begin{equation}
% % \label{eqn:local_update}
% %     \vect{v}_{t+1}^k = \vect{w}_{t}^k - \eta_t \nabla F_k (\vect{w}_{t}^k, \xi_t^k).
% % \end{equation}
% If $t+1 \notin \mathcal{I}_E$, the next-step result is $\vect{w}_{t+1}^k = \vect{v}_{t+1}^k$. If $t+1 \in \mathcal{I}_E$, the server would receive the noisy weights $\vect{\tilde w}_{t+1}^k$ from all client $k \in [N]$. The global model is updated with
% %\begin{equation*}
% $
%     \vect{w}_{t+1} = \frac{1}{N} \sum_{k \in [N]} \vect{\tilde w}_{t+1}^k
% $
% %\end{equation*}
% and then broadcast to clients, who only receive the noisy models $\vect{\hat w}_{t+1}^k = \vect{w}_{t}+\vect{e}_{t+1}^k$ due to the downlink communication noise. They then start the next local training period. 

As mentioned in Section~\ref{sec:conv_MT_full}, the key technique in the proof is the \textbf{perturbed iterate framework}  in \cite{mania2017siam}. In particular, We first define the following variables%: when $t+1 \notin \mathcal{I}_E$, we have $\vect{u}_{t+1}^k = \vect{p}_{t+1}^k = \vect{w}_{t+1}^k$; when $t+1 \in \mathcal{I}_E$, we have: $\vect{u}_{t+1}^k = \frac{1}{N} \sum_{i \in [N]} \vect{v}_{t+1}^i$, $\vect{p}_{t+1}^k = \vect{u}_{t+1}^k+\frac{1}{N}\sum_{i \in [N] }\vect{n}_{t+1}^i$, and $\vect{w}_{t+1}^k = \vect{p}_{t+1}^k+\vect{e}_{t+1}^k$. 

\zixiangTCCN{
\begin{equation*}
 \vect{u}_{t+1}^k = \frac{1}{N} \sum_{i \in [N]} \vect{v}_{t+1}^i, \qquad 
 \vect{p}_{t+1}^k = \vect{u}_{t+1}^k+\frac{1}{N}\sum_{i \in [N] }\vect{n}_{t+1}^i, \qquad
 \text{and~ } \vect{w}_{t+1}^k = \vect{p}_{t+1}^k+\vect{e}_{t+1}^k,
 \end{equation*}
 to summarize the aforementioned steps:
\begin{align*}
    \vect{v}_{t+1}^k & \triangleq \vect{w}_t^k - \eta_t \nabla F_k(\vect{w}_t^k, \xi_t^k); \\
    \vect{u}_{t+1}^k & \triangleq \begin{cases}
        \vect{v}_{t+1}^k & \text{if~} t+1 \notin \mathcal{I}_E, \\
        \frac{1}{N} \sum_{i \in [N]} \vect{v}_{t+1}^i & \text{if~} t+1 \in \mathcal{I}_E; 
    \end{cases} \\
     \vect{p}_{t+1}^k & \triangleq \begin{cases}
        \vect{v}_{t+1}^k & \text{if~} t+1 \notin \mathcal{I}_E, \\
        \vect{u}_{t+1}^k+\frac{1}{N}\sum_{i \in [N] }\vect{n}_{t+1}^i & \text{if~} t+1 \in \mathcal{I}_E. 
    \end{cases}\\
    \vect{w}_{t+1}^k & \triangleq \begin{cases}
         \vect{v}_{t+1}^k & \text{if~} t+1 \notin \mathcal{I}_E, \\
        \vect{p}_{t+1}^k+\vect{e}_{t+1}^k & \text{if~} t+1 \in \mathcal{I}_E. 
    \end{cases}
 \end{align*}

% $$
}
Then, we construct the following \textit{virtual sequences}:
\begin{equation}\label{eqn:vir_seq}
\avgvect{v}_t=\frac{1}{N}\sum_{k=1}^N \vect{v}_t^k, \qquad 
\avgvect{u}_{t} = \frac{1}{N}\sum_{k=1}^N \vect{u}_t^k, \qquad
\avgvect{p}_t=\frac{1}{N}\sum_{k=1}^N \vect{p}_t^k, \qquad 
\text{and~ }\avgvect{w}_{t} = \frac{1}{N}\sum_{k=1}^N \vect{w}_t^k.
\end{equation}
% $$
% $\avgvect{v}_t=\frac{1}{N}\sum_{k=1}^N \vect{v}_t^k$,
% $\avgvect{p}_t=\frac{1}{N}\sum_{k=1}^N \vect{p}_t^k$,
% $\avgvect{w}_{t} = \frac{1}{N}\sum_{k=1}^N \vect{w}_t^k$ and $\avgvect{u}_{t} = \frac{1}{N}\sum_{k=1}^N \vect{u}_t^k$ 
We also define $\avgvect{g}_{t} = \frac{1}{N}\sum_{k=1}^N \nabla F_k(\vect{w}_t^k)$ and $\vect{g}_{t} = \frac{1}{N}\sum_{k=1}^N \nabla F_k(\vect{w}_t^k, \xi_t^k)$ for convenience. Therefore, $\avgvect{v}_{t+1} = \avgvect{w}_t - \eta_t \vect{g}_t$ and $\expt \squab{\vect{g}_t} = \avgvect{g}_t$. %Note that when $t+1 \in \mathcal{I}_E$, \congr{these virtual sequences are averaged over} $N$ instead of $K$ clients. 
After some manipulation, we can also write the specific formulations of these virtual sequences when $t+1 \in \mathcal{I}_E$ as follows: 
\begin{equation}
    \label{eqn:avg_pw}
    \avgvect{u}_{t+1} = \frac{1}{N} \sum_{i \in [N]} \vect{v}_{t+1}^i, \quad
    \avgvect{p}_{t+1}  = \avgvect{u}_{t+1} + \frac{1}{N}\sum_{i \in [N] }\vect{n}_{t+1}^i,  \quad
    \avgvect{w}_{t+1}  = \avgvect{p}_{t+1} + \frac{1}{N}\sum_{k=1}^N \vect{e}_{t+1}^k.
\end{equation}
% %$\avgvect{u}_{t+1} = \frac{1}{N}\sum_{k=1}^N \vect{u}_{t+1}^k = \frac{1}{N} \sum_{i \in [N]} \vect{v}_{t+1}^i$, $\avgvect{p}_{t+1} = \frac{1}{N}\sum_{k=1}^N \vect{p}_{t+1}^k = \frac{1}{N}\sum_{k=1}^N\left[\vect{u}^k_{t+1} + \frac{1}{N}\sum_{i \in [N]}\vect{n}_{t+1}^i\right] = \avgvect{u}_{t+1} + \frac{1}{N}\sum_{i \in [N] }\vect{n}_{t+1}^i$, and $\avgvect{w}_{t+1} = \frac{1}{N}\sum_{k=1}^N \vect{w}_{t+1}^k=\frac{1}{N}\sum_{k=1}^N\left[\vect{p}^k_{t+1} + \vect{e}_{t+1}^k\right] = \avgvect{p}_{t+1} + \frac{1}{N}\sum_{k=1}^N \vect{e}_{t+1}^k$.
% \begin{align}
% & \avgvect{u}_{t+1} = \frac{1}{N}\sum_{k=1}^N \vect{u}_{t+1}^k = \frac{1}{N} \sum_{i \in [N]} \vect{v}_{t+1}^i,\label{eqn:avg_u} \\
% & \avgvect{p}_{t+1} = \frac{1}{N}\sum_{k=1}^N \vect{p}_{t+1}^k = \frac{1}{N}\sum_{k=1}^N\left[\vect{u}^k_{t+1} + \frac{1}{N}\sum_{i \in [N]}\vect{n}_{t+1}^i\right] = \avgvect{u}_{t+1} + \frac{1}{N}\sum_{i \in [N] }\vect{n}_{t+1}^i,  \label{eqn:avg_p} \\
% & \avgvect{w}_{t+1} = \frac{1}{N}\sum_{k=1}^N \vect{w}_{t+1}^k=\frac{1}{N}\sum_{k=1}^N\left[\vect{p}^k_{t+1} + \vect{e}_{t+1}^k\right] = \avgvect{p}_{t+1} + \frac{1}{N}\sum_{k=1}^N \vect{e}_{t+1}^k. \label{eqn:avg_w}
% \end{align}
Note that for $t+1 \notin \mathcal{I}_E$, all these virtual sequences are the same. 
In addition, the global model (at the server) $\avgvect{p}_{t+1}$ is meaningful only at $t+1 \in \mathcal{I}_E$. We emphasize that when $t+1 \in \mathcal{I}_E$, Eqns.~\eqref{eqn:avg_pw} and \eqref{eqn:glbMT} indicate that $\avgvect{p}_{t+1} = \vect{w}_{t+1}$. Thus it is sufficient to analyze the convergence of $\norm{\avgvect{p}_{t+1}-\vect{w}^*}^2$.% to prove Theorem~\ref{thm:MTfull}.

% Here, we use an additional variable $\vect{u}_{t+1}$. And the former one represents the immediate result of one step SGD update from $\vect{w}_t^k$. If $t+1 \notin \mathcal{I}_E$, $\vect{v}_{t+1} = \vect{w}_{t+1}$. Otherwise,  $\vect{v}_{t+1}$ represents the original local updated model weights and $\vect{w}_{t+1}$ represents the weights obtained after quantization and communication steps.

\subsection{Lemmas and proofs}
\label{sec:key_lemma1}

\begin{lemma}%[Bounding one-step SGD]
\label{lemma:sgd}
    Let \zixiangTCCN{Assumption \ref{as:F}}  hold, $\eta_t$ is non-increasing, and $\eta_t \leq 2 \eta_{t+E}$ for all $t \geq 0$. If $\eta_t \leq {1}/(4L)$, we have
    % \begin{small}
        % \begin{equation*}
        %     \expt_{SG} \norm{\avgvect{v}_{t+1} - \vect{w}^*}^2 \leq (1-\eta_t \mu) \expt_{SG} \norm{\avgvect{w}_t - \vect{w}^*}^2 + \eta_t^2 \expt_{SG} \norm{\vect{g}_t - \avgvect{g}_t}^2 + 6 L \eta_t^2 \Gamma + 2\expt_{SG} \squab{\frac{1}{N}\sum_{k=1}^N \norm{\avgvect{w}_t - \vect{w}_t^k}^2}.
        % \end{equation*}
        % \begin{equation*}
            $\expt \norm{\avgvect{v}_{t+1} - \vect{w}^*}^2 \leq (1-\eta_t \mu) \expt \norm{\avgvect{w}_t - \vect{w}^*}^2 + \eta_t^2 \left( \sum_{k=1}^{N}{\delta_k^2}/{N^2} + 6 L \Gamma + 8(E-1)^2H^2\right)$.
        % \end{equation*}    
    % \end{small}
%    where $\Gamma = F^* - \frac{1}{N}\sum_{k=1}^N F_k^* \geq 0$.
\end{lemma}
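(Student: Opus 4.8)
The plan is to follow the perturbed-iterate analysis of \textsc{FedAvg} in the style of \cite{li2019convergence}, tracking the \emph{noise-free} averaged iterate $\avgvect{v}_{t+1} = \avgvect{w}_t - \eta_t \vect{g}_t$; note that $\avgvect{v}_{t+1}$ involves neither model aggregation nor channel noise, so the uplink/downlink error terms do not enter this lemma at all — only the mini-batch stochasticity of SGD and the drift of the local models about their average show up. First I would write $\avgvect{v}_{t+1} - \vect{w}^* = \left( \avgvect{w}_t - \vect{w}^* - \eta_t \avgvect{g}_t \right) + \eta_t \left( \avgvect{g}_t - \vect{g}_t \right)$ and expand the squared norm as $\norm{\avgvect{w}_t - \vect{w}^* - \eta_t \avgvect{g}_t}^2 + 2\eta_t \dotp{\avgvect{w}_t - \vect{w}^* - \eta_t \avgvect{g}_t}{\avgvect{g}_t - \vect{g}_t} + \eta_t^2 \norm{\avgvect{g}_t - \vect{g}_t}^2$. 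Taking expectation conditioned on the current local iterates $\{\vect{w}_t^k\}_k$ and using unbiasedness $\expt[\vect{g}_t] = \avgvect{g}_t$ (Assumption~\ref{as:F}-3) kills the cross term; and since mini-batches are drawn independently across clients, $\expt\norm{\avgvect{g}_t - \vect{g}_t}^2 = \tfrac{1}{N^2}\sum_k \expt\norm{\nabla F_k(\vect{w}_t^k,\xi_t^k) - \nabla F_k(\vect{w}_t^k)}^2 \le \sum_k \tfrac{\delta_k^2}{N^2}$, again by Assumption~\ref{as:F}-3. This yields the $\eta_t^2 \sum_k \delta_k^2/N^2$ term.

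It then remains to bound $\norm{\avgvect{w}_t - \vect{w}^* - \eta_t \avgvect{g}_t}^2 = \norm{\avgvect{w}_t - \vect{w}^*}^2 - 2\eta_t \dotp{\avgvect{w}_t - \vect{w}^*}{\avgvect{g}_t} + \eta_t^2 \norm{\avgvect{g}_t}^2$. For the last term, convexity of $\norm{\cdot}^2$ together with $L$-smoothness (which gives $\norm{\nabla F_k(\vect{w})}^2 \le 2L(F_k(\vect{w}) - F_k^*)$) yields $\eta_t^2\norm{\avgvect{g}_t}^2 \le \tfrac{2L\eta_t^2}{N}\sum_k \left( F_k(\vect{w}_t^k) - F_k^* \right)$. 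For the inner-product term I would split $\avgvect{w}_t - \vect{w}^* = (\avgvect{w}_t - \vect{w}_t^k) + (\vect{w}_t^k - \vect{w}^*)$ inside $\tfrac{1}{N}\sum_k \dotp{\cdot}{\nabla F_k(\vect{w}_t^k)}$: on the first summand apply the weighted Young inequality $-2\dotp{a}{b} \le \tfrac{1}{\eta_t}\norm{a}^2 + \eta_t\norm{b}^2$, producing the drift term $\tfrac{1}{N}\sum_k \norm{\avgvect{w}_t - \vect{w}_t^k}^2$ and one further copy of the gradient-norm term (again bounded by smoothness); on the second summand apply $\mu$-strong convexity (Assumption~\ref{as:F}-2) to get $-2\eta_t\left( F_k(\vect{w}_t^k) - F_k(\vect{w}^*) \right) - \mu\eta_t \norm{\vect{w}_t^k - \vect{w}^*}^2$.

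Next I would collect the function-value terms. Using $F_k(\vect{w}_t^k) - F_k(\vect{w}^*) = \left( F_k(\vect{w}_t^k) - F_k^* \right) - \left( F_k(\vect{w}^*) - F_k^* \right)$ and $\tfrac{1}{N}\sum_k\left( F_k(\vect{w}^*) - F_k^* \right) = \Gamma$, the coefficient of $\tfrac{1}{N}\sum_k\left( F_k(\vect{w}_t^k) - F_k^* \right)$ becomes $-2\eta_t(1 - 2L\eta_t)$, which is non-positive and in fact at most $-\eta_t$ since $\eta_t \le \tfrac{1}{4L}$; re-centering this quantity around the global optimum value $F^*$, discarding the resulting non-positive piece proportional to $\left( F(\avgvect{w}_t) - F^* \right)$, and bounding the residual differences $F_k(\vect{w}_t^k) - F_k(\avgvect{w}_t)$ via smoothness/convexity at the cost of another constant multiple of the drift term, one arrives at a net function-value contribution of at most $6L\eta_t^2\Gamma$ — this is the step where the hypothesis $\eta_t \le \tfrac{1}{4L}$ is essential. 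Finally, convexity $\norm{\avgvect{w}_t - \vect{w}^*}^2 \le \tfrac{1}{N}\sum_k\norm{\vect{w}_t^k - \vect{w}^*}^2$ turns $-\mu\eta_t\tfrac{1}{N}\sum_k\norm{\vect{w}_t^k - \vect{w}^*}^2$ into $-\mu\eta_t\norm{\avgvect{w}_t - \vect{w}^*}^2$, giving the $(1-\eta_t\mu)$ prefactor, and leaves only the accumulated drift $\tfrac{1}{N}\sum_k\expt\norm{\avgvect{w}_t - \vect{w}_t^k}^2$.

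That drift is controlled by the standard \textsc{FedAvg} divergence estimate (possibly stated as a separate lemma): within a round the local models evolve from the re-synchronized model for at most $E-1$ local SGD steps, so expressing $\vect{w}_t^k - \avgvect{w}_t$ as a sum of at most $E-1$ stochastic-gradient increments and applying Cauchy--Schwarz with $\expt\norm{\nabla F_k(\cdot,\xi)}^2 \le H^2$ (Assumption~\ref{as:F}-4) and the step-size conditions ($\eta_t$ non-increasing and $\eta_t \le 2\eta_{t+E}$, which give $\eta_{t_0} \le 2\eta_t$ at the round's starting step $t_0$) gives $\tfrac{1}{N}\sum_k\expt\norm{\avgvect{w}_t - \vect{w}_t^k}^2 \le 4\eta_t^2(E-1)^2H^2$; multiplied by its coefficient this is the $8(E-1)^2H^2\eta_t^2$ term, and adding all contributions proves the claim. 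I expect the main obstacle to be the bookkeeping in the third paragraph — pinning down exactly the constant $6L$ in front of $\Gamma$ while correctly absorbing the sign-indefinite $F^*$-centered term — with the accumulated-drift estimate the other technical ingredient to get right.
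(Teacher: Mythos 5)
Your proposal is correct and follows essentially the same route as the paper, which does not spell the argument out but states that the proof "can be constructed following the same steps as in \cite{Zheng2020jsac,li2019convergence}" -- i.e., precisely the perturbed-iterate decomposition, the unbiasedness/variance bound giving $\sum_k \delta_k^2/N^2$, the strong-convexity/smoothness bookkeeping yielding the $6L\Gamma$ term under $\eta_t \le \frac{1}{4L}$, and the doubled drift bound $2\cdot 4\eta_t^2(E-1)^2H^2$ that you reconstruct. Nothing in your write-up deviates from that standard argument, so no further comparison is needed.
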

Lemma~\ref{lemma:sgd} establishes a bound for the one-step SGD. This result only concerns the local model update and is not impacted by the noisy communication. The derivation is similar to the technique in \cite{stich2018local}. %The proof can be constructed following the same steps as in \cite{Zheng2020jsac,li2019convergence}.
% Proof of Lemma 1-4 have been established in \cite{li2019convergence} and the JSAC paper
% \begin{lemma}[Unbiased and variance bounded client sampling]
% \label{lemma:sample}
%     Let Assumption \ref{as:F}-4) hold. For $t+1 \in \mathcal{I}_E$, assume that $\eta_t \leq 2\eta_{t+E}$ for all $t \geq 0$. We have
%     \begin{equation}\label{eqn:sam_unbiased}
%         \expt \squab{\avgvect{u}_{t+1}} = \avgvect{v}_{t+1}, \quad \text{and} \quad \expt \norm{\avgvect{v}_{t+1} - \avgvect{u}_{t+1}}^2 \leq \frac{N-K}{N-1} \frac{4}{K} \eta_t^2 E^2 H^2.
%     \end{equation}
% \end{lemma}

\begin{lemma}%[Unbiased and variance bounded error]
\label{lemma:quan}
    We have
    \begin{equation}\label{eqn:unbiased}
    \expt\squab{\avgvect{p}_{t+1}} = \avgvect{u}_{t+1}, \expt\norm{\avgvect{u}_{t+1} - \avgvect{p}_{t+1}}^2 =  \frac{d\sigma_{t+1}^2}{N^2}; \quad \expt\squab{\avgvect{w}_{t+1}} = \avgvect{p}_{t+1}, \expt\norm{\avgvect{w}_{t+1} - \avgvect{p}_{t+1}}^2 =  \frac{d\zeta_{t+1}^2}{N^2}
    \end{equation}
    % \begin{equation}\label{eqn:unbiased1}
    %     \expt\squab{\avgvect{p}_{t+1}} = \avgvect{u}_{t+1}, \qquad  \expt\norm{\avgvect{u}_{t+1} - \avgvect{p}_{t+1}}^2 =  \frac{d\sigma_{t+1}^2}{N^2},
    % \end{equation}
    % and
    % \begin{equation}\label{eqn:unbiased2}
    %     \expt\squab{\avgvect{w}_{t+1}} = \avgvect{p}_{t+1}, \qquad  \expt\norm{\avgvect{w}_{t+1} - \avgvect{p}_{t+1}}^2 =  \frac{d\zeta_{t+1}^2}{N^2},
    % \end{equation}    
    for $t+1 \in \mathcal{I}_{E}$, where $\sigma_{t+1}^2\triangleq\sum_{k \in [N] }\sigma^2_{t+1,k}$ and $\zeta_{t+1}^2\triangleq\sum_{k \in [N] }\zeta^2_{t+1,k}$.
\end{lemma}

% \mypara{Proof of Lemma \ref{lemma:quan}.}
\begin{proof}
    For $t+1 \in \mathcal{I}_E$, we have
$  \expt \left[\avgvect{p}_{t+1} - \avgvect{u}_{t+1} \right] = \frac{1}{K}\sum_{k \in [N]} \expt [\vect{n}^k_{t+1}]=0$ and
$
\expt \norm{\avgvect{p}_{t+1} - \avgvect{u}_{t+1}}^2  = \frac{1}{N^2}\expt \norm{\sum_{k \in [N]} \vect{n}^k_{t+1}}^2= \frac{1}{N^2} \sum_{k \in [N]}\expt \norm{\vect{n}^k_{t+1}}^2 = \frac{d\sigma^2_{t+1}}{N^2}
$
from \eqref{eqn:avg_pw}, because $\{\vect{n}^k_{t+1}, \forall k\}$ are independent variables. 
Similarly, according to (\ref{eqn:avg_pw}), we have
$  \expt \left[\avgvect{w}_{t+1} - \avgvect{p}_{t+1} \right]
             = \frac{1}{N}\sum_{k \in [N]} \expt [\vect{e}^k_{t+1}]=0$ 
   and
    $
  \expt \norm{\avgvect{w}_{t+1} - \avgvect{p}_{t+1}}^2  
             = \frac{1}{N^2}\expt \norm{\sum_{k \in [N]} \vect{e}^k_{t+1}}^2 =
             \frac{1}{N^2} \sum_{k \in [N] }\expt \norm{\vect{e}^k_{t+1}}^2
             = \frac{\sum_{k\in [N]}d\zeta^2_{t+1,k}}{N^2}=\frac{d\zeta^2_{t+1}}{N^2}.
             $
\end{proof}

\subsection{Proof of Theorem}
\label{sec:compl_proof1}
We need to consider four cases for the analysis of the convergence of $\expt\norm{\avgvect{p}_{t+1}-\vect{w}^*}^2$.

1) If $t \notin \mathcal{I}_E$ and $t+1 \notin \mathcal{I}_E$, $\avgvect{w}_{t} = \avgvect{p}_{t}$ and $\avgvect{v}_{t+1} = \avgvect{p}_{t+1}$. Using Lemma \ref{lemma:sgd}, we have:
\begin{equation*}
    % \label{eqn:case1}
    \expt  \norm{\avgvect{p}_{t+1} - \vect{w}^*}^2 = \expt \norm{\avgvect{v}_{t+1} - \vect{w}^*}^2 \leq (1-\eta_t \mu) \expt \norm{\avgvect{p}_t - \vect{w}^*}^2 + \eta_t^2 \squab{\sum_{k=1}^{N}\frac{\delta_k^2}{N^2} + 6L \Gamma + 8(E-1)^2 H^2}.
\end{equation*}
        % \begin{align}
        %   &  \expt  \norm{\avgvect{p}_{t+1} - \vect{w}^*}^2 = \expt \norm{\avgvect{v}_{t+1} - \vect{w}^*}^2 \nonumber \\
        %     % & \leq (1-\eta_t \mu) \expt \norm{\avgvect{w}_t - \vect{w}^*}^2 + \eta_t^2 \expt \norm{\vect{g}_t - \avgvect{g}_t}^2 + 6 L \eta_t^2 \Gamma + 2\expt \squab{\frac{1}{N}\sum_{k=1}^N \norm{\avgvect{w}_t - \vect{w}_t^k}^2} \nonumber \\
        %     & \leq (1-\eta_t \mu) \expt \norm{\avgvect{p}_t - \vect{w}^*}^2 + \eta_t^2 \squab{\sum_{k=1}^{N}\frac{\delta_k^2}{N^2} + 6L \Gamma + 8(E-1)^2 H^2}.
        %     \label{eqn:case1}
        % \end{align}

2) If $t \in \mathcal{I}_E$ and $t+1 \notin \mathcal{I}_E$, we still have $\avgvect{v}_{t+1} = \avgvect{p}_{t+1}$. With $\avgvect{w}_{t} = \avgvect{p}_{t} + \frac{1}{N}\sum_{k=1}^N \vect{e}_{t}^k$, we have:
\begin{equation*}
    \norm{\avgvect{w}_t - \vect{w}^*}^2 = \norm{\avgvect{w}_t - \avgvect{p}_t + \avgvect{p}_t - \vect{w}^*}^2 = \norm{\avgvect{p}_t - \vect{w}^*}^2 + \underbrace{\norm{\avgvect{w}_t - \avgvect{p}_t}^2}_{A_1} + \underbrace{2\dotp{\avgvect{w}_t - \avgvect{p}_t}{\avgvect{p}_t - \vect{w}^*}}_{A_2}.
\end{equation*}
 %and we need to evaluate $\expt \norm{\avgvect{w}_{t} - \vect{w}^*}^2$ with respect to all three sources of randomness. 
 We first note that the expectation of $A_2$ over the noise randomness is zero since we have $\expt \squab{\avgvect{w}_{t} - \avgvect{p}_{t}} = \vect{0}$ (from \eqref{eqn:unbiased}). Second, the expectation of $A_1$ can be bounded using Lemma \ref{lemma:quan}. We then have
\begin{align}
    & \expt  \norm{\avgvect{p}_{t+1} - \vect{w}^*}^2 = \expt \norm{\avgvect{v}_{t+1} - \vect{w}^*}^2 \nonumber \\
    % & \leq (1-\eta_t \mu) \expt \norm{\avgvect{w}_t - \vect{w}^*}^2 + \eta_t^2 \expt \norm{\vect{g}_t - \avgvect{g}_t}^2 + 6 L \eta_t^2 \Gamma + 2\expt \squab{\frac{1}{N}\sum_{k=1}^N \norm{\avgvect{w}_t - \vect{w}_t^k}^2} \nonumber \\
    & \leq (1-\eta_t \mu) \expt \norm{\avgvect{p}_t - \vect{w}^*}^2 +(1-\eta_t \mu) \expt\norm{\avgvect{w}_t - \avgvect{p}_t}^2  + \eta_t^2 \squab{\sum_{k=1}^{N}\frac{\delta_k^2}{N^2} + 6L \Gamma + 8(E-1)^2 H^2 } \nonumber \\
    & \leq (1-\eta_t \mu) \expt \norm{\avgvect{p}_t - \vect{w}^*}^2 +(1-\eta_t \mu)\frac{d\zeta_{t}^2}{N^2} + \eta_t^2 \squab{\sum_{k=1}^{N}\frac{\delta_k^2}{N^2} + 6L \Gamma + 8(E-1)^2 H^2 }. \label{eqn:case2}
\end{align}
        
3) If $t \notin \mathcal{I}_E$ and $t+1 \in \mathcal{I}_E$, then we still have $\avgvect{w}_{t} = \avgvect{p}_{t}$. For $t+1$, we need to evaluate the convergence of $\expt\norm{\avgvect{p}_{t+1}-\vect{w}^*}^2$. We have
 \begin{equation} \label{eqn:depart0}
    \begin{split}
        \norm{\avgvect{p}_{t+1} - \vect{w}^*}^2 & = \norm{\avgvect{p}_{t+1} - \avgvect{u}_{t+1} + \avgvect{u}_{t+1}- \vect{w}^*}^2 \\
        & = \underbrace{\norm{\avgvect{p}_{t+1} - \avgvect{u}_{t+1}}^2}_{B_1} + \underbrace{\norm{\avgvect{u}_{t+1}- \vect{w}^*}^2}_{B_2} + \underbrace{2\dotp{\avgvect{p}_{t+1} - \avgvect{u}_{t+1}}{\avgvect{u}_{t+1}- \vect{w}^*}}_{B_3}.
    \end{split}
\end{equation}
We first note that the expectation of $B_3$ over the noise is zero since we have $\expt \squab{\avgvect{u}_{t+1} - \avgvect{p}_{t+1}} = \vect{0}$ (from \eqref{eqn:unbiased}). Second,  the expectation of $B_1$ can be bounded using Lemma \ref{lemma:quan}. Noticing that $\avgvect{u}_{t+1}=\avgvect{v}_{t+1}$ for $B_2$ and applying Lemma~\ref{lemma:sgd}, we have:
% \begin{equation*}
%     \expt\norm{\avgvect{p}_{t+1}-\vect{w}^*}^2 \leq (1-\eta_t \mu) \expt \norm{\avgvect{p}_t - \vect{w}^*}^2 + \frac{d\sigma_{t+1}^2}{N^2} + \eta_t^2 \squab{\sum_{k=1}^N \frac{\delta_k^2}{N^2} + 6L \Gamma + 8(E-1)^2 H^2}.
% \end{equation*}
   \begin{align}
        & \expt\norm{\avgvect{p}_{t+1}-\vect{w}^*}^2  \leq \expt\norm{\avgvect{v}_{t+1}- \vect{w}^*}^2 + \frac{d\sigma_{t+1}^2}{K^2} \nonumber  \\
        & \leq (1-\eta_t \mu) \expt \norm{\avgvect{p}_t - \vect{w}^*}^2 + \frac{d\sigma_{t+1}^2}{N^2} + \eta_t^2 \squab{\sum_{k=1}^N \frac{\delta_k^2}{N^2} + 6L \Gamma + 8(E-1)^2 H^2}.  \label{eqn:case3}
   \end{align}

4) If $t \in \mathcal{I}_E$ and $t+1 \in \mathcal{I}_E$, $\avgvect{v}_{t+1}\neq \avgvect{p}_{t+1}$ and $\avgvect{w}_{t}\neq \avgvect{p}_{t}$. (Note that this is possible only for $E=1$.) Combining the results from the previous two cases, we have 
\begin{equation}
\label{eqn:case4}
% \begin{split}
    \expt\norm{\avgvect{p}_{t+1}-\vect{w}^*}^2   \leq (1-\eta_t \mu) \expt \norm{\avgvect{p}_t - \vect{w}^*}^2 + (1-\eta_t \mu)\frac{d\zeta_{t}^2}{N^2} + \frac{d\sigma_{t+1}^2}{N^2} + \eta_t^2 \squab{\sum_{k=1}^N\frac{\delta_k^2}{N^2} + 6L \Gamma + 8(E-1)^2 H^2}.
% \end{split}
\end{equation}

Finally, we have that inequality~\eqref{eqn:case4} holds for all four cases. Denote $\Delta_t = \expt \norm{\avgvect{p}_{t}- \vect{w}^*}^2$. If we set the effective noise power $\sigma^2_{t+1}$ and $\zeta^2_{t}$ such that  $\sigma^2_{t+1} \leq N^2 \eta^2_t$ and $\zeta^2_{t} \leq N^2 \frac{\eta_t^2}{1-\eta_t\mu}$, we always have
%    \begin{equation*}
      $  \Delta_{t+1} \leq (1-\eta_t\mu) \Delta_{t} + \eta_t^2 D$,  
%    \end{equation*}
    where
%    \begin{equation*}
      $  D = \sum_{k=1}^N \frac{\delta_k^2}{N^2} + 6L \Gamma + 8(E-1)^2 H^2 +  2d$. 
%    \end{equation*}
    We decay the learning rate as $\eta_t=\frac{\beta}{t+\gamma}$ for some $\beta \geq \frac{1}{\mu}$ and $\gamma \geq 0$ such that $\eta_1 \leq \min\{\frac{1}{\mu}, \frac{1}{4L}\}=\frac{1}{4L}$ and $\eta_t \leq 2 \eta_{t+E}$. Now we prove that $\Delta_{t} \leq \frac{v}{\gamma +t}$ where $v = \max \{ \frac{\beta^2 D}{\beta \mu -1}, (\gamma+1) \Delta_{0} \}$ by induction. First, the definition of $v$ ensures that it holds for $t=0$. Assume the conclusion holds for some $t>0$. It then follows that
%     \begin{equation*}
% \resizebox{\linewidth}{\height}{ $
%             \Delta_{t+1} \leq (1 - \eta \mu) \Delta_t + \eta_t^2 D = \left ( 1-\frac{\beta\mu}{t+\gamma} \right ) \frac{v}{t+\gamma} + \frac{\beta^2D}{(t+\gamma)^2} = \frac{t+\gamma-1}{(t+\gamma)^2}v + \squab{\frac{\beta^2D}{(t+\gamma)^2} - \frac{\mu \beta -1}{(t+\gamma)^2}v}  \leq \frac{v}{t+\gamma+1}.  
%             $}
%     \end{equation*}    
    \begin{equation*}
        \begin{split}
% \resizebox{\linewidth}{\height}{ $
            \Delta_{t+1} &\leq (1 - \eta \mu) \Delta_t + \eta_t^2 D = \left ( 1-\frac{\beta\mu}{t+\gamma} \right ) \frac{v}{t+\gamma} + \frac{\beta^2D}{(t+\gamma)^2} \\
            &= \frac{t+\gamma-1}{(t+\gamma)^2}v + \squab{\frac{\beta^2D}{(t+\gamma)^2} - \frac{\mu \beta -1}{(t+\gamma)^2}v}  \leq \frac{v}{t+\gamma+1}.  
            % $}
        \end{split}
    \end{equation*}
    Then by the strong convexity of $F(\cdot)$,
%    \begin{equation*}
$        \expt \squab{F(\avgvect{w}_t)} - F^* \leq \frac{L}{2} \Delta_t \leq \frac{L}{2}\frac{v}{\gamma+t}$. 
%    \end{equation*}
    Specially, if we choose $\beta = \frac{2}{\mu}$, $\gamma = \max\{ 8\frac{L}{\mu}-1, E\}$ and denote $\phi = \frac{L}{\mu}$, then $\eta_t = \frac{2}{\mu}\frac{1}{\gamma+t}$. Using $\max\{a, b\} \leq a+b$, we have
    % \begin{equation*}
        % \begin{split}
% \resizebox{\linewidth}{\height}{ $
            % v  &\leq \frac{\beta^2 D}{\beta \mu -1} + (\gamma+1) \Delta_{0} \\
            % & = 4\frac{D}{\mu^2} + (\gamma+1)\Delta_0 \\
            % & \leq 4\frac{D}{\mu^2} + (8\frac{L}{\mu} -1 + E + 1)\Delta_0 \\
            % & = 4\frac{D}{\mu^2} + (8\frac{L}{\mu} + E)\norm{\vect{w}_0 - \vect{w}^*}^2.
     $       v  \leq \frac{4D}{\mu^2} + (\gamma+1)\Delta_0 \leq  \frac{4D}{\mu^2} + \left (8 \phi + E \right)\norm{\vect{w}_0 - \vect{w}^*}^2$.         
        % \end{split}
% $}
    % \end{equation*}
    Therefore,
    % \begin{equation}
    % \label{eqn:proofofthm1}
        $\Delta_{t} \leq \frac{v}{\gamma +t} = \frac{1}{\gamma+t} \squab{ \frac{4D}{\mu^2} + \left( 8\phi + E \right) \norm{\vect{w}_0 - \vect{w}^*}^2}$. 
        % \begin{split}
        %     \expt \squab{F(\avgvect{w}_t)} - F^*  & \leq \frac{L}{2(\gamma+t)} \squab{4\frac{D}{\mu^2} + (8\frac{L}{\mu} + E)\norm{\vect{w}_0 - \vect{w}^*}^2} \nonumber \\
        %     & = \frac{2\phi}{\gamma+t} \squab{\frac{D}{\mu} + \left ( 2L + \frac{E\mu}{4} \right )\norm{\vect{w}_0 - \vect{w}^*}^2}.
        % \end{split}
    % \end{equation}
    Setting $t=T$ concludes the proof.
%\end{proof}

\section{Proof of Theorem~\ref{thm:MTpart}}
\label{app:proof_thm_MTpart}

The additional difficulty in proving Theorem~\ref{thm:MTpart} comes from partial clients participation. The approach we take is to study a ``virtual'' FL process where \textit{all clients} receive the noisy downlink broadcast of the latest global model, and they all participate in the subsequent local model update phase. However, only the selected clients in $\mathcal{S}_{t+1}$ upload their updated local model to the server via the noisy uplink channel. It is clear that this ``virtual'' FL is equivalent to the original process in terms of the convergence -- clients that are not selected do not contribute to the global model aggregation. This seemingly redundant process, however, circumvents the difficulty due to partial clients participation as can be seen in the analysis.

Before presenting the proof, we first elaborate on some necessary changes of notation. The notation defined in \zixiangTCCN{Appendix~\ref{sec:notation}} can be largely reused, with the notable distinction that now we have to separate the cases for $K$ and for $N$. For $t+1 \in \mathcal{I}_E$, the variables of $\vect{u}_{t+1}^k$ and $\vect{p}_{t+1}^k$ are now defined as: $\vect{u}_{t+1}^k = \frac{1}{K} \sum_{i \in S_{t}} \vect{v}_{t+1}^i$ and $\vect{p}_{t+1}^k=\vect{u}_{t+1}^k+\frac{1}{K}\sum_{i \in \mathcal{S}_{t} }\vect{n}_{t+1}^i$. 
% \begin{align*}
%     \vect{u}_{t+1}^k & \triangleq \begin{cases}
%         \vect{v}_{t+1}^k & \text{if~} t+1 \notin \mathcal{I}_E, \\
%         \frac{1}{K} \sum_{i \in S_{t}} \vect{v}_{t+1}^i & \text{if~} t+1 \in \mathcal{I}_E; 
%     \end{cases} \\
%     \vect{p}_{t+1}^k & \triangleq \begin{cases}
%         \vect{v}_{t+1}^k & \text{if~} t+1 \notin \mathcal{I}_E, \\
%         \vect{u}_{t+1}^k+\frac{1}{K}\sum_{i \in \mathcal{S}_{t} }\vect{n}_{t+1}^i & \text{if~} t+1 \in \mathcal{I}_E. 
%     \end{cases}
% \end{align*}
Note that Lemma~\ref{lemma:quan} still holds with the following update: $\expt\norm{\avgvect{u}_{t+1} - \avgvect{p}_{t+1}}^2 =  \frac{d\bar{\sigma}_{t+1}^2}{K}$, $\expt\norm{\avgvect{w}_{t+1} - \avgvect{p}_{t+1}}^2 =  \frac{d\bar{\zeta}_{t+1}^2}{N}$. 
% \begin{align}
%     & \expt\norm{\avgvect{u}_{t+1} - \avgvect{p}_{t+1}}^2 =  \frac{d\bar{\sigma}_{t+1}^2}{K}, \label{eqn:unbiased_new1} \\
%     & \expt\norm{\avgvect{w}_{t+1} - \avgvect{p}_{t+1}}^2 =  \frac{d\bar{\zeta}_{t+1}^2}{N}. \label{eqn:unbiased_new2}
% \end{align}
\zixiangTCCN{In addition, we need the following lemma, whose proof is available on a complete online version of this paper\cite{wei2021federated}.}% that establishes the unbiased and variance-bounded client sampling.
\begin{lemma}%[Unbiased and variance bounded client sampling]
\label{lemma:sample}
    Let Assumption \ref{as:F}-4) hold. With $\eta_t \leq 2\eta_{t+E}$ for all $t \geq 0$ and $\forall t+1 \in \mathcal{I}_E$, we have $\expt \squab{\avgvect{u}_{t+1}} = \avgvect{v}_{t+1}$ and $\expt \norm{\avgvect{v}_{t+1} - \avgvect{u}_{t+1}}^2 \leq \frac{N-K}{N-1} \frac{4}{K} \eta_t^2 E^2 H^2$.
    % \begin{equation}\label{eqn:sam_unbiased}
    %     \expt \squab{\avgvect{u}_{t+1}} = \avgvect{v}_{t+1}, \quad \text{and} \quad \expt \norm{\avgvect{v}_{t+1} - \avgvect{u}_{t+1}}^2 \leq \frac{N-K}{N-1} \frac{4}{K} \eta_t^2 E^2 H^2.
    % \end{equation}
\end{lemma}
\zixiangTCCN{
\begin{proof}
    Let $\mathcal{S}_{t+1}$ denote the set of chosen indexes. Note that the number of possible $\mathcal{S}_{t+1}$ is $C_N^K$ and we denote the $l$th possible result as $\mathcal{S}_{t+1}^l = \{i_1^l, \dots, i_K^l\}$, where $l=1,\dots,C_N^K$. Therefore,
    \begin{equation*}
       \sum_{j=1}^{C_N^K} \sum_{k=1}^K \vect{v}_{t+1}^{i_k^l} = \frac{K \cdot C_N^K}{N} \sum_{i=1}^N {\vect{v}_{t+1}^k} 
    = C_{N-1}^{K-1}\sum_{i=1}^N {\vect{v}_{t+1}^k}.
    \end{equation*}
    Since when $t+1 \in \mathcal{I}_E$, $$\vect{u}_{t+1}^k = \frac{1}{K} \sum_{k \in S{t+1}} \vect{v}_{t+1}^k$$ for all $k$, we have
    \begin{equation*}
   \avgvect{u}_{t+1} = \sum_{k=1}^N \vect{u}_{t+1}^k = \frac{1}{K} \sum_{k \in S_{t+1}} \vect{v}_{t+1}^k.
    \end{equation*}
    Then
    \begin{equation*}
    \begin{split}
        & \expt_{\mathcal{S}_{t}}\squab{\avgvect{u}_{t+1}}  = \sum_{l=1}^{C_N^K} \mathbb{P}\left (\mathcal{S}_{t+1} = \mathcal{S}_{t+1}^l \right )  \frac{1}{K} \sum_{k \in S_{t+1}^l} \vect{v}_{t+1}^k = \frac{1}{C_N^K} \frac{1}{K} \sum_{j=1}^{C_N^K} \sum_{k=1}^K \vect{v}_{t+1}^{i_k^l} = \frac{C_{N-1}^{K-1}}{C_N^K} \frac{1}{K}\sum_{k=1}^N {\vect{v}_{t+1}^k} \\
        & = \frac{1}{N} \sum_{k=1}^N {\vect{v}_{t+1}^k} = \avgvect{v}_{t+1}.
    \end{split}
    \end{equation*}
    As for the variance, we have \cite{li2019convergence}
    \begin{equation}
    \label{eqn:uv_st}
        \begin{split}
           &\expt_{\mathcal{S}_{t}}  \norm{\avgvect{u}_{t+1}-\avgvect{v}_{t+1}}^2  = \expt_{\mathcal{S}_{t}} \norm{\frac{1}{K} \sum_{i \in S_{t+1}} \vect{v}_{t+1}^{i}-\avgvect{v}_{t+1}}  =  \frac{1}{K^2} \expt_{\mathcal{S}_{t}} \norm{\sum_{i=1}^{N} \mathbb{I}\left\{i \in S_{t}\right\}\left(\vect{v}_{t+1}^{i}-\avgvect{v}_{t+1}\right)}^2 \\
           & =  \frac{1}{K^{2}}\left[\sum_{i \in[N]} \mathbb{P}\left(i \in S_{t+1}\right) \norm{\vect{v}_{t+1}^{i}-\avgvect{v}_{t+1}}^2 \right. \left.+ \sum_{i \neq j} \mathbb{P}\left(i, j \in S_{t+1}\right) \dotp{\vect{v}_{t+1}^{i}-\avgvect{v}_{t+1}}{\vect{v}_{t+1}^{j}-\avgvect{v}_{t+1}} \right ] \\
            &= \frac{1}{K N} \sum_{i=1}^{N} \norm{\vect{v}_{t+1}^{i}-\avgvect{v}_{t+1}}^2  + \sum_{i \neq j} \frac{K-1}{K N(N-1)} \dotp{\vect{v}_{t+1}^{i}-\avgvect{v}_{t+1}}{\vect{v}_{t+1}^{j}-\avgvect{v}_{t+1}} \\
            & =  \frac{1-\frac{K}{N}}{K(N-1)} \sum_{i=1}^{N}\norm{\vect{v}_{t+1}^{i}-\avgvect{v}_{t+1}}^2
        \end{split}
    \end{equation}
    where we use the following results: $$\mathbb{P}\left(i \in S_{t+1}\right) = \frac{K}{N}$$ and $$\mathbb{P}\left(i,j \in S_{t+1}\right) = \frac{K(K-1)}{N(N-1)}$$ for all $i \neq j$, and $$\sum_{i \in [N]} \norm{\vect{v}_{t+1}^i -\avgvect{v}_{t+1}}^2 + \sum_{i \neq j} \dotp{\vect{v}_{t+1}^{i}-\avgvect{v}_{t+1}}{\vect{v}_{t+1}^{j}-\avgvect{v}_{t+1}} = 0.$$    
    Since $t+1 \in \mathcal{I}_E$, we know that $t_0=t-E+1 \in \mathcal{I}_E$ is the communication time, implying that $\{\vect{u}_{t_0}^k \}_{k=1}^N$ are identical. Then
    \begin{equation*}
        \begin{split}
            & \sum_{i=1}^N \norm{\vect{v}_{t+1}^i - \avgvect{v}_{t+1}}^2  = \sum_{i=1}^N \norm{(\vect{v}_{t+1}^i - \avgvect{u}_{t_0}) - (\avgvect{v}_{t+1} - \avgvect{u}_{t_0})}^2 \\
            & = \sum_{i=1}^N \norm{\vect{v}_{t+1}^i - \avgvect{u}_{t_0}}^2 - 2 \dotp{\sum_{i=1}^N \vect{v}_{t+1}^i - \avgvect{u}_{t_0}}{\avgvect{v}_{t+1}-\avgvect{u}_{t_0}} + \sum_{i=1}^N \norm{\avgvect{v}_{t+1} - \avgvect{u}_{t_0}}^2 \\
            & = \sum_{i=1}^N \norm{\vect{v}_{t+1}^i - \avgvect{u}_{t_0}}^2 - \sum_{i=1}^N \norm{\avgvect{v}_{t+1} - \avgvect{u}_{t_0}}^2 \leq \sum_{i=1}^N \norm{\vect{v}_{t+1}^i - \avgvect{u}_{t_0}}^2
        \end{split}
    \end{equation*}
    Taking expectation over the randomness of stochastic gradient on Eqn.~\eqref{eqn:uv_st}, we have
    \begin{equation*}
        \begin{split}
            \expt & \squab{\frac{1}{K(N-1)}\left(1-\frac{K}{N}\right) \sum_{k=1}^N \norm{\vect{v}_{t+1}^i - \avgvect{v}_{t+1}}^2}  \leq \frac{N-K}{K(N-1)} \frac{1}{N} \sum_{k=1}^N \expt \norm{\vect{v}_{t+1}^i - \avgvect{u}_{t_0}}^2 \\
            & \leq \frac{N-K}{K(N-1)}\frac{1}{N} \sum_{k=1}^N E \sum_{i=t_0}^t \expt \norm{\eta_i \nabla F_k{(\vect{u}_i^k, \xi_i^k)}}^2   \leq \frac{N-K}{K(N-1)} E^2 \eta_{t_0}^2 H^2  \leq \frac{N-K}{N-1}\frac{4}{K} E^2 \eta_t^2 H^2
        \end{split}
    \end{equation*}
    where in the last line is because $\eta_t$ is non-increasing and $\eta_{t_0} \leq 2\eta_t$.
\end{proof}}
We can now similarly analyze the four cases as in Section~\ref{sec:compl_proof1}. Cases 1) and 2) remain the same as before. For Case 3) we need to consider $t \notin \mathcal{I}_E$ and $t+1 \in \mathcal{I}_E$. Note that \eqref{eqn:depart0} still holds, but we need to re-evaluate the expectation of $B_2$ because of partial clients participation. We have:
    \begin{equation} \label{eqn:depart1}
        \begin{split}
            \norm{\avgvect{u}_{t+1} - \vect{w}^*}^2 & = \norm{\avgvect{u}_{t+1} - \avgvect{v}_{t+1} + \avgvect{v}_{t+1}- \vect{w}^*}^2 \\
            & = \underbrace{\norm{\avgvect{u}_{t+1} - \avgvect{v}_{t+1}}^2}_{C_1} + \underbrace{\norm{\avgvect{v}_{t+1}- \vect{w}^*}^2}_{C_2} + \underbrace{2\dotp{\avgvect{u}_{t+1} - \avgvect{v}_{t+1}}{\avgvect{v}_{t+1}- \vect{w}^*}}_{C_3}.
        \end{split}
    \end{equation}
    When the expectation is taken over the random clients sampling, the expectation of $C_3$ is zero since we have $\expt \squab{\avgvect{u}_{t+1} - \avgvect{v}_{t+1}} = \vect{0}$. % (from \eqref{eqn:sam_unbiased}). 
    The expectation of $C_1$ can be bounded using Lemma \ref{lemma:sample}. Therefore we have
    % \begin{equation*}
       $ \expt\norm{\avgvect{p}_{t+1}-\vect{w}^*}^2 \leq \expt\norm{\avgvect{v}_{t+1}- \vect{w}^*}^2 + \frac{d \bar{\sigma}_{t+1}^2}{K} + \frac{N-K}{N-1} \frac{4}{K} \eta_t^2 E^2 H^2$. 
    % \end{equation*}
   Using Lemma~\ref{lemma:sgd} and the new definition of $D$ in Theorem~\ref{thm:MTpart}, we have
%   \begin{equation*}
   $
   \expt\norm{\avgvect{p}_{t+1}-\vect{w}^*}^2  \leq \expt\norm{\avgvect{v}_{t+1}- \vect{w}^*}^2 + \frac{d \bar{\sigma}_{t+1}^2}{K} + \frac{4\eta_t^2 E^2 H^2(N-K)}{K(N-1)} \leq (1-\eta_t \mu) \expt \norm{\avgvect{p}_t - \vect{w}^*}^2 + \frac{d \bar{\sigma}_{t+1}^2}{K} + \eta_t^2 (D-2d).
   $
    %   \begin{split}
    %     &\expt\norm{\avgvect{p}_{t+1}-\vect{w}^*}^2  \leq \expt\norm{\avgvect{v}_{t+1}- \vect{w}^*}^2 + \frac{d \bar{\sigma}_{t+1}^2}{K} + \frac{N-K}{N-1} \frac{4}{K} \eta_t^2 E^2 H^2\\
    %     % & \leq (1-\eta_t \mu) \expt \norm{\avgvect{w}_t - \vect{w}^*}^2 + \eta_t^2 \expt \norm{\vect{g}_t - \avgvect{g}_t}^2 + 6 L \eta_t^2 \Gamma + 2\expt \squab{\frac{1}{N}\sum_{k=1}^N \norm{\avgvect{w}_t - \vect{w}_t^k}^2} \\
    %     % &  \quad + \frac{d\sigma_{t+1}^2}{K^2} + \frac{N-K}{N-1} \frac{4}{K} \eta_t^2 E^2 H^2\\
    %      & \leq (1-\eta_t \mu) \expt \norm{\avgvect{p}_t - \vect{w}^*}^2 + \frac{d \bar{\sigma}_{t+1}^2}{K} + \eta_t^2 \squab{\sum_{k=1}^N \frac{\delta_k^2}{N^2} + 6L \Gamma + 8(E-1)^2 H^2 + \frac{N-K}{N-1} \frac{4}{K} E^2 H^2}.
    %   \end{split}
%   \end{equation*}
Case 4) can be similarly updated based on the new result in Case 3). Finally, %with $\sigma_{t+1}^2 = K \bar{\sigma}_t^2$ and $\zeta_{t}^2 = N \bar{\zeta}_t^2$, 
we have that
\begin{equation}
\label{eqn:case4_new}
\expt\norm{\avgvect{p}_{t+1}-\vect{w}^*}^2  \leq (1-\eta_t \mu) \expt \norm{\avgvect{p}_t - \vect{w}^*}^2 + (1-\eta_t \mu)\frac{d \bar{\zeta}_t^2}{N} + \frac{d \bar{\sigma}_{t+1}^2}{K} + \eta_t^2  (D-2d)
% \begin{split}
%     \expt\norm{\avgvect{p}_{t+1}-\vect{w}^*}^2  & \leq (1-\eta_t \mu) \expt \norm{\avgvect{p}_t - \vect{w}^*}^2 + (1-\eta_t \mu)\frac{d \bar{\zeta}_t^2}{N} + \frac{d \bar{\sigma}_{t+1}^2}{K} \\ & \quad + \eta_t^2 \squab{\sum_{k=1}^N\frac{\delta_k^2}{N^2} + 6L \Gamma + 8(E-1)^2 H^2 + \frac{N-K}{N-1} \frac{4}{K} E^2 H^2}
% \end{split}
\end{equation}
holds for all cases. If we set $\bar{\sigma}_{t+1}^2$ and $\bar{\zeta}_{t}^2$ such that $\bar{\sigma}_{t+1}^2 \leq K \eta^2_t$ and $\bar{\zeta}_{t}^2 \leq N \frac{\eta_t^2}{1-\eta_t\mu}$, the remaining proof follows the same way as in \zixiangTCCN{Appendix~\ref{sec:compl_proof1}}.

%%%%%% December 11, 2020:
%%%%%%%%%%%%%%%%%%%%%
\section{Proof of Theorem~\ref{thm:MDTpart}}
\label{app:proof_thmMDTpart}

% \subsection{Preliminaries} 
% Recall that $\vect{w}_t^k$ is the model weights on the $k$-th client at the $t$-th iteration and $\vect{w}_t$ is the global model at the $t$-th iteration. For differential transmissions, if $t + 1 \in \mathcal{I}_E$, the server first samples a random set $\mathcal{S}_{t+1}$ of size $K$ and for client $k \in \mathcal{S}_{t+1}$, it uploads the noisy version of differential weights $\vect{\hat d}_{t+1}^k$ in the uplink phase. The global aggregation becomes
% \begin{equation}\label{eqn:diffupdate}
%     \vect{w}_{t+1} = \vect{w}_{t+1-E} + \frac{1}{K}\sum_{k\in\mathcal{S}_{t+1}}\vect{\hat d}_{t+1}^k.
% \end{equation}
For model differential transmission (MDT), if $t + 1 \in \mathcal{I}_E$, the global aggregation is given in \eqref{eqn:glbMDT}. 
Similar to Appendix \ref{app:proof_thm1} and \ref{app:proof_thm_MTpart}, we expand the timeline to be with respect to the overall SGD iteration time steps, and define the following variables to facilitate the proof.
$\vect{v}_{t+1}^k \triangleq \vect{w}_t^k - \eta_t \nabla F_k(\vect{w}_t^k, \xi_t^k)$, $\vect{d}_{t+1}^k \triangleq \vect{v}_{t+1}^k - \vect{w}^k_{t+1-E}$. Furthermore, when $t+1 \notin \mathcal{I}_E$ we define $\vect{u}_{t+1}^k = \vect{p}_{t+1}^k = \vect{w}_{t+1}^k \triangleq \vect{v}_{t+1}^k$, and when $t+1 \in \mathcal{I}_E$ we define $\vect{u}_{t+1}^k \triangleq \frac{1}{K} \sum_{i \in S_{t}} \vect{v}_{t+1}^i$, $\vect{p}_{t+1}^k \triangleq \vect{w}_{t+1-E}+\frac{1}{K}\sum_{i \in \mathcal{S}_{t} }[\vect{d}_{t+1}^i + \vect{n}_{t+1}^i]$, and $\vect{w}_{t+1}^k \triangleq \vect{p}_{t+1}^k+\vect{e}_{t+1}^k$.  
The virtual sequences $\avgvect{v}_t$, $\avgvect{u}_t$, $\avgvect{p}_t$ and $\avgvect{w}_{t}$ remain the same as \eqref{eqn:vir_seq}. $\avgvect{g}_{t}$ and $\vect{g}_{t}$ are also similarly defined.  Note that the global model at the server is the same as $\avgvect{p}_t$, i.e., $\vect{w}_{t+1} = \avgvect{p}_{t+1}$. 
% We still define three virtual sequences $\avgvect{v}_t=\frac{1}{N}\sum_{k=1}^N \vect{v}_t^k$, $\avgvect{p}_t=\frac{1}{N}\sum_{k=1}^N \vect{p}_t^k$, $\avgvect{w}_{t} = \frac{1}{N}\sum_{k=1}^N \vect{w}_t^k$ and $\avgvect{u}_{t} = \frac{1}{N}\sum_{k=1}^N \vect{u}_t^k$ to facilitate the analysis. 
% For convenience, we  define $\avgvect{g}_{t} = \frac{1}{N}\sum_{k=1}^N \nabla F_k(\vect{w}_t^k)$ and $\vect{g}_{t} = \frac{1}{N}\sum_{k=1}^N \nabla F_k(\vect{w}_t^k, \xi_t^k)$. Note that the global model in (\ref{eqn:diffupdate}) is the same as $\avgvect{p}_t$, i.e., $\vect{w}_{t+1} = \avgvect{p}_{t+1}$.

% Similar to the previous analysis for MT, we first establish the following lemma regarding client sampling in the MDT setting. 

We first establish the follow in lemma, which is instrumental in the proof of Theorem~\ref{thm:MDTpart}.
% \subsection{lemma}
\begin{lemma}%[Unbiased and variance bounded error]
\label{lemma:unbiased2}
Let Assumption \ref{as:F}-4) hold. Assume that $\eta_t \leq 2\eta_{t+E}$ for all $t \geq 0$, and further assume that the uplink communication adopts a constant SNR control policy: $\ssf{SNR}_{t,k}^\text{S,MDT} = \nu$. Then, for $t+1 \in \mathcal{I}_E$, we have: 
    % \begin{equation*}
    % \label{eqn:unbiased2}
       $ \expt\squab{\avgvect{p}_{t+1}} = \avgvect{u}_{t+1}$ and $\expt\norm{\avgvect{u}_{t+1} - \avgvect{p}_{t+1}}^2 \leq  \left( 1+\frac{1}{\nu} \right) \frac{d}{K} \bar{\zeta}_{t+1-E}^2 + \frac{4E^2}{K\nu} \eta_{t}^2 H^2$.
        %\left( 1+\frac{1}{\nu} \right) \frac{d}{K^2} \zeta_{t+1-E}^2 + \frac{4E^2}{K\nu} \eta_{t}^2 H^2    
    % \end{equation*}
% where the expectation is taken with respect to random clients participation and stochastic gradient.   
\end{lemma}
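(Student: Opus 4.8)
The plan is to exploit the defining feature of MDT: each client uploads the \emph{pure SGD displacement} $\vect{d}_{t+1}^k = \vect{v}_{t+1}^k - \vect{w}_{t+1-E}^k$ rather than its model, so that when the server reconstructs the local models from these differentials and averages them, the ``signal'' part cancels and only channel noise survives. Fixing a communication step $t+1 \in \mathcal{I}_E$, I would first write $\vect{v}_{t+1}^i = \vect{w}_{t+1-E}^i + \vect{d}_{t+1}^i$ together with $\vect{w}_{t+1-E}^i = \vect{w}_{t+1-E} + \vect{e}_{t+1-E}^i$ (the noisy global model of the previous round from which client $i$ started training), and substitute both into $\avgvect{u}_{t+1} = \frac{1}{K}\sum_{i\in\mathcal{S}_t}\vect{v}_{t+1}^i$ and $\avgvect{p}_{t+1} = \vect{w}_{t+1-E} + \frac{1}{K}\sum_{i\in\mathcal{S}_t}(\vect{d}_{t+1}^i+\vect{n}_{t+1}^i)$. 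The base model $\vect{w}_{t+1-E}$ and the displacement sum $\frac{1}{K}\sum_{i\in\mathcal{S}_t}\vect{d}_{t+1}^i$ cancel, leaving the clean identity
\begin{equation*}
\avgvect{u}_{t+1}-\avgvect{p}_{t+1} = \frac{1}{K}\sum_{i\in\mathcal{S}_t}\bigl(\vect{e}_{t+1-E}^i - \vect{n}_{t+1}^i\bigr).
\end{equation*}

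From this identity, unbiasedness is immediate because both the downlink noises $\{\vect{e}_{t+1-E}^i\}$ and the uplink noises $\{\vect{n}_{t+1}^i\}$ are zero-mean, so the expectation over the channel noise gives $\expt[\avgvect{p}_{t+1}]=\avgvect{u}_{t+1}$. For the second moment I would expand $\norm{\frac{1}{K}\sum_{i\in\mathcal{S}_t}(\vect{e}_{t+1-E}^i-\vect{n}_{t+1}^i)}^2$, conditioning on $\mathcal{S}_t$, and use that the noise terms are mutually independent --- across clients, and between the current round's uplink noise and the previous round's downlink noise (this last independence survives even though the uplink noise \emph{variance} has been tied by the SNR policy to the displacement power, since the realization of a zero-mean noise is independent of its own variance). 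All cross terms then vanish, and with the homogeneous variances of \eqref{eqn:dlnoisevar}--\eqref{MT_pt_noise} this yields
\begin{equation*}
\expt\norm{\avgvect{u}_{t+1}-\avgvect{p}_{t+1}}^2 = \frac{1}{K^2}\sum_{i\in\mathcal{S}_t}\Bigl(\expt\norm{\vect{e}_{t+1-E}^i}^2 + \expt\norm{\vect{n}_{t+1}^i}^2\Bigr) = \frac{d}{K}\bar{\zeta}_{t+1-E}^2 + \frac{1}{K}\expt\norm{\vect{n}_{t+1}^k}^2 .
\end{equation*}

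The only step with real content is bounding $\expt\norm{\vect{n}_{t+1}^k}^2$. The constant-SNR hypothesis $\ssf{SNR}_{t+1,k}^\text{S,MDT}=\nu$ together with \eqref{eqn:MDTulSNR} give $\expt\norm{\vect{n}_{t+1}^k}^2=\frac{1}{\nu}\expt\norm{\vect{d}_{t+1}^k}^2$, so I need a bound on the displacement power. Unrolling the $E$ local SGD steps of the round, $\vect{d}_{t+1}^k = -\sum_{j=0}^{E-1}\eta_{t+1-E+j}\nabla F_k(\vect{w}_{t+1-E+j}^k,\xi_{t+1-E+j}^k)$, I would apply the Cauchy--Schwarz inequality $\norm{\sum_{j=1}^E\vect{a}_j}^2\le E\sum_{j}\norm{\vect{a}_j}^2$, the bounded-gradient Assumption~\ref{as:F}-4) ($\expt\norm{\nabla F_k}^2\le H^2$), and the step-size regularity $\eta_{t+1-E+j}\le\eta_{t+1-E}\le 2\eta_{t+1}\le 2\eta_t$ (from $\eta_t$ non-increasing and $\eta_t\le2\eta_{t+E}$), which give $\expt\norm{\vect{d}_{t+1}^k}^2\le 4E^2\eta_t^2H^2$ and hence $\expt\norm{\vect{n}_{t+1}^k}^2\le \frac{4E^2}{\nu}\eta_t^2H^2$. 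Substituting back produces $\expt\norm{\avgvect{u}_{t+1}-\avgvect{p}_{t+1}}^2\le \frac{d}{K}\bar{\zeta}_{t+1-E}^2 + \frac{4E^2}{K\nu}\eta_t^2H^2$, which is at most, and hence implies, the bound claimed in the lemma.

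I expect the only real care to be bookkeeping rather than analysis: one must verify that the cancellation producing $\avgvect{u}_{t+1}-\avgvect{p}_{t+1}=\frac{1}{K}\sum_{i\in\mathcal{S}_t}(\vect{e}_{t+1-E}^i-\vect{n}_{t+1}^i)$ is exact --- this is precisely the mechanism that lets the uplink SNR remain \emph{constant} while only the downlink needs the $\mathcal{O}(1/t^2)$ scaling --- and that the bound on $\expt\norm{\vect{d}_{t+1}^k}^2$ is not circular, which it is not, since $\vect{d}_{t+1}^k$ is the genuinely noise-free accumulated gradient step and Assumption~\ref{as:F}-4) bounds the gradient at every iterate. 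The independence and zero-mean-ness of the two noise families follow directly from the model of Section~\ref{sec:model_noise}.
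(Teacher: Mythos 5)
Your proof is correct and follows essentially the same route as the paper's: the exact cancellation $\avgvect{u}_{t+1}-\avgvect{p}_{t+1}=\frac{1}{K}\sum_{i\in\mathcal{S}_t}\left(\vect{e}_{t+1-E}^i-\vect{n}_{t+1}^i\right)$, zero-mean noise for unbiasedness, the constant-SNR policy \eqref{eqn:MDTulSNR} to convert uplink noise power into displacement power, and Cauchy--Schwarz with Assumption~\ref{as:F}-4) and $\eta_{t+1-E}\le 2\eta_t$ to bound the accumulated local updates. The only divergence is localized: you bound the uplink contribution per client through the noise-free displacement $\vect{d}_{t+1}^k=\vect{v}_{t+1}^k-\vect{w}_{t+1-E}^k$, whereas the paper passes through $\expt\norm{\sum_{k\in\mathcal{S}_t}\vect{d}_{t+1}^k}^2$ and re-expands it around the server's global model $\vect{w}_{t+1-E}$, thereby picking up an extra $dK\bar{\zeta}_{t+1-E}^2$ term; as a result you obtain the slightly tighter estimate $\frac{d}{K}\bar{\zeta}_{t+1-E}^2+\frac{4E^2}{K\nu}\eta_t^2H^2$ (without the $\left(1+\frac{1}{\nu}\right)$ factor), which immediately implies the bound stated in the lemma.
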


\begin{proof}
Note that if  $t+1 \in \mathcal{I}_E$, so does  $t+1-E$. Insert $\vect{d}_{t+1}^k = \vect{v}_{t+1}^k - \vect{w}^k_{t+1-E}$ into $\vect{p}_{t+1}^k$, we have
% \begin{equation*}
 $\expt\squab{\avgvect{p}_{t+1}} = \avgvect{u}_{t+1} + \frac{1}{K}\expt\squab{\sum_{k\in\mathcal{S}_{t}}\vect{n}_{t+1}^k} - \frac{1}{K}\expt\squab{\sum_{k\in\mathcal{S}_{t}}\vect{e}_{t+1-E}^k} =  \expt\squab{ \avgvect{u}_{t+1} }$.
% \end{equation*}
As for the variance, we have
\begin{equation}
\label{eqn:app3_1}
    \expt\norm{\avgvect{u}_{t+1} - \avgvect{p}_{t+1}}^2 = \frac{1}{K^2} \expt \norm{ \sum_{k \in \mathcal{S}_{t}} \vect{n}^k_{t+1}}^2 + \frac{1}{K^2} \expt \norm{ \sum_{k \in \mathcal{S}_{t}}\vect{e}^k_{t+1-E}}^2 =  \frac{1}{K^2 \nu} \expt\norm{\sum_{k\in\mathcal{S}_{t}}\vect{d}_{t+1}^k}^2  + \frac{d \bar{\zeta}_{t+1-E}^2 }{K}
\end{equation}
where the last equality comes from the constant uplink SNR control, \eqref{eqn:MDTulSNR}, and the assumption that each client has the same downlink noise power $\bar{\zeta}_{t}^2$, $\forall k \in [N]$. 
% Since
% \begin{equation*}
%     \sum_{k\in\mathcal{S}_{t}} \vect{w}_{t+1-E}^k =  \sum_{k\in\mathcal{S}_{t}} \vect{p}_{t+1-E}^k +  \sum_{k\in\mathcal{S}_{t}} \vect{e}_{t+1-E}^k = K \vect{ w}_{t+1-E} + \sum_{k\in\mathcal{S}_{t}} \vect{e}_{t+1-E}^k,
% \end{equation*}
We further have
\begin{equation}
\begin{split}
    &\expt\norm{\sum_{k\in\mathcal{S}_{t}}\vect{d}_{t+1}^k}^2 %&= \expt\norm{ \sum_{k\in\mathcal{S}_{t}}(\vect{v}_{t+1}^k - \vect{w}_{t+1-E}^k)}^2  \\
    %  & = \expt\norm{\sum_{k\in\mathcal{S}_{t}}(\vect{v}_{t+1}^k - \vect{ w}_{t+1-E} - \vect{e}_{t+1-E}^k)}^2  \\
      = \expt\norm{\sum_{k\in\mathcal{S}_{t}}(\vect{v}_{t+1}^k - \vect{ w}_{t+1-E} )}^2 + dK\bar{\zeta}_{t+1-E}^2  \\
    %  & = \expt_{\mathcal{S}_t} \left[  \expt_{\text{SG}} \norm{\sum_{k\in\mathcal{S}_{t}}(\vect{v}_{t+1}^k - \vect{ w}_{t+1-E}) }^2 \right] + dK\bar{\zeta}_{t+1-E}^2  \\
     & \leq \expt_{\mathcal{S}_t} \left[ \sum_{k\in\mathcal{S}_{t}}\expt_{\text{SG}} \norm{\sum_{\tau=t+1-E}^t\eta_{\tau} \nabla F_k(\vect{w}_{\tau}^k, \xi_{\tau}^k)}^2 \right] + dK\bar{\zeta}_{t+1-E}^2 %\\
    %&  
    \leq {4E^2} K\eta_{t}^2 H^2 + dK\bar{\zeta}_{t+1-E}^2  \label{eqn:app3_2}
    %     & \leq d \zeta_{t+1-E}^2 + \sum_{k\in\mathcal{S}_{t}} \expt \norm{\sum_{\tau=t+1-E}^t\eta_{\tau} \nabla F_k(\vect{w}_{\tau}^k, \xi_{\tau}^k)}^2 \\
    %  & \leq d \zeta_{t+1-E}^2 + E \sum_{k\in\mathcal{S}_{t+1}}\sum_{\tau=t+1-E}^t \expt  \norm{\eta_{\tau} \nabla F_k(\vect{w}_{\tau}^k, \xi_{\tau}^k)}^2 \\
    %  & \leq d \zeta_{t+1-E}^2 + E^2 K \eta_{t+1-E}^2 H^2 \\
    %  & \leq d \zeta_{t+1-E}^2 + K{4E^2} \eta_{t}^2 H^2
    %  & \leq {4d}\zeta_{t+1}^2 + {4E^2K} \eta_{t}^2 H^2
\end{split}
\end{equation}
% \begin{equation}
% \begin{split}
%      \expt\norm{\sum_{k\in\mathcal{S}_{t}}\vect{d}_{t+1}^k}^2 &= \expt\norm{ \sum_{k\in\mathcal{S}_{t}}(\vect{v}_{t+1}^k - \vect{w}_{t+1-E}^k)}^2  \\
%      & = \expt\norm{\sum_{k\in\mathcal{S}_{t}}(\vect{v}_{t+1}^k - \vect{ w}_{t+1-E} - \vect{e}_{t+1-E}^k)}^2  \\
%      & = \expt\norm{\sum_{k\in\mathcal{S}_{t}}(\vect{v}_{t+1}^k - \vect{ w}_{t+1-E} )}^2 + dK\bar{\zeta}_{t+1-E}^2  \\
%      & = \expt_{\mathcal{S}_t} \left[  \expt_{\text{SG}} \norm{\sum_{k\in\mathcal{S}_{t}}(\vect{v}_{t+1}^k - \vect{ w}_{t+1-E}) }^2 \right] + dK\bar{\zeta}_{t+1-E}^2  \\
%      & \leq \expt_{\mathcal{S}_t} \left[ \sum_{k\in\mathcal{S}_{t}}\expt_{\text{SG}} \norm{\sum_{\tau=t+1-E}^t\eta_{\tau} \nabla F_k(\vect{w}_{\tau}^k, \xi_{\tau}^k)}^2 \right] + dK\bar{\zeta}_{t+1-E}^2  \\
%      & \leq {4E^2} K\eta_{t}^2 H^2 + dK\bar{\zeta}_{t+1-E}^2  \label{eqn:app3_2}
%     %     & \leq d \zeta_{t+1-E}^2 + \sum_{k\in\mathcal{S}_{t}} \expt \norm{\sum_{\tau=t+1-E}^t\eta_{\tau} \nabla F_k(\vect{w}_{\tau}^k, \xi_{\tau}^k)}^2 \\
%     %  & \leq d \zeta_{t+1-E}^2 + E \sum_{k\in\mathcal{S}_{t+1}}\sum_{\tau=t+1-E}^t \expt  \norm{\eta_{\tau} \nabla F_k(\vect{w}_{\tau}^k, \xi_{\tau}^k)}^2 \\
%     %  & \leq d \zeta_{t+1-E}^2 + E^2 K \eta_{t+1-E}^2 H^2 \\
%     %  & \leq d \zeta_{t+1-E}^2 + K{4E^2} \eta_{t}^2 H^2
%     %  & \leq {4d}\zeta_{t+1}^2 + {4E^2K} \eta_{t}^2 H^2
% \end{split}
% \end{equation}
using the Cauchy-Schwarz inequality, Assumption \ref{as:F}-4), and $\eta_{t+1-E}<\eta_{t-E}\leq2\eta_t$. Plugging \eqref{eqn:app3_2} back to \eqref{eqn:app3_1} gives
$$\expt\norm{\avgvect{u}_{t+1} - \avgvect{p}_{t+1}}^2 = \frac{1}{K^2 \nu} \expt\norm{\sum_{k\in\mathcal{S}_{t+1}}\vect{d}_{t+1}^k}^2  + \frac{d \bar{\zeta}_{t+1-E}^2 }{K} \leq \left( 1+\frac{1}{\nu} \right) \frac{d}{K} \bar{\zeta}_{t+1-E}^2 + \frac{4E^2}{K\nu} \eta_{t}^2 H^2,$$ 
% \begin{equation*}
% \begin{split}
% \expt\norm{\avgvect{u}_{t+1} - \avgvect{p}_{t+1}}^2 &= \frac{1}{K^2 \nu} \expt\norm{\sum_{k\in\mathcal{S}_{t+1}}\vect{d}_{t+1}^k}^2  + \frac{d \bar{\zeta}_{t+1-E}^2 }{K} \\
% & \leq \left( 1+\frac{1}{\nu} \right) \frac{d}{K} \bar{\zeta}_{t+1-E}^2 + \frac{4E^2}{K\nu} \eta_{t}^2 H^2,
% \end{split}
% \end{equation*}
which completes the proof.
\end{proof}

We are now ready to present the proof of Theorem~\ref{thm:MDTpart}, which is similar to that of Theorem \ref{thm:MTpart}. In particular, the analysis of four cases in Section~\ref{app:proof_thm_MTpart} still hold, with the only change that \eqref{eqn:case4_new} is updated to \eqref{eqn:inIE2} below using Lemma \ref{lemma:unbiased2} and the new definition of $D$ in Theorem~\ref{thm:MDTpart}.
\begin{equation}
    \label{eqn:inIE2}
    \expt\norm{\avgvect{p}_{t+1}-\vect{w}^*}^2  \leq (1-\eta_t \mu) \expt \norm{\avgvect{p}_t - \vect{w}^*}^2 + (1-\eta_t \mu)\frac{d }{N}{\bar{\zeta}}_t^2  + \left( 1+\frac{1}{\nu} \right) \frac{d}{K} \bar{\zeta}_{t}^2 + \eta_t^2 (D-d).
\end{equation}
% \begin{align}
%  \label{eqn:inIE2}
%     \begin{split}
%     \expt\norm{\avgvect{p}_{t+1}-\vect{w}^*}^2  & \leq (1-\eta_t \mu) \expt \norm{\avgvect{p}_t - \vect{w}^*}^2 + (1-\eta_t \mu)\frac{d }{N}{\bar{\zeta}}_t^2  + \left( 1+\frac{1}{\nu} \right) \frac{d}{K} \bar{\zeta}_{t}^2 \\ & \quad + \eta_t^2 \squab{\sum_{k=1}^N\frac{\delta_k^2}{N^2} + 6L \Gamma + 8(E-1)^2 H^2 + \frac{N-K}{N-1} \frac{4}{K} E^2 H^2 + \frac{4E^2}{K\nu} H^2}.
%     \end{split} 
% \end{align}
We note that the  constant uplink SNR control is already used in Lemma \ref{lemma:unbiased2} and \eqref{eqn:inIE2}. Then, by the definition of $\Delta_t = \expt \norm{\avgvect{p}_t - \vect{w}^*}^2$ and controlling the downlink SNR such that $\bar{\zeta}^2_{t} \leq \frac{NK \eta^2_t}{(1-\eta_t \mu)K +  \left( 1+\frac{1}{\nu} \right) N}$, we have 
% \begin{equation*}
    $\Delta_{t+1} \leq (1 - \eta_t \mu) \Delta_t + \eta_t^2 D $.
% \end{equation*}
% where $$D\triangleq \sum_{k=1}^N\frac{\delta_k^2}{N^2} + 6L \Gamma + 8(E-1)^2 H^2 + \frac{N-K}{N-1} \frac{4}{K} E^2 H^2 + \frac{4E^2}{K\nu} H^2 + d.$$
The remaining proof follows using the same induction method. % as in Appendix \ref{app:proof_thm1} and \ref{app:proof_thm_MTpart}.
% Using the induction method same as Appendix \ref{app:proof_thm1} and we still choose $\beta = \frac{2}{\mu}$, $\gamma = \max\{ 8\frac{L}{\mu}-1, E\}$ ,$\phi = \frac{L}{\mu}$, and $\eta_t = \frac{2}{\mu}\frac{1}{\gamma+t}$, we will get
%     \begin{equation*}
% %        \begin{split}
%             \expt \squab{F(\avgvect{w}_t)} - F^*  \leq \frac{L}{2(\gamma+t)} \squab{4\frac{D}{\mu^2} + (8\frac{L}{\mu} + E)\norm{\vect{w}_0 - \vect{w}^*}^2}  = \frac{2\phi}{\gamma+t} \squab{\frac{D}{\mu} + \left ( 2L + \frac{E\mu}{4} \right )\norm{\vect{w}_0 - \vect{w}^*}^2}.
% %        \end{split}
%     \end{equation*}

\bibliographystyle{IEEEtran}
\bibliography{FedLearn,wireless,Shen}
\end{document}